
\documentclass[10pt,journal,compsoc]{IEEEtran}
\hyphenation{op-tical net-works semi-conduc-tor}
\usepackage[utf8]{inputenc}
\usepackage[utf8]{inputenc}
\usepackage[T1]{fontenc}
\def\ps@headings{%
\def\@oddhead{\mbox{}\scriptsize\rightmark \hfil \thepage}%
\def\@evenhead{\scriptsize\thepage \hfil \leftmark\mbox{}}%
\def\@oddfoot{}%
\def\@evenfoot{}}
\makeatother
\pagestyle{empty}
\usepackage{commath,amsmath}
\usepackage{amssymb}
\usepackage{amsthm}
\usepackage{dsfont}
\allowdisplaybreaks
\usepackage{algorithm}
\usepackage[noend]{algpseudocode}
\usepackage{float}
\usepackage{graphicx}
\usepackage{mathtools}
\usepackage[thinc]{esdiff}
\usepackage{epsf}
\usepackage{epsfig, amsmath, amssymb, graphicx}
\usepackage{subcaption}
\usepackage{multirow}
\usepackage[english]{babel}
\DeclareGraphicsExtensions{.pdf, .bmp, .eps, .ps, .jpeg, .png}

\newtheorem{theorem}{Theorem}
\newtheorem{lemma}{Lemma}
\newtheorem{corollary}[theorem]{Corollary}

\newenvironment{definition}[1][Definition]{\begin{trivlist}
\item[\hskip \labelsep {\bfseries #1}]}{\end{trivlist}}

\long\def\comment#1{}

\makeatletter
\def\BState{\State\hskip-\ALG@thistlm}
\makeatother

\comment{
\documentclass[10pt, conference]{IEEEtran}
\makeatletter
\def\ps@headings{%
\def\@oddhead{\mbox{}\scriptsize\rightmark \hfil \thepage}%
\def\@evenhead{\scriptsize\thepage \hfil \leftmark\mbox{}}%
\def\@oddfoot{}%
\def\@evenfoot{}}
\makeatother
\pagestyle{empty}
\usepackage{amsmath}
\usepackage{algorithmicx, algpseudocode}
\usepackage{algorithm}
\usepackage{graphicx}
\usepackage{times}
\usepackage{epsf}
\usepackage{epsfig, amsmath, amssymb, graphicx}
\usepackage{subcaption}
\usepackage{multirow}
\usepackage[english]{babel}
\DeclareGraphicsExtensions{.pdf, .bmp, .eps, .ps, .jpeg, .png}

\newtheorem{theorem}{Theorem}
\newtheorem{lemma}{Lemma}
\newtheorem{corollary}[theorem]{Corollary}

\long\def\comment#1{}
}

\begin{document}

\title{Optimizing Fund Allocation for Game-based Verifiable Computation Outsourcing}

\author{Pinglan~Liu,~\IEEEmembership{Student Member,~IEEE,}
        Xiaojuan~Ma,~
        and~Wensheng~Zhang,~\IEEEmembership{Member,~IEEE}
\thanks{
P. Liu, X. Ma, and W. Zhang are with 
the Computer Science Department, Iowa State University, Ames, IA, 50011.
E-mail: \{pinglan,xiaojuan,wzhang\}@iastate.edu}
}

\IEEEtitleabstractindextext{%
\begin{abstract}
This paper considers the setting where a cloud server services a static set or a dynamic sequence of tasks submitted by multiple clients. Every client wishes to assure honest execution of tasks by additionally employing a trusted third party (TTP) to re-compute the tasks with a certain probability. The cloud server makes a deposit for each task it takes, each client allocates a budget (including the wage for the server and the cost for possibly hiring TTP) for each task submitted, and every party has its limited fund for either deposits or task budgets. We study how to allocate the funds optimally to achieve the three-fold goals: a rational cloud server honestly computes each task; the server's wage is maximized; the overall delay for task verification is minimized. We apply game theory to formulate the optimization problems, and develop the optimal or heuristic solutions for three application scenarios. For each of the solutions, we analyze it through either rigorous proofs or extensive simulations. To the best of our knowledge, this is the first work on optimizing fund allocation for verifiable outsourcing of computation in the setting of one server and multiple clients, based on game theory.

%
\end{abstract}

\begin{IEEEkeywords}
Verifiable Computation Outsourcing, Game theory, Cloud Computing, Fund Allocation, Optimization.
\end{IEEEkeywords}}

\maketitle

%
\IEEEpeerreviewmaketitle

\comment{
\IEEEraisesectionheading{\section{Introduction}\label{sec:introduction}}
\IEEEPARstart{T}{he} popularity of cloud services and decentralized platforms promote the development and prosperity of computation outsourcing. Clients export heavy computational task to executors who have available and intensive computational resources to handle it. The executors could be cloud service providers or nodes in the decentralized network. Executors aim to efficiently utilize the available computational resources and achieve an optimal benefits from the computation. Clients desire to outsource tasks without paying more than the predefined budget and get all correct computation results as less delay as possible.

In order to guarantee the correctness of computation result, we propose an outsourcing mechanism that client requires executor to pay a deposit which will be returned back only if there is no dishonest behavior of executor caught by the client. The executor computes and reveals result to client. The client raises a trusted third party (TTP) to check the correctness of the result received from the executor with a low and adaptable frequency. Client will pay executor a wage and the deposit will be returned as long as the result is accepted by the client. The schema ensures that economically rational executors prefer to act honestly compared with returning an incorrect result and taking the risk of raising the TTP as well as losing the deposit.

The goal of client is to minimize the total computation delay and executor aims to maximize the wages. That is, client needs to find out an optimal strategy to allocate its total budget to different tasks and pay hired executors for the correct computation in order to minimize total computation delay. Executors need to find out an optimal strategy to allocate its total deposits to different tasks and achieve maximal total wages received from clients. Either client or executor has infinite ways to allocate the deposits to tasks as long as the budget and deposit assigned to task satisfies some predefined conditions. Client assigns a number of budget to each task. The budget is divided into two parts, one part is paid to executor for the computation, the other part is used to hire a TTP with a probability. We show that the wage paid to executor is decided by the deposit the executor promises to pay and the budget assigned to task. Generally, the more deposit paid on a task, the lower probability that client raises a TTP and the lower cost spent on TTP which leads to higher wage received by the executor. The more budget assigned to a task, the lower probability of raising a TTP which results in a lower latency.

In order to help executor balance the deposit and wage and help client balance the budget and latency, we propose a game theoretical approach for client and executor to find optimal allocation strategy independently and then dynamically update its strategy based on the other's updated strategy. We analyze the game-based allocation algorithm and present that the scheme leads to optimal results compared with other straw-man allocation methods. Furthermore, we consider two types of task configurations to evaluate our scheme, the first setting is having all tasks predefined and known to client and executor. The other type has tasks coming dynamically and tasks are computed parallel but release in either sequential model or parallel model. The result indicates that release sequentially results in higher latency and higher wage while release tasks in parallel leads to lower latency and lower wage.
}

\section{Introduction}

\IEEEPARstart{T}{he} popularity of cloud services and decentralized platforms 
promote the development and prosperity of computation outsourcing. 
Clients export heavy computational tasks (such as data mining and machine learning) 
to executors who have available and intensive computational resources to handle them. 
The executors could be cloud service providers or nodes in the decentralized network. 
Executors aim to efficiently utilize the available computational resources 
and achieve optimal benefits from the computation. 
Clients desire to outsource tasks without paying more than 
a certain predefined budget and 
get correct computation results with as short delay as possible.  

In this paper, we focus on computation outsourcing in the cloud computing environment.
Specifically, we consider a system composed of 
a cloud service provider (abbreviated as {\em server} hereafter) and 
multiple cloud clients (abbreviated as {\em clients} hereafter). 
Each client submits to the server 
either a static set or a dynamic sequence of tasks.
The former corresponds to the application scenarios where the client processes tasks in batch
while the latter the scenarios where tasks are generated and processed in real time. 

To assure that the server returns correct computation results,
certain verifiable outsourced computation mechanisms should be in place. 
A large variety of schemes have been proposed in the literature to verify outsourced computation.
They can rely on 
cryptography \cite{gennaro2010non, parno2012delegate, catalano2013practical, parno2013pinocchio, abadi2016vd, costello2015geppetto, fiore2016hash, pepper, ginger, goldwasser2015delegating, ben2016interactive, wahby2017full}, 
trusted hardware \cite{brandenburger2018blockchain, xiao2019enforcing, cheng2019ekiden, tramer2018slalom},
redundant system (that includes at least one trusted server) \cite{canetti2011practical,avizheh2019verifiable}, 
game theory
\cite{nix2012contractual, optimal, belenkiy2008incentivizing, incentive, kupccu2015incentivized, dong2017betrayal, liu2018new, liu2020game},
or combinations of the above.
As briefly surveyed in Section~\ref{sec::related-works},
the approaches purely relying on cryptography or trusted hardware 
usually have high costs and/or low performance/scalability,
while the game-based approaches have gained more popularity
for their lower costs due to 
the practical assumption of economically-rational participants.
Hence, we also apply game theory in our study.

We adopt the basic model that 
each client outsources her tasks to only one server (without redundancy)
but with probabilistic auditing. 
Additionally, for each task,
the server is required to make a deposit,
which can be taken by the client when the server is found misbehaving; 
each client should prepare a {\em budget} that includes
the wage paid to the server (if the server is not found dishonest)
and the cost for hiring a trusted third party (TTP) 
to check the result returned by the server (i.e., auditing).
A relation among the deposit, wage and the auditing probability 
can be found such that,
the server's most beneficial strategy is to act honestly 
as long as the condition is satisfied. 

It is natural to assume the cloud has a certain fund to spend as deposits
for the tasks it take. However, the fund is limited at a time and
should be spent smartly so that the server 
can maximize its benefit, which we measure as the wage it can earn; 
it becomes more challenging when the server services a dynamic sequence of tasks, 
as it is unknown when new tasks will arrive and the sizes of the future tasks.
For a client, it is also natural to assume she has some fund 
to spend on the tasks she outsources. The client's fund is 
limited too, and thus should be smartly spent as well to maximize
her benefit, which we measure as the overall delay 
that she has to experience when waiting for her tasks to complete. 
Here, the client's spending strategy includes: first, 
how to distribute a given amount of fund 
to the tasks that are submitted simultaneously or within the same time window;
second, for each of the tasks, how to further divide the assigned budget
for paying the server's wage and for hiring a TTP respectively. 
{\em How can we smartly allocate the server and the clients' funds to maximize
their profits?} 
To the best of our knowledge, 
this is a question that has not been raised or answered 
in the literature. 
The focus of this paper is to formulate and solve this problem. 

We formulate the problem in two steps.
First, we formulate a per-task game-based outsourcing model.
Specifically, the model enforces a security relation among three components,
the server's deposit, the server's wage and the client's auditing probability, 
where the latter two determines the client's budget,
to ensure the server's best choice is to compute the task honestly.
In addition, the model has the attractive property that,
the wage and the auditing probability are not fixed but 
functions of the server's deposit and the client's budget;
the larger is the deposit and/or the budget,
the larger is the wage and the smaller is the auditing probability. 
Note that, larger wage and smaller auditing probability (and thus shorter delay)
are desired by the server and the client, respectively. 
In the second step, we formulate 
the interactions between the server and the clients
into an infinite extensive game with perfect information.
Within this game, 
the server and the clients are the parties;
the different ways to dividing the server's fund into the tasks' deposits
and to dividing the clients' funds into the tasks' budgets
are the parties' actions;
and the parties' utilities are defined as functions of the actions. 

We solve the problem in three steps. 
First,
we develop an algorithm that finds the Nash equilibria of the game, 
which is also the optimal solution that 
maximizes the server's wage meanwhile minimizes the client's delay, 
for the special setting where there is only one client who submits a static set of tasks.
Second,
we develop an algorithm that finds the Nash equilibria and also the optimal solution
for the more general setting where there are multiple clients
each submitting a static set of tasks. 
Finally, we develop heuristic algorithms,
which call the solution developed in the second step,
to solve the problem when there are multiple clients each submitting a dynamic sequence of tasks. 
Rigorous proofs have been developed to show the optimality 
of the solutions developed in the first two steps.
Extensive simulations have been conducted to evaluate the performance
of the solutions developed in the third step.

In the rest of the paper,
Section 2 surveys the related works.
Section 3 introduces the system model and 
the per-task game-based outsourcing model. 
Section 4 defines the game between the server and the clients.
Section 5, 6, and 7 develop the solutions in three steps. 
Finally, Section 8 concludes the paper and discusses the future work.


\section{Related Works}
\label{sec::related-works}

There has been extensive research on verifying outsourced computation. 
We briefly summarize these efforts as follows.

Many schemes 
\cite{gennaro2010non, parno2012delegate, catalano2013practical, parno2013pinocchio, abadi2016vd, costello2015geppetto, fiore2016hash, pepper, ginger, goldwasser2015delegating, ben2016interactive, wahby2017full}
have been designed based on cryptographic primitives/algorithms. 
For example, Gennaro et al.~\cite{gennaro2010non} 
formalize the notion of verifiable computation;  
they utilize Yao's garbled circuits to represent an outsourced function 
and homomorphic encryption to hide the circuits.
Parno et al.~\cite{parno2013pinocchio} exploit quadratic programs to 
encode computation and 
generate a fixed-length proof independent of the input/output size. 
Geppetto~\cite{costello2015geppetto} 
adapts multi-quadratic arithmetic programs to encode the computation, and 
constructs a commit-and-prove scheme to share data 
and prove execution of function. 
Various interactive proofs~\cite{pepper, ginger, goldwasser2015delegating, ben2016interactive, wahby2017full} 
have also been proposed. 
%
In general, 
the computational cost incurred by these schemes is high, 
which hampers their application in practice. 
As pointed out by Dong et al.~\cite{dong2017betrayal} 
as well as Walfish and Blumberg~\cite{walfish2015verifying}, 
their computational overhead 
could be $10^3-10^9$ higher than the cost to compute the task.
Hence, we do not adopt this approach in this paper.

Trusted Execution Environment (TEE)~\cite{brandenburger2018blockchain, xiao2019enforcing, cheng2019ekiden, tramer2018slalom} could be applied for verifiable computation. 
However, 
currently prevalent TEE such as Intel SGX is inappropriate 
for multi-threading and tasks requiring high demand of memory~\cite{tramer2018slalom}, 
which could make it inappropriate for enormous computation and validation for 
applications such as machine learning tasks~\cite{lu2018enabling} 
or heavy duty smart contracts~\cite{liu2020game} on blockchain. 
In this paper,
we assume a TTP 
may use TEE; 
but we aim to minimize the employment of TTP
and thus makes TEE an infrequently-used deterrence.

Alternatively, verifiable computation may be implemented based on redundancy
and the assumption of at least one server being honest
\cite{canetti2011practical,avizheh2019verifiable}. 
%
Hence, when the servers return the same final result, 
the result can be immediately accepted; otherwise,
the servers may be asked to provide intermediate computation results
that can be compared to identify the correct computation. 
However, 
the assumption of existing at least one honest server could be impractical
especially when only a small number of servers are employed.

Without the guarantee of trusted server employed,
game theoretic approaches 
\cite{nix2012contractual, optimal, belenkiy2008incentivizing, incentive, kupccu2015incentivized, dong2017betrayal, liu2018new, liu2020game}
have been proposed to prevent rational servers from misbehaving. 
For example, Nix and Kantarcioglu \cite{nix2012contractual} 
design two contracts for employing two servers. 
The games induced from the contracts 
have a Nash Equilibrium, 
where the servers behave honestly, 
as long as they cannot share information or collude with each other.
Pham et al.~\cite{optimal} consider two settings, 
employing a single server or 
employing two servers that do not share information or collude with each other.
They study how to coordinate the rewards, punishments, and auditing 
to make honest behavior the optimal strategy of the server(s). 

Collusion among the servers has also been studied 
\cite{belenkiy2008incentivizing, incentive, kupccu2015incentivized, dong2017betrayal}.
In particular, 
Belenkiy et al.~\cite{belenkiy2008incentivizing} 
study the proper fine-to-reward ratio 
to sabotage rational or malicious misbehavior and 
the impact of the probability that a hired server is misbehaving. 
Kupcu~\cite{kupccu2015incentivized} generalizes the work
by systematically studying 
the settings of hiring multiple servers with multiple types. 
Dong et al.~\cite{dong2017betrayal}
study the collusion and betrayal resulting from the interactions among 
the client-server and server-server contracts,
and propose the design of smart contracts that make use of the blockchain 
for the participants to escort and distribute funds. 
In their recent works~\cite{liu2018new,liu2020game}, 
Liu and Zhang point out the necessity of auditing 
in hiring multiple possibly-colluding servers, 
even when an additional contract exists to encourage betrayal;
they also consider that all hired servers may fully collude with each other,
and design smart contracts for efficient execution of heavy-duty smart contracts.

In this paper, we also adopts game theory for verifiable computation.
While the afore-discussed works consider the setting of
one client outsourcing tasks to one server or multiple servers,
we study the setting of multiple (cloud) clients outsourcing 
a static set or a dynamic sequence of tasks to one (cloud) server. 
Similar to most of the afore-discussed works,
the client needs to hire TTPs for probabilistic auditing,
and we develop the security condition regarding the required relations among
deposit, wage and auditing probability for each outsourced task.
Significantly different from the state of the art, 
our work further studies the optimal distribution of
the limited funds held by the server and the clients 
to maximize the system performance while 
meeting the security condition.

Game theory has also been applied in allocating cloud computation resources~\cite{teng2010new, teng2010resource, wei2010game, kaewpuang2013framework, pillai2014resource, xu2014game}.
For example, Wei et al.~\cite{wei2010game} propose to use the evolutionary game for cloud resource allocation.
Kaewpuang et al.~\cite{kaewpuang2013framework} formalize an optimization problem for allocating radio and computing resources for mobile devices and applies the core and Shapley value to coordinate the revenue allocation in the cooperative game. 
Pillai and Rao~\cite{pillai2014resource} study the problem of on-demand virtual machines allocation.
Xu and Yu~\cite{xu2014game} propose a fairness-utilization game theoretical allocation for multiple resources such as memory, CPU and storage. 
In this paper, we also apply game theory to formalize and solve 
the problems of optimally allocating resources. 
While aforementioned research aims to optimize the allocation of physical resources, 
our scheme is the first to allocate limited funds held by 
the cloud server and every client, respectively, to the tasks 
as deposits or task budgets, with the three-fold goals of
(i) assuring honest computation of the tasks, 
(ii) maximizing the server's overall wages, and
(iii) minimizing the delay for verification.

\comment{
In our paper, 
we also apply game theory for verifiable computation and efficient resource allocation. 
We focus on the setting where a cloud server services 
a static set or a dynamic sequence of tasks submitted by multiple clients.
Every client wishes to assure honest execution of each of her tasks
by additionally employs a TTP to re-compute the task with a certain probability. 
To implement this strategy, 
the cloud server makes a deposit for each task it takes,
each client allocates a budget 
(including the wage paid to the server and the possible cost for hiring TTP) 
for each task it submits, and
every party has its limited fund that can be used for either deposits or task budgets.
We study how the funds should be optimally allocated to achieve the three-fold goals:
(i) rational cloud server should honestly compute each tasks it takes,
(ii) the server's wages earned from computing the tasks is maximized,
and (iii) the overall delay experienced by each task for verifying her tasks is minimized. 
}

\section{System Architecture}

In this section,
we propose an architecture for 
game-based computation outsourcing to cloud server,
which is facilitated by an underlying blockchain.   

\subsection{System Model}

We consider a system consisting of 
a cloud service provider (called cloud server or {\em server} hereafter),
$m$ {\em clients} that need to outsource computation tasks to the server,
and some trusted third parties (called {\em TTP}s hereafter) which 
the clients can resort to for verifying outsourced computation. 
Figure~\ref{fig:system} illustrates the system architecture.

\begin{figure}[htb]
\includegraphics[width=\columnwidth]{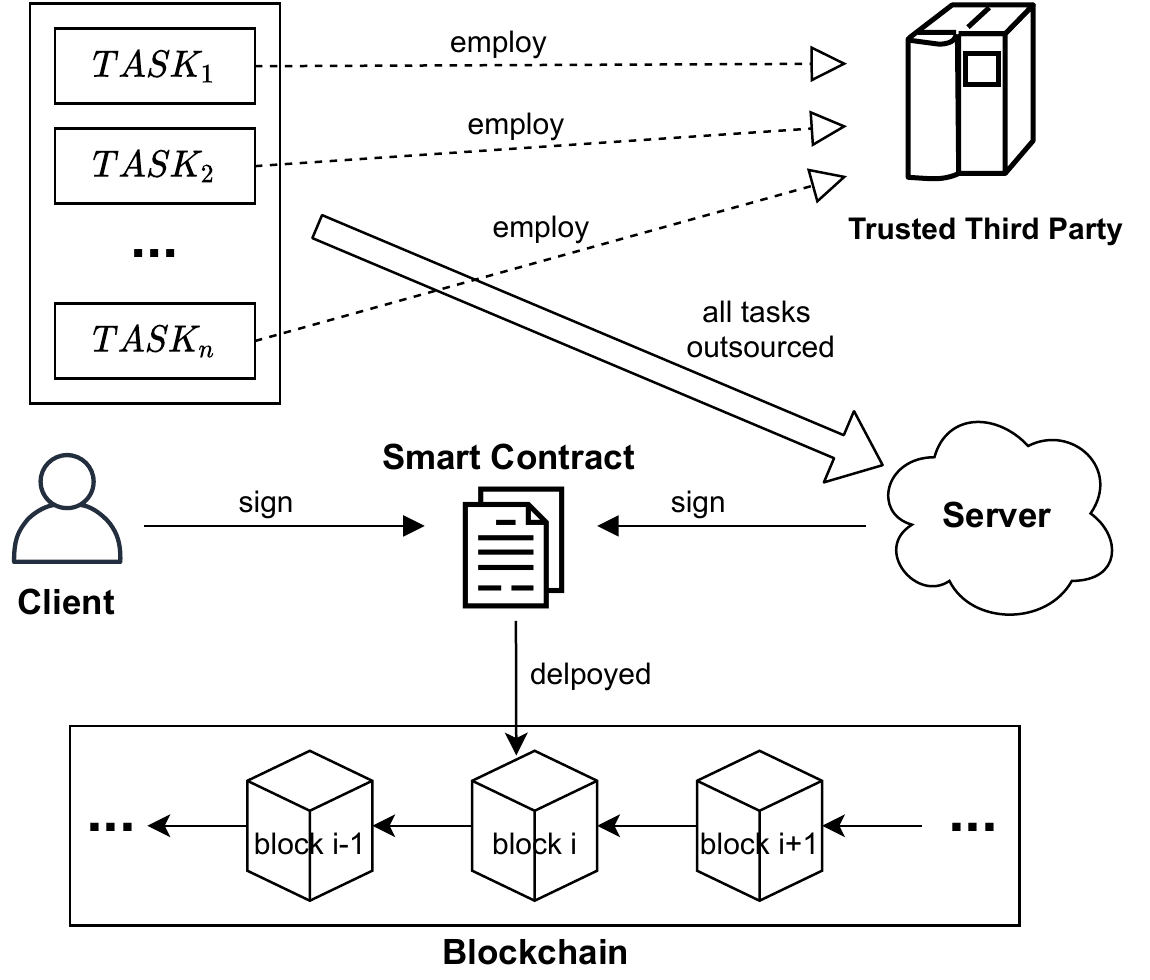}
\caption{System architecture}
\label{fig:system}
\end{figure}

The server, denoted as $S$, is not completely trusted and 
its execution of the tasks outsourced by the clients may not always be correct.
However, we assume the server is economically rational; that is,
it always aims to maximize its profit and 
will not misbehave if that would cause penalty. 
As to be elaborated in Section~\ref{sec:outsourcing-model},
we introduce a game-based approach to guarantee that 
the server honestly executes the outsourced tasks.    
We assume that the server is willing to use a certain amount of fund 
as deposit to assure its client of its honest behavior. 

We denote the $m$ clients as $C_1, \cdots, C_m$. 
The tasks outsourced by each client $C_i$ are denoted as 
$t_{i,j}$ for $j=1,\cdots,n_i$, where $n_i$ is the number of such tasks.
Each task $t_{i,j}$ is associated with two costs denoted as $c_{i,j}$ and $\hat{c}_{i,j}$,
where $c_{i,j}$ is the server's cost to execute the task and 
$\hat{c}_{i,j}$ is each TTP's cost to execute the task. 
To simplify the presentation, 
we assume the execution time is proportional to the costs; 
that is, assuming $k$ is a certain constant, 
the server's execution time of the task is $k\cdot c_{i,j}$
and each TTP's execution time of the task is $k\cdot\hat{c}_{i,j}$. 
Each client $C_i$ allocates a budget $b_{i,j}$ for each task $t_{i,j}$,
where $b_{i,j}\geq c_{i,j}$ so that the server is willing to take the task.
        
Each TTP can be hired at the price of $\hat{c}_{i,j}$ by a client 
to check if the server's execution is correct 
via re-execution.
A TTP can also be a cloud server that 
has a trusted execution environment (TEE) such as Intel SGX enclave.
  
Finally, we assume that the server, the clients and the TTPs 
can access a blockchain system so that no
any centralized trusted authority is required.

\subsection{Per-task Game-based Outsourcing Model}
\label{sec:outsourcing-model}

To ensure that the server honestly executes tasks,
we adopt a game theoretic approach as follows.
For each task $t_{i,j}$, 
the server should make a deposit of $d_{i,j}$ and
client $C_i$ should promise a budget with 
a certain expected value of $b_{i,j}$.

After the client outsources $t_{i,j}$ to the server,
with a probability denoted as $p_{i,j}$ 
it also hires a TTP to execute the task.
After the client has received a result of computation task from the server and/or the TTP,
funds are distributed between the client and the server as follows:
If no TTP is hired, or the results returned by the server and the hired TTP are the same, 
the client should pay a wage denoted as $w_{i,j}$, where $w_{i,j} \geq c_{i,j}$, 
to the server, and the server should also be returned with its deposit $d_{i,j}$.
If the results returned by the server and the TTP are different,
deposit $d_{i,j}$ should be given to the client.    
Hence, 
\begin{equation}\label{eq:budget}
b_{i,j}=w_{i,j}+p_{i,j}\cdot\hat{c}_{i,j}. 
\end{equation}
Also, as stated in the following theorem,
$p_{i,j}\geq\frac{c_{i,j}}{w_{i,j}+d_{i,j}}$
is the sufficient condition to
deter the server from misbehaving and 
ensure it honestly executes task $t_{i,j}$.

\begin{theorem}
\label{theo:pij}
As long as $w_{i,j}\geq c_{i,j}$ and $p_{i,j}\geq\frac{c_{i,j}}{w_{i,j}+d_{i,j}}$, 
an economically rational server must execute task $t_{i,j}$ honestly and 
submit a correct result to the client.
\end{theorem}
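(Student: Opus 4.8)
The plan is to compare the server's expected utility under the honest strategy with its expected utility under the most profitable dishonest deviation, and to show that the two stated conditions make the former at least as large; since the server is economically rational, it then has no incentive to deviate.

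First I would fix a task $t_{i,j}$ and compute the honest payoff. Under the honest strategy the server posts deposit $d_{i,j}$, incurs computation cost $c_{i,j}$, and returns the correct result; because any hired TTP is trusted and re-computes the same correct result, the server's result is always accepted, so it is paid $w_{i,j}$ and its deposit $d_{i,j}$ is returned. Its net utility is therefore $w_{i,j}-c_{i,j}$, which is nonnegative by the hypothesis $w_{i,j}\ge c_{i,j}$; this guarantees the server is willing to take the task in the first place (individual rationality).

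Second I would identify the best deviation and compute its payoff. The key observation is that the most tempting deviation is to skip the computation entirely and submit an arbitrary incorrect result: computing and then returning a wrong result is dominated by not computing at all (identical payoffs but strictly higher cost), and returning a correct result is impossible without performing the computation. Under this worst-case deviation the server pays deposit $d_{i,j}$, incurs zero cost, and: with probability $1-p_{i,j}$ no TTP is hired (or the result is otherwise accepted) and it receives $w_{i,j}$ plus its returned deposit; with probability $p_{i,j}$ a TTP is hired, the mismatch is detected, no wage is paid, and the deposit $d_{i,j}$ goes to the client. Its expected net utility is thus $(1-p_{i,j})(w_{i,j}+d_{i,j})-d_{i,j}=w_{i,j}-p_{i,j}(w_{i,j}+d_{i,j})$.

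Finally I would compare the two quantities: the honest strategy is weakly preferred exactly when $w_{i,j}-c_{i,j}\ge w_{i,j}-p_{i,j}(w_{i,j}+d_{i,j})$, i.e. when $p_{i,j}(w_{i,j}+d_{i,j})\ge c_{i,j}$, which rearranges to the hypothesis $p_{i,j}\ge\frac{c_{i,j}}{w_{i,j}+d_{i,j}}$. Any mixed strategy is a convex combination of the pure alternatives considered, so its expected payoff is likewise bounded above by the honest payoff, and honest execution is a best response. The part that needs care rather than calculation is justifying that ``do not compute, return garbage'' really is the best available deviation — in particular ruling out that the server could cheaply guess the correct output or exploit structure of the task. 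I would handle this by making explicit the standard modeling assumption that the task output is not guessable with non-negligible probability, so that any result not produced by honest computation disagrees with the TTP's result whenever a TTP is hired; with that assumption in place the remaining argument is the elementary comparison above.
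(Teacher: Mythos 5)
Your proposal is correct and follows essentially the same argument as the paper: compare the honest payoff $w_{i,j}-c_{i,j}$ with the expected dishonest payoff $(1-p_{i,j})w_{i,j}-p_{i,j}d_{i,j}=w_{i,j}-p_{i,j}(w_{i,j}+d_{i,j})$ and rearrange to obtain $p_{i,j}\geq\frac{c_{i,j}}{w_{i,j}+d_{i,j}}$. Your additional remarks (that the best deviation is to skip computation entirely, and that an incorrect result is always detected when a TTP is hired) merely make explicit assumptions the paper's sketch leaves implicit.
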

\begin{proof}
(sketch). 
%
If the server behaves dishonestly, 
it should lose its deposit with probability $p_{i,j}$ while 
still receive a wage of $w_{i,j}$ with probability $1-p_{i,j}$; 
hence, its expected payoff is $-p_{i,j}*d_{i,j}+(1-p_{i,j})*w_{i,j}$. 
If the server behaves honestly, 
it should receive a wage of $w_{i,j}$ while pay the honest execution cost of $c_{i,j}$;
hence, its expected payoff is $w_{i,j}-c_{i,j}$.
For a rational server to behave honestly,   
it must hold that $-p_{i,j}*d_{i,j}+(1-p_{i,j})*w_{i,j}\leq w_{i,j}-c_{i,j}$. 
Therefore, 
$p_{i,j}\geq\frac{c_{i,j}}{w_{i,j}+d_{i,j}}$. 
\end{proof}

\section{Optimization Problem}

To efficiently implement the proposed architecture,
it is desired to optimize the allocation of
the cloud server's fund for deposits and the clients' funds for tasks,
to achieve the following dual goals:
the server can {\em maximize} its wages earned from the clients; 
each clients can {\em minimize} the total time to 
verify the results of its tasks outsourced to the server.

\subsection{Game between The Server and The Clients}

We model the interactions between the server and the clients
as an infinite extensive game with perfect information,
%
denoted as $G=(P, A, U)$.
\begin{itemize}
    \item $P=\{S, C_1, \cdots, C_m\}$:
        the set of players. 
    \item $A$:
        the set of actions taken by the players, including
                (i) all possibilities that each $C_i$ can split its budget $b_i$ to $n_i$ tasks
                and (ii) all possibilities that $S$ can split its deposit fund $d$ to the 
                    $n=\sum_{i=1}^{m}n_i$
                tasks.
                Hence, the action set each $C_i$ can take is denoted as
                $A_{c,i} = \{(b_{i,1},\cdots,b_{i,n_i})~|~\sum_{j=1}^{n_i}b_{i,j}=b_i\}$,
                where $b_i$ is $C_i$'s total budget for its tasks, and 
                each action $(b_{i,1},\cdots,b_{i,n_i})$ is one possible division of $b_i$ to $n_i$ tasks;
                the action set $S$ can take is denoted as
                $A_s = \{(d_{1,1},\cdots,d_{m,n_m})\nonumber
                    |\sum_{i=1}^{m}\sum_{j=1}^{n_i}d_{i,j}=d\}$,
                where $d$ is the server's fund for deposits, and
                each action $(d_{1,1},\cdots,d_{m,n_m})$ 
                is one possible division of $d$ to $n$ tasks.
    \item $U=\{U_s,U_{c,1},\cdots,U_{c,m}\}$:
        the players' utility functions. 
        %
\end{itemize}


\subsection{Constraints on Budgets and Deposit}

%
According to the above definitions of the clients' and the server's actions,
the following constraints are obvious:
\begin{equation}\label{eq:sum_bij}
\sum_{j=1}^{n_i}b_{i,j} = b_i,~\forall~i\in\{1,\cdots,m\},
\end{equation}
and
\begin{equation}\label{eq:sum_d}
\sum_{i=1}^{m}\sum_{j=1}^{n_i}d_{i,j} = d.
\end{equation}
For each task $t_{i,j}$,
the server's deposit for it should be at least $\hat{c}_{i,j}$,
to compensate client $C_i$'s cost for hiring a TTP if the server is found dishonest. 
Hence, we have the following constraint:  
\begin{equation}\label{eq:deposit-constraint}
d_{i,j}\geq\hat{c}_{i,j}.
\end{equation}

Regarding budget $b_{i,j}$ for $t_{i,j}$, 
according to Eq.~(\ref{eq:budget}), it includes
wage $w_{i,j}$ paid to the server for honest computation and
the expected cost to hire TTP. 
%
First, based on Theorem~\ref{theo:pij} and that TTP should be hired as infrequently as possible, 
we set
\begin{equation}\label{eq:pij} 
p_{i,j}=\frac{c_{i,j}}{w_{i,j}+d_{i,j}}.
\end{equation} 
%
%
Second, $w_{i,j} \geq c_{i,j}$ must hold to incentive the server.
Because $b_{i,j}=w_{i,j} + \frac{c_{i,j}\hat{c}_{i,j}}{w_{i,j}+d_{i,j}}$, 
which is from Equations (\ref{eq:budget}) and (\ref{eq:pij}),
is an increasing function of $w_{i,j}$, 
it holds that $w_{i,j} \geq c_{i,j}$ is equivalent to 
    $b_{i,j} \geq c_{i,j} + \frac{c_{i,j}\hat{c}_{i,j}}{c_{i,j} + d_{i,j}}$.
Further due to $d_{i,j}\geq \hat{c}_{i,j}$, we set
\begin{equation}\label{eq:budget-constraint}
b_{i,j}\geq c_{i,j} + \frac{c_{i,j}\hat{c}_{i,j}}{c_{i,j}+\hat{c}_{i,j}},
\end{equation}
which implies $b_{i,j} \geq c_{i,j} + \frac{c_{i,j}\hat{c}_{i,j}}{c_{i,j} + d_{i,j}}$ 
and $w_{i,j} \geq c_{i,j}$.


\subsection{Utility Functions}

In the game,
server $S$ aims to maximize its total wage $\sum_{i=1}^{m}\sum_{j=1}^{n_i}w_{i,j}$ 
under the constraints of
(\ref{eq:budget}), (\ref{eq:pij}), (\ref{eq:sum_bij}), (\ref{eq:sum_d}), 
(\ref{eq:deposit-constraint}) and (\ref{eq:budget-constraint}).
From (\ref{eq:budget}) and (\ref{eq:pij}), it holds that $b_{i,j}=w_{i,j}+\frac{c_{j,j}\hat{c}_{i,j}}{w_{i,j}+d_{i,j}}$ which can be written as 
a quadratic equation for variable $w_{i,j}$ as  $w_{i,j}^2+w_{i,j}(d_{i,j}-b_{i,j})+c_{j,j}\hat{c}_{i,j}-b_{i,j}d_{i,j}=0$. 
Then, we have 
\begin{equation}\label{eq:wage}
    w(b_{i,j},d_{i,j}) 
    = \frac{b_{i,j}-d_{i,j}+\sqrt{(b_{i,j}+d_{i,j})^2-4c_{i,j}\hat{c}_{i,j}}}{2}.
\end{equation}
%
%
Therefore, 
%
the utility function of server $S$ is 
\begin{eqnarray}
    U_s(A_s, A_{c,1}, \cdots, A_{c,m}) 
    =\sum_{i=1}^{m}\sum_{j=1}^{n_i}w(b_{i,j},d_{i,j}).
\end{eqnarray}
Each $C_i$ aims to minimize the expected time for verifying its $n_i$ tasks. 
For each task $t_{i,j}$, the expected verification time, denoted as $T_{i,j}$, is
\begin{eqnarray}
    &&T_{i,j}(b_{i,j},d_{i,j}) 
    = k\cdot (b_{i,j} - w(b_{i,j},d_{i,j}))\nonumber\\
    &=& k\cdot\frac{b_{i,j}+d_{i,j}-\sqrt{(b_{i,j}+d_{i,j})^2-4c_{i,j}\hat{c}_{i,j}}}{2}.
\end{eqnarray}
Then, the utility function of client $C_i$ is defined as
\begin{eqnarray}
    U_{c,i}(A_s,A_{c,i}) 
    =\sum_{j=1}^{n_i} [T_{i,j}(b_{i,j},d_{i,j})].
\end{eqnarray}

\subsection{Nash Equilibrium of the Game}


A Nash equilibrium of the game is a combination of action, denoted as 
$(A*_s,A*_{c,1},\cdots,A*_{c,m})$, 
taken by the server and the clients respectively, such that:
for the server and any $A_s\not= A*_s$,
$U_{s}(A_s,A*_{c,1},\cdots,A*_{c,m})\leq U_{s}(A*_s,A*_{c,1},\cdots,A*_{c,m})$;
for each client $i\in\{1,\cdots,m\}$ and any $A_{c,i}\not= A*_{c,i}$, 
$U_{c,i}(A*_s,A_{c,i})\leq U_{c,i}(A*_s,A*_{c,i})$.

%

\section{Single Client with Static Set of Tasks}
\label{sec:single_clt}

We first study the optimization problems in
the context that server $S$ interacts with only one client, 
denoted as $C_i$.


\subsection{Client's Optimization Problem}

The client's purpose is to minimize her utility, 
i.e., the expected time for verifying her tasks.
Hence, the client's optimization problem is as follows.
(Note: parameter $k$ is ignored for the simplicity of exposition.)

\begin{eqnarray}\label{eq:min_delay}
        &\min&\sum_{j=1}^{n_i}\frac{b_{i,j}+d_{i,j}-\sqrt{(b_{i,j}+d_{i,j})^2-4c_{i,j}\hat{c}_{i,j}}}{2}\label{eq:min_delay-objective}\\
        &s.t.& \nonumber\\
        &&\sum_{j=1}^{n_i} d_{i,j}=d;~\sum_{j=1}^{n_i} b_{i,j}=b_i; \label{eq:min_delay-c1-c2}\\
        &&d_{i,j}\geq \hat{c}_{i,j};~b_{i,j}\geq c_{i,j}+\frac{c_{i,j}\hat{c}_{i,j}}{c_{i,j}+\hat{c}_{i,j}}. \label{eq:min_delay-c3-c4}
\end{eqnarray}

\subsection{Server's Optimization Problem}

The server's purpose is also to maximize its utility, 
i.e., the total wage earned from the client.
Hence, its optimization problem is as follows.

\begin{eqnarray}\label{eq:max_wage}
        &\max&\sum_{j=1}^{n_i}\frac{b_{i,j}-d_{i,j}+\sqrt{(b_{i,j}+d_{i,j})^2-4c_{i,j}\hat{c}_{i,j}}}{2}\label{eq:max_wage-objective}\\
        &s.t.&~\mbox{constraints}~(\ref{eq:min_delay-c1-c2}),~(\ref{eq:min_delay-c3-c4}).\nonumber
\end{eqnarray}
Note that,
the sum of the objective functions of the above two optimization problems is
\[
(\ref{eq:min_delay-objective}) + (\ref{eq:max_wage-objective}) = \sum_{j=1}^{n_i}b_{i,j} = b_i.
\]
Hence,
the objective function of the server's optimization problem can be re-written to
\[
\max~b_i-\sum_{j=1}^{n_i}\frac{b_{i,j}+d_{i,j}-\sqrt{(b_{i,j}+d_{i,j})^2-4c_{i,j}\hat{c}_{i,j}}}{2},
\]
which is further equivalent to
\[
\min~\sum_{j=1}^{n_i}\frac{b_{i,j}+d_{i,j}-\sqrt{(b_{i,j}+d_{i,j})^2-4c_{i,j}\hat{c}_{i,j}}}{2}.
\]
Therefore,
the above two optimization problems are equivalent.
That is, a solution to the client's optimization problem is
also a solution to the server's optimization problem,
and thus is also the Nash equilibrium of the game 
between the server and the client.

\subsection{Proposed Algorithm}

Due to the equivalence of the above two optimization problems,
we only need to solve one of them. Next,
we develop the algorithm, formally presented in Algorithm~\ref{alg:SingleClient},
to find the solution to the client's optimization problem.

The core of the algorithm is to solve the following optimization problem,
which is re-written from the afore-presented client's optimization problem.
\begin{eqnarray}\label{eq:min_delay-revised}
        &\min&\sum_{j=1}^{n_i}f(s_{i,j},i,j) \label{eq:fx}\\
        &where&~f(x,i,j)= \frac{x-\sqrt{x^2-4c_{i,j}\hat{c}_{i,j}}}{2} \nonumber\\
        &s.t.&
        s_{i,j} = b_{i,j}+d_{i,j} \nonumber\\
        &&\mbox{constraints}~(\ref{eq:min_delay-c1-c2})~and~(\ref{eq:min_delay-c3-c4}). \nonumber
\end{eqnarray}
Note that, $f(x,i,j)$ is the client's utility associated with each task $t_{i,j}$,
when the task is assigned with $x$ as the sum of $b_{i,j}$ and $d_{i,j}$.
In the algorithm, we also use a partial derivative function of $f(x,i,j)$, 
which is defined as
\begin{equation}\label{eq:fpartial}
f'(x,i,j) = \frac{\partial f(x,i,j)}{\partial x}.
\end{equation}

After the client and server exchange with each other their budget and deposit (i.e., $b_i$ and $d$),
they each run Algorithm~\ref{alg:SingleClient} to 
optimally allocate $b_i+d$ to the $n_i$ tasks,
i.e., each task $t_{i,j}$ is assigned with budget $b_{i,j}$ and deposit $d_{i,j}$
where $\sum_{j=1}^{n_i}b_{i,j}=b_i$ and $\sum_{j=1}^{n_i}d_{i,j}=d$,
with the goal of maximizing the client's utility.
Intuitively, 
the algorithm runs in the following three phases: 

In the first phase, 
each task $t_{i,j}$ is assigned an initial value for $s_{i,j}$,
which denotes the sum of $b_{i,j}$ and $d_{i,j}$.
Here, the initial value is set to $\hat{c}_{i,j} + c_{i,j} + \frac{c_{i,j}\hat{c}_{i,j}}{c_{i,j}+\hat{c}_{i,j}}$
in order to satisfy constraints (\ref{eq:min_delay-c3-c4}).
After this phase completes,
$s=b_i+d - \sum_{j=1}^{n_i}(\hat{c}_{i,j}+c_{i,j}+\frac{c_{i,j}\hat{c}_{i,j}}{c_{i,j}+\hat{c}_{i,j}})$
remains to be allocated in the second phase.

In the second phase, 
$s$ is split into units each of size $\delta$ and 
the units are further assigned to the tasks step by step.
Specifically, with each step, 
one remaining unit is assigned to task $t_{i,j}$ whose
$f'(s_{i,j},i,j)$ is the minimal among all the tasks;
this way, the units are assigned in a greedy manner 
to maximize the total utility of all the $n_i$ tasks. 

After the $b_i+d$ have been greedily assigned to all the tasks,
in the third phase, 
$s_{i,j}$ is further split into $b_{i,j}$ and $d_{i,j}$
such that, the shorter verification time a task has,
the larger deposit is assigned to it.
This way, the server's deposit can be reclaimed as soon as possible from the tasks.

\begin{algorithm}[htb]
\caption{Optimizing Resource Allocation (Server $S$ v.s. Client $C_i$ with Static Task Set)}\label{alg:SingleClient}

{\bf Input:}
\begin{itemize}
\item
$b_i$: total budget of client $C_i$;
\item
$d$: total deposit of server $S$;
\item
$n_i$: total number of tasks;
\item
task set $\{t_{i,1},\cdots,t_{i,n_i}\}$ and
associated costs $\{c_{i,1},\cdots,c_{i,n_i}\}$ and $\{\hat{c}_{i,1},\cdots,\hat{c}_{i,n_i}\}$.
\end{itemize}

{\bf Output:} $\{b_{i,1},\cdots,b_{i,n_i}\}$ and $\{d_{i,1},\cdots,d_{i,n_i}\}$. \\

{\bf Phase I:} Initialization.

\begin{algorithmic}[1]
\For{$j\in\{1,\cdots,n_i\}$}
    \State
    $s_{i,j}\leftarrow (\hat{c}_{i,j}+c_{i,j}+\frac{c_{i,j}\hat{c}_{i,j}}{c_{i,j}+\hat{c}_{i,j}})$
    \Comment{meet constraints~(\ref{eq:min_delay-c3-c4})}
\EndFor
\end{algorithmic}

~~\\
{\bf Phase II:} Greedy Allocation of the Remaining Fund.
\begin{algorithmic}[1]
\State $s\leftarrow [b_i+d - \sum_{j=1}^{n_i}(\hat{c}_{i,j}+c_{i,j}+\frac{c_{i,j}\hat{c}_{i,j}}{c_{i,j}+\hat{c}_{i,j}})]$
\Comment{remaining fund to distribute}
\While{$s\geq\delta$} \Comment{distribute remaining fund in unit $\delta$}
    \State $j*\leftarrow \operatorname*{arg\,min}_{j\in\{1,\cdots,n_i\}} f'(s_{i,j},i,j)$
    \State $s_{i,j*}\leftarrow (s_{i,j*}+\delta)$;~$s\leftarrow (s-\delta)$
\EndWhile
\end{algorithmic}

~~\\
{\bf Phase III:} Splitting Sum to Budget/Deposit.
\begin{algorithmic}[1]
\State $d'\leftarrow d-\sum_{j=1}^{n_i}\hat{c}_{i,j}$
\State $tempSet=\{1,\cdots,n_i\}$
\While{$tempSet\not=\emptyset$}
    \State $j*=\operatorname*{arg\,min}_{j\in\{1,\cdots,n_i\}} f(s_{i,j},i,j)$
        \Comment{find the task with the shortest verification time}
    \State $x\leftarrow\min\{d',s_{i,j*}-\hat{c}_{i,j*}-(c_{i,j}+\frac{c_{i,j}\hat{c}_{i,j}}{c_{i,j}+\hat{c}_{i,j}})\}$
    \State $d_{i,j*}\leftarrow (\hat{c}_{i,j*}+x)$ \Comment{assign as much deposit to task with the shortest verification time}
    \State $b_{i,j*}\leftarrow (s_{i,j*}-d_{i,j*})$
    \State $d'\leftarrow (d'-x)$
    \State $tempSet\leftarrow (tempSet-\{j*\})$
\EndWhile
\end{algorithmic}
\end{algorithm}

\subsection{Analysis}

We now prove that Algorithm~\ref{alg:SingleClient} 
finds an optimal solution for 
the client's optimization problem 
(which is also a solution for the server's optimization problem), 
and the solution is a Nash equilibrium of 
the game between the client and the server. 
We develop the proof in the following steps.
First, we introduce an optimization problem, as follows, 
which is relaxed from (\ref{eq:min_delay-revised}):

\begin{eqnarray}\label{eq:min_delay-revised2}
        &\min&\sum_{j=1}^{n_i}f(s_{i,j},i,j) \label{eq:fx2}\\
        &where&~f(x,i,j)= \frac{x-\sqrt{x^2-4c_{i,j}\hat{c}_{i,j}}}{2} \nonumber\\
        &s.t.& 
        -s_{i,j} + c_{i,j} + \frac{c_{i,j}\hat{c}_{i,j}}{c_{i,j}+\hat{c}_{i,j}} + \hat{c}_{i,j}\leq 0.
        \label{eq:constraint-revised2}
\end{eqnarray}
Second, we derive the following Lemmas:
\begin{lemma}
\label{lem:singleClient-1}
The optimization problem defined in (\ref{eq:min_delay-revised2}) has a unique solution.
\end{lemma}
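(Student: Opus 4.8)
The plan is to reduce the statement to two standard facts: the objective of (\ref{eq:min_delay-revised2}) is strictly convex, and its feasible region is a nonempty compact convex set. Given these, existence of a minimizer follows from the Weierstrass theorem and uniqueness from strict convexity, so it suffices to carry out these two verifications carefully.

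First I would establish strict convexity of each per-task term $f(\cdot,i,j)$. Rationalizing the numerator,
\[
f(x,i,j)=\frac{x-\sqrt{x^2-4c_{i,j}\hat c_{i,j}}}{2}=\frac{2c_{i,j}\hat c_{i,j}}{\,x+\sqrt{x^2-4c_{i,j}\hat c_{i,j}}\,}.
\]
On the feasible region, constraint (\ref{eq:constraint-revised2}) forces $s_{i,j}\ge\hat c_{i,j}+c_{i,j}+\frac{c_{i,j}\hat c_{i,j}}{c_{i,j}+\hat c_{i,j}}>\hat c_{i,j}+c_{i,j}\ge 2\sqrt{c_{i,j}\hat c_{i,j}}$, the last step being the AM--GM inequality, so $x^2-4c_{i,j}\hat c_{i,j}>0$ strictly and $f(\cdot,i,j)$ is $C^2$ there. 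Writing $f=\frac{2c_{i,j}\hat c_{i,j}}{g}$ with $g(x)=x+\sqrt{x^2-4c_{i,j}\hat c_{i,j}}$, one checks that on the feasible region $g>0$, $g'>0$, and $g''<0$, i.e. $g$ is positive and strictly concave; since the reciprocal of a positive strictly concave function is strictly convex, $f(\cdot,i,j)$ is strictly convex on the feasible set (equivalently, a short computation gives $f''(x,i,j)>0$). Because the objective $\sum_{j=1}^{n_i}f(s_{i,j},i,j)$ separates over the distinct variables $s_{i,j}$, its Hessian is diagonal with strictly positive entries, hence the objective is strictly convex.

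Next I would check the feasible region, which is the polyhedron $\{s:\ s_{i,j}\ge\hat c_{i,j}+c_{i,j}+\tfrac{c_{i,j}\hat c_{i,j}}{c_{i,j}+\hat c_{i,j}}\text{ for all }j,\ \sum_{j}s_{i,j}\le b_i+d\}$, the total-fund bound being inherited from constraint (\ref{eq:min_delay-c1-c2}); this set is closed and convex. It is nonempty under the natural feasibility condition $b_i+d\ge\sum_{j}(\hat c_{i,j}+c_{i,j}+\tfrac{c_{i,j}\hat c_{i,j}}{c_{i,j}+\hat c_{i,j}})$ (the same condition that makes Phase~I--II of Algorithm~\ref{alg:SingleClient} well defined), and it is bounded, since each $s_{i,j}$ lies between its lower bound and $b_i+d$ minus the lower bounds of the other tasks; hence it is compact. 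By Weierstrass the continuous objective attains a minimum on it, and if there were two distinct minimizers $s^{*},s^{**}$, convexity would place $\tfrac12(s^{*}+s^{**})$ in the region while strict convexity would make its objective value strictly smaller, a contradiction. So the minimizer exists and is unique.

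The only genuine work is the first step, and the point to be careful about there is the domain of $f$: one must use the lower bound from (\ref{eq:constraint-revised2}) together with AM--GM to guarantee that the radicand is \emph{strictly} positive on the whole feasible region, so that $f$ is actually twice differentiable on it (and not merely on an interior), and the reciprocal-of-concave argument (or the direct $f''>0$ computation) applies without boundary caveats. Everything after that is the textbook existence-and-uniqueness package for strictly convex minimization over a compact convex set, and this uniqueness is precisely what the subsequent lemmas will invoke when identifying the solution of the relaxed problem with the output of Algorithm~\ref{alg:SingleClient}.
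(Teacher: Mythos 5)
Your proposal is correct and follows essentially the same route as the paper's Appendix~1 argument: strict convexity of the separable objective (equivalently $f''(x,i,j)>0$, i.e.\ a diagonal positive-definite Hessian) together with convexity of the constraints yields uniqueness. You additionally supply the existence half (compactness of the feasible set plus Weierstrass) and the AM--GM check that the radicand stays strictly positive on the feasible region, details the paper leaves implicit; these are welcome refinements rather than a different method.
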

\begin{proof}
As elaborated in Appendix~1,
we show that 
the objective function is strictly convex and 
all the constraints are convex. 
\end{proof}
%
\begin{lemma}
\label{lem:singleClient-2}
Phases I and II of Algorithm~\ref{alg:SingleClient} finds the unique solution to 
the optimization problem defined in (\ref{eq:min_delay-revised2}).
\end{lemma}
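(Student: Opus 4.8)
The plan is to first obtain an explicit ``water-filling'' description of the optimizer of~(\ref{eq:min_delay-revised2}) --- unique by Lemma~\ref{lem:singleClient-1} --- from its Karush--Kuhn--Tucker conditions, and then to argue that Phases~I and~II of Algorithm~\ref{alg:SingleClient} recover exactly that point (up to the $\delta$-discretization, which is handled in the last step). Write $L_{i,j}=\hat{c}_{i,j}+c_{i,j}+\frac{c_{i,j}\hat{c}_{i,j}}{c_{i,j}+\hat{c}_{i,j}}$ for the floor value imposed by~(\ref{eq:constraint-revised2}), and recall that the relaxed problem also carries the equality $\sum_{j}s_{i,j}=b_i+d$ inherited from~(\ref{eq:min_delay-c1-c2}), with $b_i+d\ge\sum_j L_{i,j}$ the feasibility condition. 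By the strict convexity of $f(\cdot,i,j)$ proved in Appendix~1, the derivative $f'(\cdot,i,j)$ of~(\ref{eq:fpartial}) is negative and strictly increasing on its domain; this is the one analytic fact driving the argument.

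First I would write the KKT system for~(\ref{eq:min_delay-revised2}): with a multiplier $\mu$ for the equality and $\nu_j\ge 0$ for the floor constraint of task $j$, stationarity gives $f'(s_{i,j},i,j)=\nu_j-\mu$ and complementary slackness gives $\nu_j\,(s_{i,j}-L_{i,j})=0$. Writing $\lambda=-\mu$, this identifies the optimum as water-filling on the marginal verification delays: there is a threshold $\lambda$ such that every task with $f'(L_{i,j},i,j)<\lambda$ is raised until $f'(s_{i,j},i,j)=\lambda$ while every task with $f'(L_{i,j},i,j)\ge\lambda$ stays at $s_{i,j}=L_{i,j}$. Since $f'(\cdot,i,j)$ is strictly increasing, each $s_{i,j}$ is a nondecreasing function of $\lambda$, so $\sum_j s_{i,j}$ is strictly increasing in $\lambda$ on the active range; hence $\sum_j s_{i,j}=b_i+d$ fixes a unique $\lambda$, which re-derives uniqueness and pins down the target the algorithm must hit.

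Next I would analyze Phase~II via the invariant: after $k$ quanta of the surplus $b_i+d-\sum_j L_{i,j}$ have been distributed, the current vector $(s_{i,j})_j$ minimizes $\sum_j f(s_{i,j},i,j)$ over all lattice points with $s_{i,j}-L_{i,j}\in\delta\,\mathbb{Z}_{\ge 0}$ and $\sum_j (s_{i,j}-L_{i,j})=k\delta$. The base case $k=0$ is Phase~I. For the inductive step, convexity makes the successive increments $g_{i,j}(\ell)=f(L_{i,j}+\ell\delta,i,j)-f(L_{i,j}+(\ell-1)\delta,i,j)$ nondecreasing in $\ell$, so the objective of a lattice allocation equals a constant plus a sum of prefixes of these increment sequences; picking the $k$ globally smallest increments (which, by within-task monotonicity, automatically forms a prefix of each sequence) therefore minimizes it, and an exchange argument --- if another allocation were strictly better, transferring one quantum from a task with ``too many'' to one with ``too few'' would strictly improve it --- confirms this and, with strict monotonicity of the increments, yields uniqueness. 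The rule in Algorithm~\ref{alg:SingleClient} always gives the next quantum to the task of minimal $f'(s_{i,j},i,j)$, which, to leading order in $\delta$, is the task offering the smallest next increment $g_{i,j}(\cdot)$; hence the invariant is preserved, and at termination the output is the unique lattice minimizer of~(\ref{eq:min_delay-revised2}), whose shape is the water-filling profile of the previous paragraph.

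The step I expect to be the main obstacle is reconciling this \emph{discrete} greedy with the \emph{continuous} optimizer: the algorithm compares pointwise derivatives $f'(s_{i,j},i,j)$ rather than exact increments, and moves in fixed quanta $\delta$, so at termination the marginal delays of the boosted tasks agree only within one $\delta$-increment and the surplus need not be an exact multiple of $\delta$. I would close this either (i) under the natural convention that $\delta$ is small and divides $b_i+d-\sum_j L_{i,j}$, in which case the greedy increments bracket the water-filling level $\lambda$ and the lattice and continuous optima coincide; or (ii) by a limiting argument: bounding the loss from using $f'(x)$ in place of the true increment $f(x+\delta)-f(x)$ by $O(\delta^2\sup|f''|)$ via Taylor on the compact feasible set, telescoping over the at most $(b_i+d-\sum_j L_{i,j})/\delta$ steps to a total loss $O(\delta)$, and letting $\delta\to 0$ so that the lattice optimizer converges to the unique continuous optimizer of~(\ref{eq:min_delay-revised2}). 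Everything else then follows routinely from the strict convexity established in Appendix~1.
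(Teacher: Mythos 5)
Your proposal is correct in substance but takes a genuinely different route from the paper. The paper's own proof (Appendix~2) is a single global exchange argument by contradiction: assuming Phase~II's output $\vec{g}$ differs from the optimum $\vec{o}$ (unique by Lemma~\ref{lem:singleClient-1}), it splits the tasks into over- and under-allocated sets $M$ and $L$, expresses the two objective gaps relative to the common part $\vec{c}$ as integrals of $f'$, and uses $f'<0$, $f''>0$ together with the greedy selection rule (which yields $f'_{g\max}\le f'_{o\min}$) to conclude $v_g<v_o$, contradicting $v_g>v_o$. You instead characterize the optimizer explicitly via KKT as a water-filling profile, prove by induction that each greedy step preserves optimality among $\delta$-lattice allocations, and then pass to the continuum. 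Both arguments rest on exactly the same analytic facts (negative, strictly increasing $f'$ from the strict convexity in Appendix~1), so neither is more general; what your route buys is an explicit description of the optimum (the threshold $\lambda$), a per-step invariant, and an honest accounting of the $\delta$-discretization, which the paper silently elides --- its key inequality $f'_{g\max}\le f'_{o\min}$ is, strictly speaking, only exact in the regime $\delta\to 0$ that you make explicit. One caveat: your resolution (i) overclaims, since even when $\delta$ divides the surplus the continuous water-filling point need not lie on the $\delta$-lattice, so the lattice and continuous optima do not literally coincide; it is your resolution (ii), the $O(\delta)$ error bound followed by $\delta\to 0$, that actually closes the argument, and it is no weaker than what the paper's own proof implicitly assumes.
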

\begin{proof}
See Appendix~2.
\end{proof}
%
\begin{lemma}
\label{lem:singleClient-3}
Phase III of Algorithm~\ref{alg:SingleClient} converts a solution of 
the optimization problem defined in (\ref{eq:min_delay-revised2})
into a solution of the optimization problem defined in (\ref{eq:min_delay-revised}).
\end{lemma}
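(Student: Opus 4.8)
\textbf{Proof proposal for Lemma~\ref{lem:singleClient-3}.}
The plan is to exploit that the optimization problem (\ref{eq:min_delay-revised}) is a \emph{restriction} of (\ref{eq:min_delay-revised2}): the two problems have the same objective $\sum_{j=1}^{n_i} f(s_{i,j},i,j)$, which depends on the variables $b_{i,j}, d_{i,j}$ only through the sums $s_{i,j}=b_{i,j}+d_{i,j}$, while (\ref{eq:min_delay-revised}) additionally insists that each $s_{i,j}$ be split as $b_{i,j}+d_{i,j}$ with $d_{i,j}\geq\hat{c}_{i,j}$, $b_{i,j}\geq c_{i,j}+\frac{c_{i,j}\hat{c}_{i,j}}{c_{i,j}+\hat{c}_{i,j}}$, $\sum_j d_{i,j}=d$, and $\sum_j b_{i,j}=b_i$. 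Since every point feasible for (\ref{eq:min_delay-revised}) maps, via $s_{i,j}=b_{i,j}+d_{i,j}$, to a point feasible for (\ref{eq:min_delay-revised2}) of equal objective value, the optimal value of (\ref{eq:min_delay-revised}) is at least that of (\ref{eq:min_delay-revised2}). It therefore suffices to show that, fed the unique optimizer $\{s^*_{i,j}\}$ of (\ref{eq:min_delay-revised2}) produced by Phases~I--II (Lemmas~\ref{lem:singleClient-1} and \ref{lem:singleClient-2}), Phase~III outputs $\{b_{i,j},d_{i,j}\}$ that is feasible for (\ref{eq:min_delay-revised}) and keeps $b_{i,j}+d_{i,j}=s^*_{i,j}$ for all $j$: such an output attains objective value $\sum_j f(s^*_{i,j},i,j)$, which equals the optimal value of (\ref{eq:min_delay-revised2}) and hence of (\ref{eq:min_delay-revised}), so it is a solution of (\ref{eq:min_delay-revised}).

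First I would verify that Phase~III is well defined and halts after exactly $n_i$ iterations of its loop: constraint (\ref{eq:constraint-revised2}) guarantees that the cap $s^*_{i,j}-\hat{c}_{i,j}-(c_{i,j}+\frac{c_{i,j}\hat{c}_{i,j}}{c_{i,j}+\hat{c}_{i,j}})$ used when choosing $x$ is nonnegative, and $x\leq d'$ keeps $d'$ nonnegative throughout (which uses $d\geq\sum_j\hat{c}_{i,j}$, a prerequisite for (\ref{eq:min_delay-revised}) to be feasible at all). The per-task constraints of (\ref{eq:min_delay-revised}) are then immediate from the choice of $x$: $d_{i,j}=\hat{c}_{i,j}+x\geq\hat{c}_{i,j}$; $b_{i,j}=s^*_{i,j}-d_{i,j}\geq c_{i,j}+\frac{c_{i,j}\hat{c}_{i,j}}{c_{i,j}+\hat{c}_{i,j}}$, because $x$ is capped by exactly that slack; and $b_{i,j}+d_{i,j}=s^*_{i,j}$ by construction. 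Note that the order in which Phase~III processes tasks plays no role here --- it serves only the secondary goal of reclaiming the server's deposit as early as possible.

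The substantive step is the global balance $\sum_j d_{i,j}=d$, which together with $\sum_j s^*_{i,j}=b_i+d$ forces $\sum_j b_{i,j}=b_i$. I would establish it by comparing the amount $d'=d-\sum_j\hat{c}_{i,j}$ that Phase~III must still distribute with the total slack $\sum_j\big(s^*_{i,j}-\hat{c}_{i,j}-c_{i,j}-\frac{c_{i,j}\hat{c}_{i,j}}{c_{i,j}+\hat{c}_{i,j}}\big)$ available across the tasks. Using $\sum_j s^*_{i,j}=b_i+d$, this total slack equals $d'+\big(b_i-\sum_j(c_{i,j}+\frac{c_{i,j}\hat{c}_{i,j}}{c_{i,j}+\hat{c}_{i,j}})\big)$, whose second term is nonnegative precisely because the instance is feasible, i.e.\ the client's budget $b_i$ covers the minimal per-task budget of (\ref{eq:budget-constraint}) for every task. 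Hence the greedy placement of $d'$ never saturates all the tasks' slack before $d'$ itself reaches $0$: each iteration either drives $d'$ to $0$ (after which every remaining iteration assigns $x=0$) or saturates one task's slack, and since the cumulative slack is at least $d'$ the first event must occur, so the loop ends with $d'=0$. I expect this slack-versus-$d'$ accounting --- and in particular handling the equality boundary case cleanly --- to be the only nonroutine part of the argument; everything else follows directly from the caps wired into Phase~III.
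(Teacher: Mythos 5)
Your proposal is correct and follows essentially the same route as the paper's Appendix~3: since the objective depends on $b_{i,j},d_{i,j}$ only through the sums $s_{i,j}$ and (\ref{eq:min_delay-revised}) is a restriction of (\ref{eq:min_delay-revised2}), it suffices to show Phase~III produces a feasible split preserving $b_{i,j}+d_{i,j}=s_{i,j}$, which is exactly the paper's argument. The only difference is that you explicitly verify the feasibility/termination claim (the slack-versus-$d'$ accounting showing the loop ends with $d'=0$, hence $\sum_j d_{i,j}=d$ and $\sum_j b_{i,j}=b_i$), which the paper asserts as obvious — a useful elaboration, not a different proof.
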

\begin{proof}
See Appendix~3.
\end{proof}
Based on the above lemmas, we therefore have the following theorem:
\begin{theorem}\label{theo:singleClientOpt}
Algorithm~\ref{alg:SingleClient} finds a solution of 
the optimization problem defined in (\ref{eq:min_delay-revised}).
\end{theorem}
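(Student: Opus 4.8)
The plan is to obtain Theorem~\ref{theo:singleClientOpt} by chaining Lemmas~\ref{lem:singleClient-1}, \ref{lem:singleClient-2} and~\ref{lem:singleClient-3}, together with one linking observation: the program in~(\ref{eq:min_delay-revised2}) is a genuine \emph{relaxation} of~(\ref{eq:min_delay-revised}), so the optimal value of~(\ref{eq:min_delay-revised2}) is a lower bound on that of~(\ref{eq:min_delay-revised}); and since Phase~III of Algorithm~\ref{alg:SingleClient} never alters the sums $s_{i,j}=b_{i,j}+d_{i,j}$, the feasible point of~(\ref{eq:min_delay-revised}) that the algorithm returns has objective value exactly equal to this lower bound, and is therefore optimal.

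In detail, I would argue in the following order. By Lemma~\ref{lem:singleClient-1}, the relaxed program~(\ref{eq:min_delay-revised2}) has a unique minimizer; denote it $(s_{i,j}^{\ast})_{j=1}^{n_i}$. By Lemma~\ref{lem:singleClient-2}, Phases~I and~II of Algorithm~\ref{alg:SingleClient} terminate with $s_{i,j}=s_{i,j}^{\ast}$ for every $j$. Next I would record that every point feasible for~(\ref{eq:min_delay-revised}) is also feasible for~(\ref{eq:min_delay-revised2}): by~(\ref{eq:min_delay-c3-c4}) we get $d_{i,j}\geq\hat c_{i,j}$ and $b_{i,j}\geq c_{i,j}+\frac{c_{i,j}\hat c_{i,j}}{c_{i,j}+\hat c_{i,j}}$, so $s_{i,j}=b_{i,j}+d_{i,j}$ satisfies~(\ref{eq:constraint-revised2}); and since the two objective functions coincide as soon as $s_{i,j}=b_{i,j}+d_{i,j}$, the optimal value of~(\ref{eq:min_delay-revised2}) is at most the optimal value of~(\ref{eq:min_delay-revised}). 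Finally, by Lemma~\ref{lem:singleClient-3}, Phase~III converts $(s_{i,j}^{\ast})$ into $(b_{i,j}^{\ast},d_{i,j}^{\ast})$ that is feasible for~(\ref{eq:min_delay-revised}) with $b_{i,j}^{\ast}+d_{i,j}^{\ast}=s_{i,j}^{\ast}$, so its objective value in~(\ref{eq:min_delay-revised}) equals $\sum_{j}f(s_{i,j}^{\ast},i,j)$, i.e.\ the optimal value of~(\ref{eq:min_delay-revised2}). A feasible solution whose value does not exceed the optimum must be optimal, so $(b_{i,j}^{\ast},d_{i,j}^{\ast})$ solves~(\ref{eq:min_delay-revised}), as claimed.

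The real work is not in this theorem — it has been delegated to Lemma~\ref{lem:singleClient-2} and its appendix — but in justifying that the \emph{discretized, greedy} distribution of the residual fund in Phase~II reproduces the continuous minimizer of a strictly convex program. The two points I would want pinned down there are: (i) strict convexity of each $f(\cdot,i,j)$ makes the marginal benefit $-f'(\cdot,i,j)$ strictly decreasing, so the rule ``assign the next unit to the task of currently smallest $f'$'' equalizes marginal benefits across tasks and therefore matches the first-order (KKT) conditions of~(\ref{eq:min_delay-revised2}); and (ii) the unit size $\delta$ must either divide the residual budget exactly or be taken to $0$, for otherwise the greedy pass only attains the optimum up to an $O(\delta)$ error. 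I would also make explicit the two ``sufficient fund'' hypotheses that are used silently above and inside Lemma~\ref{lem:singleClient-3}, namely $d\geq\sum_{j}\hat c_{i,j}$ and $b_i\geq\sum_{j}\bigl(c_{i,j}+\frac{c_{i,j}\hat c_{i,j}}{c_{i,j}+\hat c_{i,j}}\bigr)$, since these are exactly what guarantees that Phase~I yields a feasible initialization and that Phase~III can hand out all of $d'=d-\sum_j\hat c_{i,j}$ without violating $b_{i,j}^{\ast}\geq c_{i,j}+\frac{c_{i,j}\hat c_{i,j}}{c_{i,j}+\hat c_{i,j}}$.
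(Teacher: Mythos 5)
Your proposal is correct and matches the paper's own route: Theorem~\ref{theo:singleClientOpt} is obtained exactly by chaining Lemmas~\ref{lem:singleClient-1}--\ref{lem:singleClient-3}, and your linking observation (that~(\ref{eq:min_delay-revised2}) is a relaxation of~(\ref{eq:min_delay-revised}) while Phase~III preserves the sums $s_{i,j}=b_{i,j}+d_{i,j}$, so the returned feasible point attains the relaxed optimum) is precisely the argument the paper places in Appendix~3. Your side remarks about the unit size $\delta$ and the sufficient-fund hypotheses are fair points about what the lemmas implicitly assume, but they do not change the fact that your proof of the theorem itself coincides with the paper's.
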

Finally, we also prove in Appendix~4 the following theorem: 
\begin{theorem}\label{theo:singleClientNE}
Algorithm~\ref{alg:SingleClient} finds a Nash equilibrium of
the game between server $S$ and client $C_i$.
\end{theorem}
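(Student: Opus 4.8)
The plan is to show that the profile output by Algorithm~\ref{alg:SingleClient} is simultaneously a best response for the client and for the server, exploiting the observation (already noted informally in Section~\ref{sec:single_clt}) that the two parties' optimization problems have, up to an additive constant, the same objective. Write $(b^\ast,d^\ast)$ for the algorithm's output and let $V(b,d)=\sum_{j=1}^{n_i}T_{i,j}(b_{i,j},d_{i,j})=\sum_{j=1}^{n_i}f(b_{i,j}+d_{i,j},i,j)$ be the client's total expected verification time. By Theorem~\ref{theo:singleClientOpt}, $(b^\ast,d^\ast)$ minimizes $V$ over the set of profiles satisfying constraints (\ref{eq:min_delay-c1-c2}) and (\ref{eq:min_delay-c3-c4}). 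The first step is to note that this set is a Cartesian product $\mathcal F_b\times\mathcal F_d$, where $\mathcal F_b=\{(b_{i,1},\dots,b_{i,n_i}):\sum_j b_{i,j}=b_i,\ b_{i,j}\ge c_{i,j}+\tfrac{c_{i,j}\hat c_{i,j}}{c_{i,j}+\hat c_{i,j}}\}$ is the client's feasible action set and $\mathcal F_d=\{(d_{i,1},\dots,d_{i,n_i}):\sum_j d_{i,j}=d,\ d_{i,j}\ge\hat c_{i,j}\}$ the server's, since no constraint couples a $b$-variable to a $d$-variable. On $\mathcal F_b\times\mathcal F_d$ the wage $w(b_{i,j},d_{i,j})$ is well defined, because $b_{i,j}+d_{i,j}\ge c_{i,j}+\hat c_{i,j}\ge 2\sqrt{c_{i,j}\hat c_{i,j}}$ by AM--GM, so it suffices to reason inside this set.

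Next I would collapse both utilities onto $V$. The client's utility $U_{c,i}$ is exactly $V$, which it minimizes. For the server, $T_{i,j}=b_{i,j}-w(b_{i,j},d_{i,j})$ and $\sum_j b_{i,j}=b_i$ give $U_s=\sum_j w(b_{i,j},d_{i,j})=b_i-V(b,d)$, so maximizing $U_s$ over $\mathcal F_d$ (with the client's action held fixed) is the same as minimizing $V$. Thus, restricted to $\mathcal F_b\times\mathcal F_d$, the game is a common-interest (potential) game with potential $V$: given the opponent's action, each player prefers whatever makes $V$ smaller.

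Finally I would invoke the elementary fact that a global minimizer over a product set is a partial minimizer in each coordinate block, and translate it back through the utility definitions. For any client deviation $b\in\mathcal F_b$, $(b,d^\ast)\in\mathcal F_b\times\mathcal F_d$, hence $U_{c,i}(d^\ast,b)=V(b,d^\ast)\ge V(b^\ast,d^\ast)=U_{c,i}(d^\ast,b^\ast)$, i.e.\ the client cannot shorten its verification time. For any server deviation $d\in\mathcal F_d$, $(b^\ast,d)\in\mathcal F_b\times\mathcal F_d$, hence $U_s(d,b^\ast)=b_i-V(b^\ast,d)\le b_i-V(b^\ast,d^\ast)=U_s(d^\ast,b^\ast)$, i.e.\ the server cannot increase its wage. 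Since neither party can profitably deviate, $(b^\ast,d^\ast)$ is a Nash equilibrium of the game between $S$ and $C_i$.

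The algebraic rewriting of $U_s$ and the AM--GM check are routine; the only step needing care is the first one — confirming that the constraint system of (\ref{eq:min_delay-revised}) truly decouples into conditions on $b$ alone and on $d$ alone, so that a unilateral deviation by either player stays in the region on which Theorem~\ref{theo:singleClientOpt} (through Lemmas~\ref{lem:singleClient-1}--\ref{lem:singleClient-3}) guarantees optimality. A related point worth stating explicitly is that Phase~III's split of the optimal sum-allocation $\{s^\ast_{i,j}\}$ into $(b^\ast,d^\ast)$ need not be unique; this is harmless, since $V$ depends only on the sums $s_{i,j}=b_{i,j}+d_{i,j}$ and every feasible unilateral deviation still yields a feasible sum-allocation, so optimality of $\{s^\ast_{i,j}\}$ is exactly what the two best-response inequalities use.
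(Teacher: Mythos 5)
Your proposal is correct and follows essentially the same route as the paper: both arguments rest on the observation that the server's wage and the client's verification time sum to the constant $b_i$, so the algorithm's output — which optimizes the common objective in the sums $s_{i,j}=b_{i,j}+d_{i,j}$ over the (decoupled) feasible sets — cannot be improved by a unilateral deviation of either player. The only cosmetic difference is that the paper phrases the deviation comparisons through the \emph{unique} optimum of the relaxed problem \eqref{eq:min_delay-revised2} (yielding strict inequalities), whereas you package the same content as a potential-game/product-set argument with weak inequalities, which suffices for the Nash equilibrium definition.
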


\comment{
\begin{lemma}
\label{lem:alg1-proof}
Algorithm~\ref{alg:SingleClient} solves the optimization problem defined in (\ref{eq:min_delay-revised}).
\end{lemma}
\begin{proof}
Based on Appendix~\ref{appendix:single_client_unique_point}, $f(x, i, j)$ has unique optimum point. 
In order to prove Algorithm~\ref{alg:SingleClient} solves the optimization problem that is to prove the Phase II greedy algorithm leads to the unique optimum point. 
We prove it by contradiction.

Assume the optimal allocation strategy for the remaining fund is $O=\{o_{i,1},o_{i,2},\cdots,o_{i,n_i}\}$, the greedy algorithm in Phase II leads to strategy $G=\{g_{i,1},g_{i,2},\cdots,g_{i,n_i}\}$. 
For contradiction, we assume that $G \neq O$. Then there exist some tasks (at least one) such that they receive more/less budget and deposit allocation than the optimal solution. 
Let $M$/$L$ denote the sets of such tasks respectively. Let $C= \{c_{i,1}, c_{i,2},\dots,c_{i,n_i}\}$ be the common parts of $G$ and $O$. 
That is $c_j = min(g_{i,j}, o_{i,j})$ for $j\in\{1,\cdots,n_i\}$. 
Let $V_C$, $V_G$ and $V_O$ be the values of the objective function of \eqref{eq:min_delay-revised} given by allocation $C$, $G$ and $O$ respectively. 
Notice that $O$ is the unique optimal solution of \eqref{alg:SplitDeposit}. This implies that
\begin{equation}
\label{eq:V_O and V_G}
    V_G > V_O. 
\end{equation}
We define $diff_{C,O}$ and $diff_{C,G}$ as follows:
\begin{equation}
    diff_{C,O} = V_O - V_C = \sum _{t_{i,j} \in L} \int_{c_{i,j}}^{o_{i,j}} f'(x,i,j) \,dx, \nonumber
\end{equation}
\begin{equation}
    diff_{C,G} = V_G - V_C = \sum _{t_{i,j} \in M} \int_{c_{i,j}}^{g_{i,j}} f'(x,i,j) \,dx. \nonumber 
\end{equation}

And $f'(x,i,j)$ is the partial derivative function of $f(x,i,j)$ which has
\begin{equation}
\label{eq:f(x)_first_derivative}
   f'(x,i,j)=\frac{1}{2}(1-\frac{x}{\sqrt{x^2-4c_{i,j}\hat{c}_{i,j}}})<0 
\end{equation}
It indicates that $f(x,i,j)$ is decreasing monotonously. 
Let $f''(x,i,j)$ be the partial derivative function of $f'(x,i,j)$ and it holds
 \begin{equation}
 \label{eq:f(x)_second_derivative}
     f''(x,i,j)=\frac{2c_{i,j}\hat{c}_{i,j}}{{\sqrt{x^2-4c_{i,j}\hat{c}_{i,j}}}^3}>0.
 \end{equation}
  In other words, $f(x,i,j)$ decreases monotonically but slower and slower as $x$ increases (see Fig \ref{fig:f(x,i,j)}). 
\begin{figure}[htb!]
    \centering
    \includegraphics[width=\columnwidth]{cpt-outsourcing/f(x,i,j).png}
    \caption{Illustration of $f(x, i, j)$.}
    \label{fig:f(x,i,j)}
\end{figure}

According to \eqref{eq:f(x)_first_derivative} and \eqref{eq:f(x)_second_derivative}, $f'(x,i,j)$ increases as $x_{i,j}$ increases. Therefore, for each task $t_{i,j} \in L$, $f'(x,i,j) > f'(c_{i,j},i,j)$ for $c_{i,j} < x \leq o_{i,j}$. Let $f'_{omin} = min(f'(c_{i,j},i,j))$ for all tasks $t_{i,j} \in L$, then we have 
\begin{align}
\label{eq:diff_{C,O}}
    diff_{C,O} &= \sum _{t_{i,j} \in L} \int_{c_{i,j}}^{o_{i,j}} f'(x,i,j) \,dx \nonumber\\
    & > \sum_{t_{i,j} \in L} \int_{c_{i,j}}^{o_{i,j}} f'(c_{i,j},i,j) \,dx \nonumber\\
    & = \sum_{t_{i,j} \in L} ((o_{i,j} - c_{i,j}) \times f'(c_{i,j},i,j)) \nonumber\\
    & \geq  \sum_{t_{i,j} \in L} ((o_{i,j} - c_{i,j}) \times f'_{omin} \nonumber\\
    &= \Delta \times f'_{omin},
\end{align}
where $\Delta = \sum_{t_{i,j} \in L} ((o_{i,j} - c_{i,j})$. 
Notice that the total amount of budget and deposit is fixed, we have $\Delta = \sum_{t_{i,j} \in L} (o_{i,j} - c_{i,j}) = \sum_{t_{i,j} \in M} (g_{i,j} - c_{i,j})$. 
Meanwhile, let $f'_{gmax} = max(f'(x,i,j))$ for all tasks $t_{i,j} \in M$ and $c_{i,j} \leq x \leq g_{i,j}$. 
Since our greedy algorithm will pick the task with the minimal first derivative at each step and it chooses tasks in $M$ over tasks in $L$, we have $f'_{gmax} \leq f'_{omin}$. Therefore we have 
\begin{align}
\label{eq:diff_{C,G}}
    diff_{C,G} &= \sum _{t_{i,j} \in M} \int_{c_{i,j}}^{g_{i,j}} f'(x,i,j) \,dx \nonumber \\
    & < \sum_{t_{i,j} \in M} \int_{c_{i,j}}^{g_{i,j}} f'_{gmax} \,dx \nonumber\\
    & = \sum_{t_{i,j} \in L} ((o_{i,j} - c_{i,j}) \times f'_{gmax} \nonumber\\
    & = \Delta \times f'_{gmax} \nonumber \\
    & \leq \Delta \times f'_{omin}\nonumber\\
    & < diff_{C,O},
\end{align}
which means
\begin{equation}
    V_G < V_O,
\end{equation}
since $V_G = V_C + diff_{C, G}$ and $V_O = V_C + diff_{C,O}$ . However, this result contradicts with \eqref{eq:V_O and V_G}. Therefore, our assumption is incorrect.  

That is, the $s_{i,j}$ derived from Phase II in algorithm~\ref{alg:SingleClient} produces minimal total for client. In Phase III, the strategy to split budget and deposit meets the constraints $s_{i,j}=b_{i,j}+d_{i,j}$, ~(\ref{eq:min_delay-c1}),~(\ref{eq:min_delay-c2}),~(\ref{eq:min_delay-c3}),~and~(\ref{eq:min_delay-c4}). Therefore, Algorithm~\ref{alg:SingleClient} solves the problem (\ref{eq:min_delay-revised}).

\end{proof}
\begin{corollary}
\label{cly:singleClient}
Algorithm~\ref{alg:SingleClient} solves the optimization problem defined in (\ref{eq:max_wage}).
\end{corollary}

\begin{theorem}
Algorithm~\ref{alg:SingleClient} finds a Nash Equilibrium of the game between server $S$ and one client.
\end{theorem}
\begin{proof}
In order to prove Algorithm~\ref{alg:SingleClient} finds a Nash Equilibrium, we need to prove that neither client nor server has incentive to change the strategy $A_{c,i}=\{b_{i,1},\cdots,b_{i,n_i}\}$ and $A_{s,i}=\{d_{i,1},\cdots,d_{i,n_i}\}$ generated by algorithm~\ref{alg:SingleClient} if the other player stays the same. 
As proved in Lemma~\ref{alg:SingleClient} and Appendix~\ref{appendix:single_client_unique_point}, algorithm~\ref{alg:SingleClient} outputs optimum point $\{s_{i,1},\cdots,s_{i,n_1}\}$ for the optimization problem (\ref{eq:min_delay-revised}) where $s_{i,j}=b_{i,j}+d_{i,j}$ for $j\in\{1,\cdots,n_i\}$ and the optimum point is unique.

If server keeps the strategy $A_{s,i}=\{d_{i,1},\cdots,d_{i,n_i}\}$, client changes its strategy from $A_{c,i}=\{b_{i,1},\cdots,b_{i,n_i}\}$ to $A'_{c,i}=\{b'_{i,1},\cdots,b'_{i,n_i}\}$ where $A_{c,i}\neq A'_{c,i}$. 
Assume $\{s'_{i,1},\cdots,s'_{i,n_i}\}$ is the point generated under the strategy $A_{s,i}$ and $A'_{c,i}$ where $s'_{i,j}=b'_{i,j}+d_{i,j}$ for $j\in\{1,\cdots,n_i\}$.
That is, there exists at least one task $t_{i,j}$ which has $b'_{i,j}\neq b_{i,j}$ and $s_{i,j}\neq s'_{i,j}$.
Then the utility for client under strategy set $(A_{s,i}, A'_{c,i})$ is
$U'_{c,i}=\sum_{j=1}^{n_i}\frac{s'_{i,j}-\sqrt{{s'_{i,j}}^2-4c_{i,j}\hat{c}_{i,j}}}{2}$ which is greater than $U_{c,i}=\sum_{j=1}^{n_i}\frac{s_{i,j}-\sqrt{s_{i,j}^2-4c_{i,j}\hat{c}_{i,j}}}{2}$, the unique minimal value produced by algorithm~\ref{alg:SingleClient}.
That is, deviate from $\{d_{i,j},\cdots,d_{i,n_i}\}$ will not produce lower total checking time for client.

If server deviates its strategy from $A_{s,i}$ to $A'_{s,i}=\{d'_{i,1},\cdots,d'_{i,n_i}\}$ where $A_{s,i}\neq A'_{s,i}$ and client maintains strategy $A_{c,i}=\{b_{i,1},\cdots,b_{i,n_i}\}$, 
then the total wage for server on actions set $(A'_{s,i}, A_{c,i})$ is $U'_s=\sum_{j=1}^{n_i}b_{i,j}-\sum_{j=1}^{n_i}\frac{s'_{i,j}-\sqrt{{s'_{i,j}}^2-4c_{i,j}\hat{c}_{i,j}}}{2}$ where $s'_{i,j}=d'_{i,j}+b_{i,j}$ for $j\in\{1,\cdots,n_i\}$. 
The total wage for server under strategy $(A_{s,i}, A_{c,i})$ is $U_{s} = \sum_{j=1}^{n_i}b_{i,j}-\sum_{j=1}^{n_i}\frac{s_{i,j}-\sqrt{{s_{i,j}}^2-4c_{i,j}\hat{c}_{i,j}}}{2}$.
Since $\{s_{i,1},\cdots,s_{i,n_1}\}$ are unique optimum points for the optimization problem (\ref{eq:min_delay-revised}), then $U'_s<U_s$ which indicates that server has no incentive to deviate from $A_{s,i}$.

Therefore, the strategy generated by algorithm~\ref{alg:SingleClient} is a Nash Equilibrium. 

\end{proof}
}


\section{Multiple Clients with Static Task Sets}

In this section, 
we study the optimization problem in the context that
server $S$ services $m$ clients $C_1,\cdots,C_m$.
Different from the previous context of single client,
optimizing for the server's utility and 
for each client's utility are not equivalent.
So we cannot solve it in one step.
Instead, we tackle the problem in two steps:
we first optimize for the server's utility,
which produces an allocation of the server's deposits to the clients; 
then, we optimize for each client's utility
based on the client's budget and 
the deposit allocated by the server. 
%

\subsection{Algorithm}


We propose an algorithm, 
formally presented in Algorithm~\ref{alg:MultiClientStatic},
which runs in the following two steps.

First,
we solve the server's optimization problem,
which produces the optimal allocation of 
the server's deposits to the clients
that maximizes the server's wages.
Thus, the optimization problem can be defined as follows:
\begin{eqnarray}\label{eq:max_wage_multiple}
        &\max&\sum_{i=1}^{m}\sum_{j=1}^{n_i}w(b_{i,j},d_{i,j})\label{eq:max_wage_multiple-objective}\\
        &where&~w(x,y)~\mbox{is defined as in (\ref{eq:wage})}\nonumber\\
        &s.t.&~\mbox{constraints}~(\ref{eq:min_delay-c1-c2})~\mbox{and}~ (\ref{eq:min_delay-c3-c4}).\nonumber
\end{eqnarray}
Because
\begin{eqnarray}
    &&\sum_{i=1}^{m}b_i - \sum_{i=1}^{m}\sum_{j=1}^{n_i}w(b_{i,j},d_{i,j}) \nonumber\\
    &=&\sum_{i=1}^{m}\sum_{j=1}^{n_i} [b_{i,j} - w(b_{i,j},d_{i,j})] \nonumber\\
    &=&\sum_{i=1}^{m}\sum_{j=1}^{n_i} \frac{b_{i,j}+d_{i,j}-\sqrt{(b_{i,j}+d_{i,j})^2-4c_{i,j}\hat{c}_{i,j}}}{2} \nonumber\\
    &=&\sum_{i=1}^{m}\sum_{j=1}^{n_i} f(b_{i,j}+d_{i,j},i,j), \nonumber
\end{eqnarray}
the objective function of the above optimization problem, i.e., (\ref{eq:max_wage_multiple-objective}),
is equivalent to
$\min~\sum_{i=1}^{m}\sum_{j=1}^{n_i} f(b_{i,j}+d_{i,j},i,j)$.
Furthermore, let $s_{i,j}=b_{i,j}+d_{i,j}$, and then
the above optimization problem can be converted to: 
\begin{eqnarray}\label{eq:max_wage_multiple-revised}
        &\min&\sum_{i=1}^{m}\sum_{j=1}^{n_i} f(s_{i,j},i,j)\label{eq:max_wage_multiple-objective-revised}\\
        &where&~f(x,i,j)~\mbox{is defined as in (\ref{eq:fx})},\nonumber\\
        &s.t.& \nonumber\\
        &&s_{i,j}\geq c_{i,j}+\frac{c_{i,j}\hat{c}_{i,j}}{c_{i,j}+\hat{c}_{i,j}}+\hat{c}_{i,j},
            \label{eq:max_wage_multiple-revised-c1}\\
        &&\sum_{j=1}^{n_i}s_{i,j}\geq b_i +\sum_{j=1}^{n_i}\hat{c}_{i,j},
            \label{eq:max_wage_multiple-revised-c2}\\
        &&\sum_{i=1}^{m}\sum_{j=1}^{n_i}s_{i,j}=\sum_{i=1}^{m}b_i+d.
            \label{eq:max_wage_multiple-revised-c3}
\end{eqnarray}
Here, constraints
\eqref{eq:max_wage_multiple-revised-c1},
\eqref{eq:max_wage_multiple-revised-c2}
and \eqref{eq:max_wage_multiple-revised-c3}
are derived from constraints
\eqref{eq:min_delay-c1-c2} and
\eqref{eq:min_delay-c3-c4}.
%
This optimization problem can be solved in three phases,
as formally presented in Algorithm~\ref{alg:SplitDeposit} 
and intuitively explained as follows:

    {\em Phase I: Initial allocation of clients' budgets and the server's deposits}. 
    The same as in Phase I of Algorithm~\ref{alg:SingleClient},
    each task $t_{i,j}$ is initially allocated with 
    a deposit of $\hat{c}_{i,j}$ and 
    a budget of $c_{i,j}+\frac{c_{i,j}\hat{c}_{i,j}}{c_{i,j}+\hat{c}_{i,j}}$,
    in order to satisfy constraints 
    \eqref{eq:min_delay-c3-c4}.
    
    {\em Phase II: Greedy allocation of each client's remaining budget}.
    After Phase I, each client $C_i$ may have remaining budget, denoted as
    $b'_i = b_i -\sum_{j=1}^{n_i}(c_{i,j}+\frac{c_{i,j}\hat{c}_{i,j}}{c_{i,j}+\hat{c}_{i,j}})$. 
    This remaining budget should be allocated step-by-step to 
    the tasks of this client greedily. That is, 
    similar to Phase II of Algorithm~\ref{alg:SingleClient},
    $b'_i$ is divided into small units. With each step,
    the client whose utility function (a monotonically-decreasing function) 
    has the minimal first-derivative value
    (i.e., the utility will drop the fastest), 
    is allocated with one more unit from $b'_i$.
    This procedure continues until all units are allocated.
    
    {\em Phase III: Greedy allocation of the server's remaining deposits}.
    In this phase, the remaining deposits at the server 
    is allocated to all the clients' tasks step-by-step 
    in the greedy manner, similar to Phase II. 
    
Note that, 
Algorithm~\ref{alg:SplitDeposit} also includes Phase IV,
which computes each $d_i$, 
the total deposit allocated to the tasks of $C_i$.
Each $d_i$ should be shared with $C_i$ 
for the client to find out the optimal allocation of its budget
in the next step. 

\begin{algorithm}[htb]
\caption{Optimal Splitting of Deposit (Server $S$ v.s. Client $C_i$, $i=1,\cdots,m$, with Static Task Set)}\label{alg:SplitDeposit}

{\bf Input:}
\begin{itemize}
\item
$b_i$: total budget of each client $C_i$;
\item
$d$: total deposit of server $S$;
\item
$n_i$: total number of tasks from each $C_i$;
\item
task set $\{t_{1,1},\cdots,t_{1,n_i},\cdots,t_{m,1},\cdots,t_{m,n_m}\}$ and
associated costs
$\{c_{1,1},\cdots,c_{1,n_1},\cdots,c_{m,1},\cdots,c_{m,n_m}\}$ and
$\{\hat{c}_{1,1},\cdots,\hat{c}_{1,n_1},\cdots,\hat{c}_{m,1},\cdots,\hat{c}_{m,n_m}\}$.
\end{itemize}

{\bf Output:} deposit $d_i$ allocated to each client $C_i$. \\

{\bf Phase I:} Initialization.

\begin{algorithmic}[1]
\For{$i\in\{1,\cdots,m\}$}
    \For{$j\in\{1,\cdots,n_i\}$}
        \State
        $s_{i,j}\leftarrow (\hat{c}_{i,j}+c_{i,j}+\frac{c_{i,j}\hat{c}_{i,j}}{c_{i,j}+\hat{c}_{i,j}})$
    \EndFor
\EndFor
\end{algorithmic}

~~\\
{\bf Phase II:} Greedy Allocation of Clients' Remaining Budgets.
\begin{algorithmic}[1]
\For{$i\in\{1,\cdots,m\}$}
    \State $b'_i\leftarrow b_i-\sum_{j=1}^{n_i}(c_{i,j}+\frac{c_{i,j}\hat{c}_{i,j}}{c_{i,j}+\hat{c}_{i,j}})$
    \While{$b'_i\geq\delta$} 
    \State $j*=\operatorname*{arg\,min}_{j\in\{1,\cdots,n_i\}} f'(s_{i,j},i,j)$.
    \State $s_{i,j*}\leftarrow (s_{i,j*}+\delta)$;~$b'_i\leftarrow (b'_i-\delta)$
    \EndWhile
\EndFor
\end{algorithmic}

~~\\
{\bf Phase III:} Greedy Allocation of Remaining Deposit to tasks.
\begin{algorithmic}[1]
\State $d'\leftarrow d-\sum_{i=1}^{m}\sum_{j=1}^{n_i}d_{i,j}$
\State $TS=\{(1,1),\cdots,(1,n_1),\cdots,(m,1),\cdots,(m,n_m)\}$
\While{$d'\geq\delta$}
    \State $(i*,j*)=\operatorname*{arg\,min}_{(i,j)\in TS} f'(s_{i,j},i,j)$
    \State $s_{i*,j*}\leftarrow (s_{i*,j*}+\delta)$;~$d'\leftarrow (d'-\delta)$
\EndWhile
\end{algorithmic}

~~~\\
{\bf Phase IV:} Preparing the Output.
\begin{algorithmic}[1]
\For{$i\in\{1,\cdots,m\}$}
    \State $d_i\leftarrow \sum_{j=1}^{n_i}s_{i*,j*} - b_i$ 
\EndFor
\end{algorithmic}
\end{algorithm}

In the second step,
it is already known the server's deposits allocated to the clients.
Because the budget of each client is also known,
each server-client pair can run Algorithm~\ref{alg:SingleClient}, 
presented in the previous section,
to find out the optimal allocation of budget/deposit to the client's tasks
to minimize the client's utility.

\begin{algorithm}[htb]
\caption{Optimal Resource Allocation (Server $S$ v.s. Client $C_i$, $i=1,\cdots,m$, with Static Task Set)}\label{alg:MultiClientStatic}

{\bf Input:}
\begin{itemize}
\item
$b_i$: total budget of each client $C_i$;
\item
$d$: total deposit of server $S$;
\item
$\{n_i\}$: total number of tasks from each $C_i$ for $i=1,\cdots,m$;
\item
task set $\{t_{1,1},\cdots,t_{1,n_i},\cdots,t_{m,1},\cdots,t_{m,n_m}\}$ and
associated costs
$\{c_{1,1},\cdots,c_{1,n_1},\cdots,c_{m,1},\cdots,c_{m,n_m}\}$ and
$\{\hat{c}_{1,1},\cdots,\hat{c}_{1,n_1},\cdots,\hat{c}_{m,1},\cdots,\hat{c}_{m,n_m}\}$.
\end{itemize}

{\bf Output:}
$\{b_{1,1},\cdots,b_{1,n_1},\cdots,b_{m,1},\cdots,b_{m,n_m}\}$ and
$\{d_{1,1},\cdots,d_{1,n_1},\cdots,d_{m,1},\cdots,d_{m,n_m}\}$. \\

\begin{algorithmic}[1]
\State $\{d_1,\cdots,d_m\} \leftarrow$ Algorithm~\ref{alg:SplitDeposit}

\For{$i\in\{1,\cdots,m\}$} \Comment{for each client}
    \State $(\{b_{i,1},\cdots,b_{i,n_i}\},\{d_{i,1},\cdots,d_{i,n_i}\})\leftarrow$ Algorithm~\ref{alg:SingleClient}.
\EndFor
\end{algorithmic}

\end{algorithm}

\subsection{Analysis}

To analyze our proposed solution, 
we prove the following:
First, the optimization problem defined in (\ref{eq:max_wage_multiple-revised}) has only one unique solution.
Second, Algorithm~\ref{alg:SplitDeposit} solves the optimization problem defined in (\ref{eq:max_wage_multiple-revised}).
Third, the optimization problem defined in \eqref{eq:max_wage_multiple-revised} is equivalent to 
the one defined in \eqref{eq:max_wage_multiple}.
Finally, the budget and deposit allocation strategy produced by Algorithm~\ref{alg:MultiClientStatic} is a Nash equilibrium. 

\begin{lemma}
\label{lem:multi_clients-1}
The optimization problem defined in (\ref{eq:max_wage_multiple-revised}) has only one unique solution.
\end{lemma}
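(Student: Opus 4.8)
The plan is to show that the optimization problem in \eqref{eq:max_wage_multiple-revised} is a convex program with a \emph{strictly} convex objective on a nonempty feasible set, from which uniqueness of the minimizer follows by a standard argument. First I would observe that the objective $\sum_{i=1}^m\sum_{j=1}^{n_i} f(s_{i,j},i,j)$ is separable, so it suffices to analyze each summand $f(x,i,j)=\frac{x-\sqrt{x^2-4c_{i,j}\hat c_{i,j}}}{2}$ on its relevant domain. This is exactly the single-variable function already examined in the commented-out material: one computes $f'(x,i,j)=\tfrac12\bigl(1-\tfrac{x}{\sqrt{x^2-4c_{i,j}\hat c_{i,j}}}\bigr)<0$ and $f''(x,i,j)=\tfrac{2c_{i,j}\hat c_{i,j}}{(x^2-4c_{i,j}\hat c_{i,j})^{3/2}}>0$, so each $f(\cdot,i,j)$ is strictly convex (and decreasing) wherever it is defined. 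Hence the total objective, being a sum of strictly convex functions of disjoint coordinates, is strictly convex in the full vector $(s_{1,1},\dots,s_{m,n_m})$ — this can be invoked directly from Lemma~\ref{lem:singleClient-1} and its appendix, applied coordinatewise.

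Next I would check the feasible region. The constraints \eqref{eq:max_wage_multiple-revised-c1}, \eqref{eq:max_wage_multiple-revised-c2} and \eqref{eq:max_wage_multiple-revised-c3} are all linear (the first two are half-space inequalities, the last an affine equality), so the feasible set is a polyhedron, hence convex. I would also note it is nonempty: the point $s_{i,j}=\hat c_{i,j}+c_{i,j}+\frac{c_{i,j}\hat c_{i,j}}{c_{i,j}+\hat c_{i,j}}$ satisfies \eqref{eq:max_wage_multiple-revised-c1} with equality and, since $\sum_{i}b_i+d\ge \sum_{i}\sum_j s_{i,j}$ is implied by the budget/deposit constraints being consistent (i.e.\ $d\ge\sum_{i,j}\hat c_{i,j}$, which follows from \eqref{eq:deposit-constraint}), one can scale up the $s_{i,j}$'s to meet \eqref{eq:max_wage_multiple-revised-c3} while preserving the inequalities. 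On this domain the square roots have strictly positive arguments, so $f$ is smooth there and the second-derivative computation above is valid on the whole feasible set.

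With a strictly convex, continuous objective over a nonempty closed convex set, uniqueness is the textbook fact: if $s^{(1)}\ne s^{(2)}$ were two minimizers, then by convexity of the feasible set the midpoint $\bar s=\tfrac12(s^{(1)}+s^{(2)})$ is feasible, and by strict convexity of the objective its value at $\bar s$ is strictly less than $\tfrac12$ the sum of the two (equal) minimal values, contradicting minimality. I would package this as: a strictly convex function attains its minimum on a nonempty convex set at most once, so the solution, if it exists, is unique; existence follows because the objective is continuous and one may intersect the feasible polyhedron with a compact box (each $s_{i,j}$ is bounded below by \eqref{eq:max_wage_multiple-revised-c1} and above by $\sum_i b_i + d$ via \eqref{eq:max_wage_multiple-revised-c3}), making the feasible set compact.

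The only real subtlety — and the step I'd treat most carefully — is confirming that the feasible set is \emph{nonempty} and that it lies strictly inside the domain where $x^2-4c_{i,j}\hat c_{i,j}>0$, so that $f''$ is well-defined and positive everywhere it matters; the budget-constraint choice \eqref{eq:budget-constraint} together with $d_{i,j}\ge\hat c_{i,j}$ is exactly engineered to guarantee $s_{i,j}=b_{i,j}+d_{i,j}\ge c_{i,j}+\hat c_{i,j}+\frac{c_{i,j}\hat c_{i,j}}{c_{i,j}+\hat c_{i,j}}$, and one checks $(c_{i,j}+\hat c_{i,j})^2 > 4c_{i,j}\hat c_{i,j}$ (i.e.\ $(c_{i,j}-\hat c_{i,j})^2\ge 0$), so the discriminant stays nonnegative and in fact positive unless $c_{i,j}=\hat c_{i,j}$, the one degenerate case to handle by a limiting/continuity remark. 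Everything else is the routine convex-analysis argument, so I would keep it brief and point to Appendix~1 for the per-coordinate convexity already established for Lemma~\ref{lem:singleClient-1}.
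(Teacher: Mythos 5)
Your proposal is correct and takes essentially the same route as the paper, which (by reference to the Appendix~1 argument for Lemma~\ref{lem:singleClient-1}) also establishes uniqueness by showing the separable objective has a positive-definite diagonal Hessian, hence is strictly convex, while all constraints are convex; your extra remarks on nonemptiness and existence only add detail the paper leaves implicit. One small simplification: the ``degenerate case'' $c_{i,j}=\hat c_{i,j}$ needs no limiting argument, since constraint \eqref{eq:max_wage_multiple-revised-c1} already forces $s_{i,j}\ge c_{i,j}+\hat c_{i,j}+\frac{c_{i,j}\hat c_{i,j}}{c_{i,j}+\hat c_{i,j}}>2\sqrt{c_{i,j}\hat c_{i,j}}$, so $s_{i,j}^2-4c_{i,j}\hat c_{i,j}>0$ holds strictly on the entire feasible set.
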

\begin{proof}
(sketch).
Similar to the proof of Lemma~\ref{lem:singleClient-1},
we can show that the objective function is strictly convex and every constraint is convex. 
\end{proof}

\begin{theorem}
\label{theo:multi_clients-1}
Phases I, II and III of Algorithm~\ref{alg:SplitDeposit} 
solves the optimization problem defined in (\ref{eq:max_wage_multiple-revised}).
\end{theorem}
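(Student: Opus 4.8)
The plan is to lean on Lemma~\ref{lem:multi_clients-1} and turn optimality into a Karush--Kuhn--Tucker (KKT) check. Since that lemma already gives that the objective $\sum_{i,j}f(s_{i,j},i,j)$ is strictly convex and every constraint \eqref{eq:max_wage_multiple-revised-c1}--\eqref{eq:max_wage_multiple-revised-c3} is affine, the problem has a unique minimizer over the compact polytope it defines, and (no constraint qualification being needed for affine constraints) a feasible point is the minimizer \emph{iff} it satisfies the KKT conditions. First I would form the Lagrangian with multipliers $\mu_{i,j}\ge 0$ for the per-task floors \eqref{eq:max_wage_multiple-revised-c1}, $\nu_i\ge 0$ for the per-client budget bounds \eqref{eq:max_wage_multiple-revised-c2}, and $\lambda\in\mathbb{R}$ for the total-fund equality \eqref{eq:max_wage_multiple-revised-c3}, obtaining the stationarity condition $f'(s_{i,j},i,j)-\mu_{i,j}-\nu_i+\lambda=0$ together with complementary slackness. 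From the fact that $f'(x,i,j)$ is negative, strictly increasing, and bounded on every interval $[L_{i,j},T]$ (where $T=\sum_i b_i+d$) --- which holds because the floor $L_{i,j}=\hat c_{i,j}+c_{i,j}+\tfrac{c_{i,j}\hat c_{i,j}}{c_{i,j}+\hat c_{i,j}}$ is strictly larger than $2\sqrt{c_{i,j}\hat c_{i,j}}$, the branch point of $f$ --- the KKT system reads as a two-level ``water-filling'' picture: inside each client the tasks sitting strictly above their floors share one marginal value $\ell_i$ (floored tasks having marginal value $\ge\ell_i$), and $\ell_i$ coincides with a single global value $\ell:=-\lambda$ for every client whose budget constraint is slack, while $\ell_i\ge\ell$ for every client whose budget constraint is tight.

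Next I would verify that the allocation produced by Phases I--III is feasible and realises exactly this picture. Feasibility is routine: Phase~I puts $s_{i,j}=L_{i,j}$, so \eqref{eq:max_wage_multiple-revised-c1} holds; Phase~II adds precisely $b'_i=b_i-\sum_j(c_{i,j}+\tfrac{c_{i,j}\hat c_{i,j}}{c_{i,j}+\hat c_{i,j}})$ to client~$i$, bringing $\sum_j s_{i,j}$ to $b_i+\sum_j\hat c_{i,j}$ so \eqref{eq:max_wage_multiple-revised-c2} holds with equality; Phase~III distributes the remaining $d-\sum_{i,j}\hat c_{i,j}$ over all tasks, so \eqref{eq:max_wage_multiple-revised-c3} holds, and since Phases~II--III only ever increase the $s_{i,j}$, constraint \eqref{eq:max_wage_multiple-revised-c2} stays satisfied. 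For optimality I would track marginal values along the greedy steps: each insertion of a $\delta$-unit raises the chosen task's $f'$ by at most $(\max|f''|)\,\delta=O(\delta)$ on the relevant range, so after Phase~II the above-floor tasks of each client are equalised to a common $\ell_i$ up to $O(\delta)$ (a floored task keeping $f'\ge\ell_i$, which matches $\mu_{i,j}\ge0$), and Phase~III, always pouring into the globally smallest $f'$, ends with every client that received Phase~III deposit at a common global level $\ell$ up to $O(\delta)$, any client untouched in Phase~III having $\ell_i\ge\ell$ (otherwise Phase~III would have selected it). Taking $\ell=-\lambda$, reading $\nu_i$ off as $\ell_i-\ell\ge 0$, and $\mu_{i,j}$ as the excess of a floored task's marginal value over $\ell_i$, this checks the KKT conditions up to $O(\delta)$, exact in the limit $\delta\to 0$ (or whenever $\delta$ divides the fund quantities); by Lemma~\ref{lem:multi_clients-1} this forced value is \emph{the} minimizer. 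An alternative, paralleling the single-client argument, is a direct exchange argument: assume the greedy allocation differs from the optimum, peel off the common part, and bound the integrals of $f'$ over the two discrepancy sets using that the greedy rule always consumes the smallest available marginal value.

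The step I expect to be the real obstacle is reconciling the \emph{local} handling of the per-client constraints \eqref{eq:max_wage_multiple-revised-c2} in Phase~II with the \emph{global} handling of the total-fund constraint \eqref{eq:max_wage_multiple-revised-c3} in Phase~III: I must argue that ``bring every client up to its budget bound with an internally water-filled split, then globally water-fill the free deposit'' cannot be improved by any cross-client shift of deposit or any re-split of a client's budget. Concretely this is exactly the place where one has to establish $\nu_i\ge 0$ for a client left untouched in Phase~III --- equivalently, that such a client's internal marginal level is at least the global fill level --- which follows from the greedy selection rule but must be phrased carefully, and where one has to carry the $O(\delta)$ discretisation bookkeeping so that the KKT stationarity equalities hold exactly in the limit.
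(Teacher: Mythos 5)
Your proposal is correct, but it takes a genuinely different route from the paper. You recast (\ref{eq:max_wage_multiple-revised}) as a convex program with affine constraints and verify that the greedy output satisfies the KKT conditions — a two-level water-filling picture in which, writing $\ell=-\lambda$, each client's above-floor tasks share a marginal level $\ell_i$, with $\ell_i=\ell$ for clients whose constraint \eqref{eq:max_wage_multiple-revised-c2} is slack (i.e., clients receiving Phase-III deposit) and $\ell_i\geq\ell$ otherwise; uniqueness from Lemma~\ref{lem:multi_clients-1} then identifies this point as the optimum, and you correctly flag the two delicate points (the sign condition $\nu_i\geq 0$ for clients untouched in Phase III, which indeed follows from the global greedy selection rule, and the $O(\delta)$ discretisation bookkeeping) as well as the fact that the floors exceed the branch point $2\sqrt{c_{i,j}\hat{c}_{i,j}}$ so $f$ is smooth on the feasible set. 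The paper instead argues by induction on $t$, the number of $\delta$-units of deposit beyond $\sum_{i,j}\hat{c}_{i,j}$: the base case $t=0$ reduces to independent per-client problems handled by the argument of Lemma~\ref{lem:singleClient-2}, and the induction step shows that $OPT(t_0+1,\hat{b}_1,\cdots,\hat{b}_m)$ has the same (unique, by Lemma~\ref{lem:multi_clients-1}) optimum as $OPT(t_0,\hat{b}'_1,\cdots,\hat{b}'_m)$ in which one client's budget is increased by $\delta$, so that a reordering of one greedy step converts the algorithm's run on the smaller instance into a run on the larger one. Your duality argument makes the cross-client structure of the optimum explicit and is exact in the continuous limit, at the cost of the multiplier and discretisation bookkeeping you mention; the paper's exchange-plus-induction argument is more elementary, reuses the single-client lemma directly, and sidesteps duality, though it too treats the $\delta$-granularity informally, so neither argument is more rigorous on that point.
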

\begin{proof}
As elaborated in Appendix~5,
we prove by induction on the amount of fund to be allocated in Phase III. 
\comment{
Let $Opt(t,b_1,\cdots,b_m)$ denote the optimization problem
defined in (\ref{eq:max_wage_multiple-revised}) where
$t$ is an integer, 
$d=\sum_{i=1}^{m}\sum_{j=1}^{n_i}\hat{c}_{i,j} + t\cdot\delta$ and 
$b_i\geq\sum_{j=1}^{n_i}(c_{i,j}+\frac{c_{i,j}\hat{c}_{i,j}}{c_{i,j}+\hat{c}_{i,j}})$
for each $i\in\{1,\cdots,m\}$.
Note that, $c_{i,j}$ and $\hat{c}_{i,j}$ are constants.

Let ${\cal P}(t,b_1,\cdots,b_m)$ denote the following predicate: 
Phases I, II and III of Algorithm~\ref{alg:SplitDeposit} 
solves $Opt(t,b_1,\cdots,b_m)$. 
If ${\cal P}(t,b_1,\cdots,b_m)\equiv TRUE$, 
let $f^{*}(t,b_1,\cdots,b_m)$ denote 
the found minimal value of the objective function for $Opt(t,b_1,\cdots,b_m)$,
and each $s_{i,j}(t,b_1,\cdots,b_m)$ denote the value assigned to $s_{i,j}$ in the solution.  

The theorem claims that ${\cal P}(t,b_1,\cdots,b_m)\equiv TRUE$ 
for every $t\geq 0$ and every $\{b_i|i=1,\cdots,m\}$ satisfying the above relevant constraints.
This can be proved by induction on $t$.

Base case: 
When $t=0$ (i.e., $d=\sum_{i=1}^{m}\sum_{j=1}^{n_i}\hat{c}_{i,j}$), 
${\cal P}(t,b_1,\cdots,b_m)\equiv TRUE$, i.e., 
Phases I-III of Algorithm~\ref{alg:SplitDeposit} 
solves $Opt(t,b_1,\cdots,b_m)$, because:
Phase I simply initializes every $s_{i,j}$ 
to satisfy constraint \eqref{eq:max_wage_multiple-revised-c1};
Phase II minimizes $\sum_{j=1}^{n_i}f(s_{i,j},i,j)$ 
with the remaining budget of each client $C_i$, 
independently, based on the arguments similar to 
the proof of Lemma~\ref{lem:singleClient-2};
Phase III does nothing as $d'=0$.

Induction step: 
Assuming ${\cal P}(t,b_1,\cdots,b_m)\equiv TRUE$ 
for every integer $0\leq t\leq t_0$ and 
every $\{b_i|i=1,\cdots,m\}$ satisfying the above relevant constraints,
next we prove 
${\cal P}(t_0+1,\hat{b}_1,\cdots,\hat{b}_m)\equiv TRUE$
for every $\{\hat{b}_i|i=1,\cdots,m\}$ satisfying the relevant constraints.

First, let us consider optimization problem 
    \[
    OPT(0,\hat{b}_1,\cdots,\hat{b}_m).
    \]
According to the base case, 
${\cal P}(0, \hat{b}_1,\cdots,\hat{b}_m)\equiv TRUE$, and
each $s_{i,j}(0,\hat{b}_1,\cdots,\hat{b}_m)$ denotes 
the assignment of $s_{i,j}$ in the optimal solution.

Second, let 
\[
I=\{(i*,j*)\} = 
\operatorname*{arg\,min}_{\forall i\in\{1,\cdots,m\}\forall j\in\{1,\cdots,n_i\}}f'(s_{i,j},i,j)).
\]
Then, there must exist at least one $(i*,j*)\in I$ such that,
in the optimal solution to $OPT(t_0+1,\hat{b}_1,\cdots,\hat{b}_m)$, 
the assignment of $s_{i*,j*}$ is greater than 
the assignment of $s_{i*,j*}$ in the optimal solution to $OPT(0,\hat{b}_1,\cdots,\hat{b}_m)$; i.e.,
$s_{i*,j*}(t_0+1,\hat{b}_1,\cdots,\hat{b}_m)>s_{i*,j*}(0,\hat{b}_1,\cdots,\hat{b}_m)$. 
This can be proved by contradiction. 
If this is not the case, as $t_0+1\geq 1$, there should be at least $(i',j')\not\in I$ such that
$s_{i',j'}(t_0+1,\hat{b}_1,\cdots,\hat{b}_m)>s_{i',j'}(0,\hat{b}_1,\cdots,\hat{b}_m)$;
then, if one unit assigned to $s_{i',j'}$ instead is assigned to $s_{i*,j*}$, 
the server can earn higher wage. 

Third, let us consider optimization problem 
    \[
    OPT(t_0,\hat{b}'_1,\cdots,\hat{b}'_m),
    \]
where $\hat{b}'_{i*} = \hat{b}_{i*}+1$ while $\hat{b}'_i=\hat{b}_i$ for every $i\not= i*$.
We can prove the following:
\begin{itemize}
    \item 
    The optimal solution to $OPT(t_0,\hat{b}'_1,\cdots,\hat{b}'_m)$ is a feasible solution 
    to $OPT(t_0+1,\hat{b}_1,\cdots,\hat{b}_m)$.
    This is because: 
    $\hat{b}'_i\geq\hat{b}_i$ for every $i$, thus 
    satisfying constraint \eqref{eq:max_wage_multiple-revised-c2} in $OPT(t_0,\hat{b}'_1,\cdots,\hat{b}'_m)$
    implies satisfying constraint \eqref{eq:max_wage_multiple-revised-c2} in $OPT(t_0+1,\hat{b}_1,\cdots,\hat{b}_m)$; 
    $\sum_{i=1}^{m}\hat{b}_i+\delta = \sum_{i=1}^{m}\hat{b}'_i$ and 
    the value of $d$ in $OPT(t_0+1,\hat{b}_1,\cdots,\hat{b}_m)$ is greater than the $d$
    in $OPT(t_0,\hat{b}'_1,\cdots,\hat{b}'_m)$ by $\delta$, thus
    satisfying \eqref{eq:max_wage_multiple-revised-c3} in $OPT(t_0,\hat{b}'_1,\cdots,\hat{b}'_m)$
    implies satisfying constraint \eqref{eq:max_wage_multiple-revised-c3} in $OPT(t_0+1,\hat{b}_1,\cdots,\hat{b}_m)$.
    \item 
    Similarly, the optimal solution to $OPT(t_0+1,\hat{b}_1,\cdots,\hat{b}_m)$ is a feasible solution
    to $OPT(t_0,\hat{b}'_1,\cdots,\hat{b}'_m)$. 
\end{itemize}
Further due to the uniqueness of optimal solution to these optimization problems 
(based on Lemma~\ref{lem:multi_clients-1}),
$OPT(t_0+1,\hat{b}_1,\cdots,\hat{b}_m)$ and $OPT(t_0,\hat{b}'_1,\cdots,\hat{b}'_m)$ 
share the same optimal solution. 

Finally, based on the induction assumption, 
${\cal P}(t_0,\hat{b}'_1,\cdots,\hat{b}'_m)\equiv TRUE$.
That is, Algorithm~\ref{alg:SplitDeposit} solves $OPT(t_0,\hat{b}'_1,\cdots,\hat{b}'_m)$;
thus, it also solves $OPT(t_0+1,\hat{b}_1,\cdots,\hat{b}_m)$,
except that $\{\hat{b}'_i\}$ and $d=\sum_{i=1}^{m}\sum_{j=1}^{n_i}\hat{c}_{i,j}+t\cdot\delta$ 
(not $\{\hat{b}_i\}$ and $d=\sum_{i=1}^{m}\sum_{j=1}^{n_i}\hat{c}_{i,j}+(t_0+1)\cdot\delta$) 
are the inputs to the algorithm.
Hence, in the execution of the algorithm,
let us move the last assignment to $s_{i*,j*}$ in Phase II 
to be the first step in Phase III; this way,  
the algorithm works exactly as it takes $\{\hat{b}_i\}$ and $d=\sum_{i=1}^{m}\sum_{j=1}^{n_i}\hat{c}_{i,j}+(t_0+1)\cdot\delta$ 
as inputs to solve $OPT(t_0+1,\hat{b}_1,\cdots,\hat{b}_m)$.
That is, ${\cal P}(t_0+1,\hat{b}_1,\cdots,\hat{b}_m)\equiv TRUE$.
}
\end{proof}

\begin{lemma}
Optimization problem defined in \eqref{eq:max_wage_multiple} is equivalent to that defined in \eqref{eq:max_wage_multiple-revised}.
\end{lemma}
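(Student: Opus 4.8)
The plan is to exhibit an objective-preserving, onto correspondence between the feasible regions of the two programs under the change of variable $s_{i,j}=b_{i,j}+d_{i,j}$, from which it follows that a solution of \eqref{eq:max_wage_multiple-revised} can be turned into a solution of \eqref{eq:max_wage_multiple} and conversely, and that the two optimal values differ by the fixed constant $\sum_{i=1}^{m}b_i$. The engine is a one-line algebraic identity: from the closed form \eqref{eq:wage} of $w$ and the definition \eqref{eq:fx} of $f$, one checks directly that $w(b_{i,j},d_{i,j})=b_{i,j}-f(b_{i,j}+d_{i,j},i,j)$, both sides being real on the feasible set since there $b_{i,j}+d_{i,j}\ge c_{i,j}+\hat c_{i,j}+\frac{c_{i,j}\hat c_{i,j}}{c_{i,j}+\hat c_{i,j}}\ge 2\sqrt{c_{i,j}\hat c_{i,j}}$ makes the radicand nonnegative. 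Summing over all tasks and using $\sum_{j}b_{i,j}=b_i$ yields $\sum_{i,j}w(b_{i,j},d_{i,j})=\sum_i b_i-\sum_{i,j}f(s_{i,j},i,j)$, so maximizing \eqref{eq:max_wage_multiple-objective} is the same as minimizing \eqref{eq:max_wage_multiple-objective-revised}.

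Next I would verify the two feasibility directions. Easy direction: given $(b_{i,j},d_{i,j})$ feasible for \eqref{eq:max_wage_multiple}, put $s_{i,j}=b_{i,j}+d_{i,j}$; then \eqref{eq:max_wage_multiple-revised-c1} is the sum of the two inequalities in \eqref{eq:min_delay-c3-c4}, \eqref{eq:max_wage_multiple-revised-c2} follows from $\sum_j b_{i,j}=b_i$ and $d_{i,j}\ge\hat c_{i,j}$, and \eqref{eq:max_wage_multiple-revised-c3} from $\sum_j b_{i,j}=b_i$ and $\sum_{i,j}d_{i,j}=d$. Converse direction: given $s_{i,j}$ feasible for \eqref{eq:max_wage_multiple-revised}, I reconstruct $(b_{i,j},d_{i,j})$ by setting $d_{i,j}=\hat c_{i,j}+\varepsilon_{i,j}$ with $\varepsilon_{i,j}\in[0,u_{i,j}]$, where $u_{i,j}:=s_{i,j}-\hat c_{i,j}-c_{i,j}-\frac{c_{i,j}\hat c_{i,j}}{c_{i,j}+\hat c_{i,j}}\ge 0$ by \eqref{eq:max_wage_multiple-revised-c1}, and $b_{i,j}=s_{i,j}-d_{i,j}$. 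The requirement $\sum_j b_{i,j}=b_i$ forces $\sum_j\varepsilon_{i,j}=\sigma_i:=\sum_j s_{i,j}-b_i-\sum_j\hat c_{i,j}$, which is $\ge 0$ by \eqref{eq:max_wage_multiple-revised-c2}; such $\varepsilon_{i,j}$ exist (fill the caps greedily) because $\sum_j u_{i,j}-\sigma_i=b_i-\sum_j\bigl(c_{i,j}+\frac{c_{i,j}\hat c_{i,j}}{c_{i,j}+\hat c_{i,j}}\bigr)\ge 0$, the last inequality being the summed form of \eqref{eq:budget-constraint} (the standing hypothesis that each client's budget is large enough to take its tasks; if it fails, \eqref{eq:max_wage_multiple} is infeasible and there is nothing to prove). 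Finally $\sum_{i,j}d_{i,j}=\sum_{i,j}\hat c_{i,j}+\sum_i\sigma_i=d$ by \eqref{eq:max_wage_multiple-revised-c3}, while $b_{i,j}\ge c_{i,j}+\frac{c_{i,j}\hat c_{i,j}}{c_{i,j}+\hat c_{i,j}}$ holds by $\varepsilon_{i,j}\le u_{i,j}$, so $(b_{i,j},d_{i,j})$ is feasible for \eqref{eq:max_wage_multiple}.

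Putting the pieces together: the map $(b,d)\mapsto b+d$ carries the feasible set of \eqref{eq:max_wage_multiple} onto that of \eqref{eq:max_wage_multiple-revised}, each fibre of this map carries a constant value of the objective (by the identity of the first paragraph), so the optimal value of \eqref{eq:max_wage_multiple} equals $\sum_i b_i$ minus the optimal value of \eqref{eq:max_wage_multiple-revised}; moreover a minimizer of \eqref{eq:max_wage_multiple-revised} lifts (non-uniquely, but with constant objective) to a maximizer of \eqref{eq:max_wage_multiple}, and a maximizer of \eqref{eq:max_wage_multiple} projects to a minimizer of \eqref{eq:max_wage_multiple-revised}. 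This is exactly the reduction underlying Algorithm~\ref{alg:MultiClientStatic}, which solves \eqref{eq:max_wage_multiple-revised} via Algorithm~\ref{alg:SplitDeposit} and then splits the resulting $s_{i,j}$ into budgets and deposits. I expect the only delicate point to be the converse feasibility direction — verifying that the per-client identity $\sum_j\varepsilon_{i,j}=\sigma_i$ and the global identity $\sum_{i,j}d_{i,j}=d$ can be met simultaneously inside the boxes $0\le\varepsilon_{i,j}\le u_{i,j}$, which reduces to the slack count $\sum_j u_{i,j}\ge\sigma_i$ above; everything else is a short substitution in the constraints.
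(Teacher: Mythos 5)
Your proof is correct, and it follows the same overall route as the paper: the change of variable $s_{i,j}=b_{i,j}+d_{i,j}$, the observation that the two objectives differ by the constant $\sum_i b_i$ (via $w(b_{i,j},d_{i,j})=b_{i,j}-f(b_{i,j}+d_{i,j},i,j)$, which the paper derives just before stating \eqref{eq:max_wage_multiple-revised}), and a two-way feasibility correspondence. The one place where you genuinely diverge is the converse direction: the paper disposes of it by pointing to Phase~IV of Algorithm~\ref{alg:SplitDeposit} together with steps 2--3 of Algorithm~\ref{alg:MultiClientStatic} (i.e., Phase~III of Algorithm~\ref{alg:SingleClient}) as the conversion procedure, whereas you give a self-contained existence argument: write $d_{i,j}=\hat c_{i,j}+\varepsilon_{i,j}$ with $0\le\varepsilon_{i,j}\le u_{i,j}$ and check that the per-client totals $\sigma_i$ fit inside the boxes because $\sum_j u_{i,j}-\sigma_i=b_i-\sum_j\bigl(c_{i,j}+\frac{c_{i,j}\hat c_{i,j}}{c_{i,j}+\hat c_{i,j}}\bigr)\ge 0$. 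This buys you two things the paper leaves implicit: it does not presuppose the correctness of the splitting phases of the algorithms (so the lemma can be used independently of them), and it surfaces the standing budget-sufficiency assumption \eqref{eq:budget-constraint}-summed-over-$j$ as exactly the condition making the lift possible (without it \eqref{eq:max_wage_multiple} is infeasible while \eqref{eq:max_wage_multiple-revised} may not be, so the hedge you state is the right one). The paper's version is shorter and ties the lemma directly to the machinery actually executed by Algorithm~\ref{alg:MultiClientStatic}; yours is the more rigorous stand-alone argument.
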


\begin{proof}
(sketch)
On one hand, 
as the constraints of the optimization problem defined in \eqref{eq:max_wage_multiple-revised} are derived from 
(and necessary conditions) of those of \eqref{eq:max_wage_multiple}, 
each optimal solution to \eqref{eq:max_wage_multiple} 
is a feasible solution to \eqref{eq:max_wage_multiple-revised}.
On the other hand, 
each optimal solution to \eqref{eq:max_wage_multiple-revised} 
can be converted to 
a feasible solution to \eqref{eq:max_wage_multiple}, 
with Part IV of Algorithm~\ref{alg:SplitDeposit} and
steps 2-3 in Algorithm~\ref{alg:MultiClientStatic}.
\end{proof}

\begin{theorem}
Algorithm~\ref{alg:MultiClientStatic} finds a Nash Equilibrium for the game between server $S$ and $m$ clients $C_1,\cdots,C_m$.
\end{theorem}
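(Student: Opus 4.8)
The plan is to reduce the claim to a single optimality fact --- that the action profile output by Algorithm~\ref{alg:MultiClientStatic} realizes the \emph{unique} minimizer of the joint program~\eqref{eq:max_wage_multiple-revised} --- and then to derive from that fact, separately, that the server and every client is playing a best response, which is exactly the Nash-equilibrium condition.

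First I would establish that the sums $s_{i,j}^{*}:=b_{i,j}^{*}+d_{i,j}^{*}$ produced by Algorithm~\ref{alg:MultiClientStatic} equal the unique optimum $(\tilde s_{i,j})$ of~\eqref{eq:max_wage_multiple-revised}. By Theorem~\ref{theo:multi_clients-1} and Lemma~\ref{lem:multi_clients-1}, Phases~I--III of Algorithm~\ref{alg:SplitDeposit} compute $(\tilde s_{i,j})$, and Phase~IV records $d_i=\sum_j\tilde s_{i,j}-b_i$, which by constraint~\eqref{eq:max_wage_multiple-revised-c2} satisfies $d_i\ge\sum_j\hat c_{i,j}$ and $\sum_i d_i=d$. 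The crux is then that the follow-up calls to Algorithm~\ref{alg:SingleClient} with inputs $(b_i,d_i)$ do not disturb these sums: the slice $(\tilde s_{i,j})_j$ is feasible for client $C_i$'s single-client problem (it satisfies $\tilde s_{i,j}\ge\hat c_{i,j}+c_{i,j}+\frac{c_{i,j}\hat c_{i,j}}{c_{i,j}+\hat c_{i,j}}$ and $\sum_j\tilde s_{i,j}=b_i+d_i$), and it is optimal there because the objective $\sum_{i,j}f(\cdot,i,j)$ is separable --- any alternative feasible slice for $C_i$, combined with $(\tilde s_{i',j})_{i'\ne i}$, is feasible for~\eqref{eq:max_wage_multiple-revised} and so cannot score lower. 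Hence, by Lemmas~\ref{lem:singleClient-1} and~\ref{lem:singleClient-2}, Phases~I--II of Algorithm~\ref{alg:SingleClient} return this unique per-client optimum, and its Phase~III (Lemma~\ref{lem:singleClient-3}) splits each $\tilde s_{i,j}$ into $b_{i,j}^{*}\ge c_{i,j}+\frac{c_{i,j}\hat c_{i,j}}{c_{i,j}+\hat c_{i,j}}$ and $d_{i,j}^{*}\ge\hat c_{i,j}$.

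Next I would check both best-response conditions against this fact. For the server, hold the clients at $A^{*}_{c,1},\dots,A^{*}_{c,m}$, so every $b_{i,j}^{*}$ is fixed and $\sum_{i,j}b_{i,j}^{*}=\sum_i b_i$ is a constant; using $w(b,d)=b-f(b+d,\cdot,\cdot)$ from~\eqref{eq:wage}, the server's wage under any feasible deposit vector $(d_{i,j})$ (with $\sum_{i,j}d_{i,j}=d$ and $d_{i,j}\ge\hat c_{i,j}$, which by $b_{i,j}^{*}\ge c_{i,j}+\frac{c_{i,j}\hat c_{i,j}}{c_{i,j}+\hat c_{i,j}}$ already guarantees $w_{i,j}\ge c_{i,j}$) equals $\sum_i b_i-\sum_{i,j}f(b_{i,j}^{*}+d_{i,j},i,j)$; since the sums $b_{i,j}^{*}+d_{i,j}$ still lie in the feasible region of~\eqref{eq:max_wage_multiple-revised}, Step~1 gives $\sum_{i,j}f(b_{i,j}^{*}+d_{i,j},i,j)\ge\sum_{i,j}f(s_{i,j}^{*},i,j)$, so the server cannot beat its equilibrium wage. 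For a client $C_i$, hold the server at $A^{*}_s$ (so the $d_{i,j}^{*}$ are fixed) and let $C_i$ deviate to $(b_{i,j})_j$ with $\sum_j b_{i,j}=b_i$; feasibility of the per-task game (Theorem~\ref{theo:pij}) forces $b_{i,j}\ge c_{i,j}+\frac{c_{i,j}\hat c_{i,j}}{c_{i,j}+d_{i,j}}$, and since $y\mapsto y+\frac{c_{i,j}\hat c_{i,j}}{c_{i,j}+y}$ is increasing on $[\hat c_{i,j},\infty)$ while $d_{i,j}^{*}\ge\hat c_{i,j}$, the sum $s_{i,j}:=b_{i,j}+d_{i,j}^{*}$ satisfies $s_{i,j}\ge\hat c_{i,j}+c_{i,j}+\frac{c_{i,j}\hat c_{i,j}}{c_{i,j}+\hat c_{i,j}}$ while $\sum_j s_{i,j}=b_i+d_i$ is unchanged; hence $(s_{i,j})_j$ together with $(s_{i',j}^{*})_{i'\ne i}$ is feasible for~\eqref{eq:max_wage_multiple-revised}, Step~1 gives $\sum_j f(s_{i,j},i,j)\ge\sum_j f(s_{i,j}^{*},i,j)$, and since $C_i$'s verification delay equals $k\sum_j f(s_{i,j},i,j)$ the deviation cannot shorten it. Combining the two cases, the profile is a Nash equilibrium.

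I expect Step~1 to be the main obstacle: proving that re-optimizing each client's budget after the server's deposit split leaves the joint-optimal sums $s_{i,j}^{*}$ intact is what ties the two-stage algorithm to the single program~\eqref{eq:max_wage_multiple-revised}, and it rests on the separability of $\sum_{i,j}f(\cdot,i,j)$ together with the uniqueness statements Lemmas~\ref{lem:singleClient-1} and~\ref{lem:multi_clients-1}. A secondary technical nuisance, showing up in both best-response arguments, is that the algorithms bake in the conservative budget bound $b_{i,j}\ge c_{i,j}+\frac{c_{i,j}\hat c_{i,j}}{c_{i,j}+\hat c_{i,j}}$ whereas a deviator's true feasible set is governed by $b_{i,j}\ge c_{i,j}+\frac{c_{i,j}\hat c_{i,j}}{c_{i,j}+d_{i,j}}$; the monotonicity remark above is precisely what guarantees the equilibrium point stays feasible and remains a minimizer over the slightly shifted deviation regions.
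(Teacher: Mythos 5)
Your proof is correct, and at the top level it follows the same strategy as the paper's (very brief) sketch: block the server's deviations via the optimality of the joint program \eqref{eq:max_wage_multiple-revised} and block each client's deviations via per-client optimality. The substantive difference is that you supply the consistency step the paper glosses over entirely: your Step~1, showing that the second-stage calls of Algorithm~\ref{alg:SingleClient} with inputs $(b_i,d_i)$ reproduce exactly the sums $\tilde s_{i,j}$ computed by Algorithm~\ref{alg:SplitDeposit}, via separability of $\sum_{i,j} f(\cdot,i,j)$ plus the uniqueness statements (Lemmas~\ref{lem:singleClient-1} and~\ref{lem:multi_clients-1}, Theorem~\ref{theo:multi_clients-1}). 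Without this step the paper's one-line claim for the server (``the algorithm produces a unique optimal solution that maximizes its wage'') is not literally tied to the output of Algorithm~\ref{alg:MultiClientStatic}, so your addition closes a real gap in the sketch. You also treat the clients differently: the paper simply invokes Theorem~\ref{theo:singleClientNE} for each server--client pair $(b_i,d_i)$, whereas you re-derive the client best response directly from feasibility of the deviated sums in \eqref{eq:max_wage_multiple-revised} (using $d_i\ge\sum_j\hat c_{i,j}$ from \eqref{eq:max_wage_multiple-revised-c2} and your monotonicity remark about $y\mapsto y+\frac{c_{i,j}\hat c_{i,j}}{c_{i,j}+y}$). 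Both routes are valid; the paper's citation of Theorem~\ref{theo:singleClientNE} is the shorter path for the client side, while your single-program argument is more self-contained and makes explicit why deviations under the true per-task feasibility bound $b_{i,j}\ge c_{i,j}+\frac{c_{i,j}\hat c_{i,j}}{c_{i,j}+d_{i,j}}$ remain dominated.
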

\begin{proof}
(sketch).
For server $S$, as the algorithm produces a unique optimal solution that maximizes its wage,
there is no incentive to deviate from the solution. 
For each client, as proved in Theorem~\ref{theo:singleClientNE}, 
there is no incentive, either, to deviate from the solution. 
\end{proof}

\comment{
\begin{lemma}
\label{lem:multi_clients}
Algorithm~\ref{alg:SplitDeposit} solves the optimization problem defined in (\ref{eq:max_wage_multiple-revised}).
\end{lemma}
\begin{proof}
Let $f(x)=\sum_{i=1}^{m}\sum_{j=1}^{n_i} f(x_{i,j},i,j)$, where $x_{i,j}=b_{i,j}+d_{i,j}, i \in \{1,2,\dots,m\}, j \in \{1,2,\dots,n_i\}$ for each $i$, and $f(x,i,j)$ defined as in \eqref{eq:fx}. We define the following optimization problem.
\begin{align}
\label{eq: multi_clients_f(x)_formal}
        & \min f(x) \nonumber\\
        s.t. \ & g_{i,j}(x) \equiv -x_{i,j} + c_{i,j}+\frac{c_{i,j}\hat{c}_{i,j}}{c_{i,j}+\hat{c}_{i,j}} \leq 0, \nonumber\\
        & h(x)\equiv\sum_{i=1}^m \sum_{j=1}^{n_i} x_{i,j} - \sum_{i=1}^m b_i - d = 0, \nonumber\\
        & i \in \{1,\ldots,m\}, j \in \{1,2,\dots,n_i\} ~\mbox{for each} ~i.
\end{align}
It is easy to see that \eqref{eq: multi_clients_f(x)_formal} is in the same form of \eqref{eq:f(x)formal} in Appendix \ref{appendix:single_client_unique_point}. Therefore, \eqref{eq: multi_clients_f(x)_formal} also has a unique optimal point, which we denote as $s^* = \{s^*_{i,j}\}$. A valid allocation of \eqref{eq:max_wage_multiple-revised} should satisfies constraints \eqref{eq:min_delay-c1} - \eqref{eq:min_delay-c4} as well as $\eqref{eq:s=b+d}$ for all $i \in \{1,2,\dots,m\}$.

Phase I of Algorithm \eqref{alg:SplitDeposit} guarantees that constraints \eqref{eq:min_delay-c3} and \eqref{eq:min_delay-c4} are satisfied. Phase II ensures that \eqref{eq:min_delay-c2} is fulfilled while Phase III satisfies \eqref{eq:min_delay-c1}. Since in Algorithm \ref{alg:SplitDeposit}, $s_{i*,j*}$ is indeed the summation of $b_{i,j}$ and $d_{i,j}$, $\eqref{eq:s=b+d}$ holds naturally. A similar analysis to the proof of Lemma \ref{lem:alg1-proof} will show that $\{s_{i*,j*}\}$ generated by the algorithm is indeed the optimal solution of \eqref{eq: multi_clients_f(x)_formal}. Therefore, algorithm~\ref{alg:SplitDeposit} solves the optimization problem defined in (\ref{eq:max_wage_multiple-revised}).
\end{proof}
}

\section{Multiple Clients with Dynamic Tasks}

Tasks can be submitted to the cloud server dynamically.
We assume that, each task is submitted with a budget promised by a client,
and the server can start executing the task immediately
due to the typically rich resource available at the cloud server. 

Though the server can also 
immediately commit a deposit for a task at its arrival,
this is not desired:
First, the arrival of tasks is not predictable, 
making it difficult to optimize the distribution of the fund for deposit. 
Second, there could be a large number of tasks executed in parallel, 
which could divide the fund into very small pieces;
the smaller are deposits, 
as we learn from the previous sections,
the more frequently TTPs would be hired, 
which would decrease the amount of wage for the server 
and increase the latency for computation verification.

To address this issue,  
the server should distribute its fund to only a small number of tasks at a time.
Also, the deposit made to tasks should be reclaimed as soon as possible
and thus can be reused quickly; 
hence, it is more beneficial to make deposit to a task 
when it is completed and ready to be released
than when it is just submitted.
Moreover, as there could be a large number of tasks completing 
during a short period of time, the server should 
control the pace at which 
the completed tasks are released,
and thus it can make deposits to only 
a selected subset of completed tasks at a time. 

Based on the above ideas, we propose two algorithms.
We first propose a baseline scheme named {\em sequential releasing},
with which the server releases only one completed task at a time
and uses all of its fund as deposit for the task; this way, it can earn the most from 
each individual task, however, at the expense of slowed pace to release completed tasks.
To address the limitation, we further propose a more generic scheme, 
{\em parallel releasing}, with which the server releases a subset of completed tasks
at a time, based on certain criteria adjustable with some system parameters,
to balance the trade off between the wage earned from each task
and the pace of releasing completed tasks. 

\subsection{Sequential Releasing}

As formally presented in Algorithm~\ref{alg:sequential},
the sequential releasing algorithm works as follows. 
%
    When a task arrives,
    the server immediately start executing it.
    When the computation of a task finishes,
    it is released if currently the server has full fund available;
    otherwise, it is put into the queue waiting for its turn to be released.
    When a released task has been finally accepted by the client who submitted it, 
    the server reclaims the deposit assigned to the task.
    Then, the waiting queue is checked;
    if there is one or more tasks there,
    one of them is picked to be released with the server's fund as deposit. 

\begin{algorithm}[htb!]
\caption{Sequential Releasing}
\label{alg:sequential}

{\bf Variables:}
\begin{itemize}
\item
$d$: server's total fund for deposit;
\item 
$d_{avail}$: available fund for deposit, initialized to $d$;
\item
$\{t_{i,j}\}$: a dynamic sequence of tasks;
\item 
${\cal Q}$: queue buffering completed but unreleased tasks.
\end{itemize}

{\bf Upon receiving a new task:}
start executing the task. 

{\bf Upon completing execution of task $t_{i,j}$:}
\begin{algorithmic}[1]
\If{$d_{avail}>0$}
\State release $t_{i,j}$ with deposit: $d_{i*,j*}=d_{avail}$
\State $d_{avail}\leftarrow 0$
\Else
\State ${\cal Q}.append(t_{i,j})$
\EndIf
\end{algorithmic}

{\bf Upon task $t_{i,j}$ being accepted by its owner (a client):}
\begin{algorithmic}[1]
\State $d_{avail}\leftarrow d_{i,j}$
\If{${\cal Q}\neq\emptyset$}
\State $t_{i*,j*}\leftarrow{\cal Q}.dequeue$
\State release the result of $t_{i*,j*}$ with deposit: $d_{i*,j*}=d_{avail}$
\State $d_{avail}\leftarrow 0$
\EndIf
\end{algorithmic}
\end{algorithm}

\subsection{Parallel Releasing}

As formally presented in Algorithm~\ref{alg:parallel},
at the core of the parallel releasing algorithm is a function named {\em AssignDeposit}.
Every time when the function is called,
it works in the following three steps.

First, the server checks the following condition to determine if
it is time to find another set of tasks to release:
{\em The server's fund that has been currently locked due to being assigned to 
    the tasks that are released but yet finalized, should be less than
    percentage $\alpha$ of the server's whole fund for deposit}.

Second, if it is time to start a new round of task releasing, 
the server selects a number of tasks waiting in the front of 
the queue of completed yet unreleased tasks,
such that the sum of the required minimal deposits for these tasks
is no more than percentage $\beta$ of the currently available fund.
Note that, parameters $\alpha$ and $\beta$ are used to 
control the trade off between the pace of task releasing and
the amount of deposit that a task can be assigned:
the smaller are $\alpha$ and $\beta$,
the slower is the pace of task releasing and
the larger is the deposit that a task can be assigned 
(thus the larger is the wage that the server can earn from the task);
and vice versa. 

Third, once the set of tasks to be released has been selected,
Algorithm~\ref{alg:MultiClientStatic} is called
to find out the optimal strategy to deposit the currently available fund 
to the tasks, to attain the dual goals of maximizing the server's wage
and minimizing the delay for verification.

With the {\em AssignDeposit} function in place,
the parallel releasing algorithm 
runs as follows:
When a new task arrives,
the server immediately starts executing the task;
when a task is completed or finalized,
the server calls the {\em AssignDeposit} function 
to make deposit assignments as long as the above-discussed conditions are satisfied.

\begin{algorithm}[htb!]
\caption{Parallel Releasing}
\label{alg:parallel}

{\bf Variables:}
\begin{itemize}
\item
$d$: server's total fund for deposit;
\item 
$d_{locked}$: fund for deposit that have been locked by clients, initialized to $0$;
\item
$\{t_{i,j}\}$: a dynamic sequence of tasks;
\item 
${\cal Q}$: queue buffering completed but unreleased tasks.
\end{itemize}

{\bf Upon receiving a new task:}
start executing the task. 

{\bf Upon completing execution of task $t_{i,j}$:}
\begin{algorithmic}[1]
\State ${\cal Q}.append(t_{i,j})$
\State call {\em AssignDeposit}
\end{algorithmic}

{\bf Upon task $t_{i,j}$ being accepted by its owner (a client):}
\begin{algorithmic}[1]
\State $d_{locked}\leftarrow d_{locked}-d_{i,j}$
\State call {\em AssignDeposit}
\end{algorithmic}

{\bf Function}~{\em AssignDeposit}:
\begin{algorithmic}[1]

\State $S_{task}\leftarrow\emptyset$ \Comment{temporary task set}
\State $\hat{c}_{temp}\leftarrow 0$

\If{$d_{locked}\leq\alpha\cdot d$}

    \While{${\cal Q}\neq\emptyset$}
        \State $t_{i,j}\leftarrow{\cal Q}.front$
        \State $\hat{c}_{temp}\leftarrow\hat{c}_{temp}+\hat{c}_{i,j}$
        \If{$\hat{c}_{temp}\geq\beta\cdot(d-d_{locked})$}
            \State break
        \Else
            \State $S_{task}\leftarrow S_{task}\bigcup\{t_{i,j}\}$
            \State ${\cal Q}.dequeue$
        \EndIf
    \EndWhile
    
    \State call Algorithm~\ref{alg:MultiClientStatic} to allocate remaining fund for deposit, i.e., $d-d_{locked}$, to tasks in $S_{task}$
    \State $d_{locked}\leftarrow d$

\EndIf

\end{algorithmic}

\end{algorithm}

\subsection{Simulations}

Due to dynamic and unpredictable nature of the incoming tasks,
it is hard to analyze our proposed algorithms in theory. 
Hence, we evaluate the algorithms through simulations. 

\comment{
We assume executor has enough resources for the computation. Then each task gets computed once received. Algorithm~\ref{alg:sequential} and ~\ref{alg:parallel} describe the workflow of sequentially release and parallel release. The life span of each task can be described as follows.
\begin{itemize}
    \item {$received$:} task received by the executor and starts computation
    \item {$computed$:} executor finishes computation and task waits to be allocated with deposit and budget. Assume the computation time take has the same value as cost.
    \item {$release$:} task is assigned with a deposit and budget and waiting for the action of client.
    \item {$rebuttal$:} client raises a trusted execution mechanism to verify the result based on the assigned deposit and budget
    \item {$confirm$:} waiting for the computation result to be confirmed
    \item {$finalized$:} redistribute deposit and reward
\end{itemize}
}


\subsubsection{Settings}

As inputs,
we simulate $50$ groups of task and each group has $100$ tasks.
Each task can be regular or heavy, different in the cost.
The arrivals of tasks follows 
a Poisson distribution with average interval $\tau$.
Specifically, the tasks are characterized by following parameters:
(i) $c$ is the average cost of each regular task,
quantified by the required execution time; 
we let $c$ range from $8\tau$ to $128\tau$, 
with $32\tau$ as the default value.
(ii) $T_{confirm}$ is the average time needed for the funds 
(including the client's budget and the server's deposit) to be 
distributed and confirmed on blockchain; 
we let $T_{confirm}$ range from $0.25\tau$ to $16\tau$, 
with $2\tau$ by default. 
(iii) $b$ is the client's average budget for a task;
we let $b\in\{2c,2.5c,3c,3.5c,4c\}$, 
with $2.5c$ by default. 
(iv) $p_{heavy}$ is the probability that a task is heavy (rather than regular),
with the purpose of studying the impact of task heterogeneity; 
we let $p_{heavy}$ 
take the default value of $0.1$, and
each heavy task has an average cost being $3$ times of a regular task.
(v) the default cost for hiring a TTP for verifying 
a task with cost $c$ is $\hat{c}=3c$.
(vi) $d$ is the server's total fund for deposit;
we let $d$ take the default value of $d_0=768\tau$ and
vary over the set of $\{d_0, 2d_0, 4d_0, 8d_0, 16d_0, 32d_0\}$.

We measure the following metrics: 
(i) $\frac{w}{b}$ is the average percentage 
of the client's budgets for tasks 
that are earned by the server as wages; 
(ii) $\frac{w}{\tau}$ is the average amount of wage that
the server can earn per time interval;
(iii) $\frac{w}{c}$ is the average amount of wage that
the server can earn for each unit of computational cost it pays;
(iv) average delay for a task is the time elapse (in the unit of $\tau$) from 
the task is submitted until it is finalized 
(i.e., it is completed and the funds associated with it have been distributed and confirmed on the blockchain). 
Specifically, the delay includes four parts: 
the computation delay $c$, 
the release delay that is the time elapse 
from the task being completed till its result being released,
the TTP delay which is time for a hired TTP to verify a task (if the client hires a TTP),
and the confirmation delay which is the time for the funds associated with the task 
to be distributed and confirmed on blockchain. 
For each task, the computation and confirmation delays are fixed based on the task and system property. 
The release and the TTP delays, however, 
are affected by the algorithm used and the parameters of the algorithm.  

In the simulation, 
we first evaluate how parameters $\alpha$ and $\beta$ affect 
the performance of the parallel releasing algorithm, 
which is followed by the comparison between 
the sequential releasing and the parallel releasing algorithms.

\subsubsection{Impacts of $\alpha$ and $\beta$ on Parallel Releasing}

In parallel releasing,
$\alpha$ controls when a new round of fund allocation starts;
the larger is $\alpha$, the shorter interval between the rounds.
$\beta$ controls how many tasks are released in parallel;
the larger is $\beta$, the more tasks released at the same time, and
the less deposit each of the tasks is assigned. 

Figure~\ref{fig:delay_beta} shows the impact of $\alpha$ and $\beta$ 
on the average delay of task.
When other parameters are the same,
the delay decreases as $\alpha$ increases; this is obvious
because increasing $\alpha$ shortens the interval 
between the rounds of fund allocation and task releasing. 
Given a fixed $\alpha$, 
it is interesting to observe that $\beta$ has different impacts on the delay
when the value of $\alpha$ is different. 
With a small $\alpha$ (e.g., $0$ and $0.3$),
the delay generally increases along with $\beta$.
This is because, as $\beta$ increases,
the deposit assigned to a task decreases while
the probability of hiring TTP increases for verifying the task;
when a task is checked with TTP, 
this not only increases the delay of the task,
but also slows down the returning of the deposit and 
thus postpones the starting of the next round of allocation.
The latter impact becomes even more significant with smaller $\alpha$,
where a new round of allocation can start only after
all or a large percentage of the fund has been returned. 
On the other hand, with a large $\alpha$ (e.g., $1$ and $0.6$),
a new round of allocation can start even when a large percentage of fund is locked;
thus, increasing $\beta$ generally allows faster releasing of tasks
and thus decreases the average delay of task.

\begin{figure}[htb!]
  \begin{subfigure}{0.48\columnwidth}
    \includegraphics[width=\columnwidth]{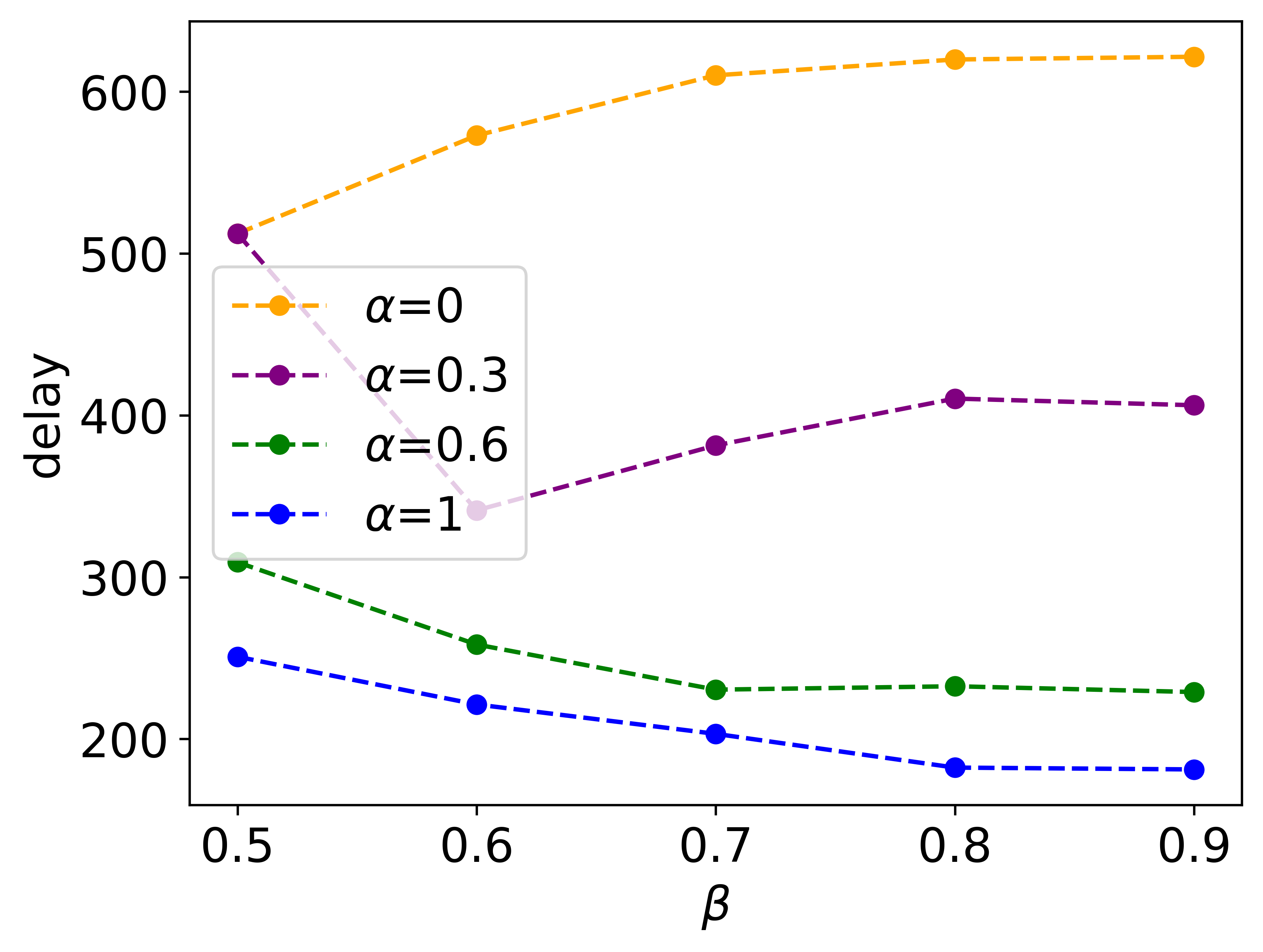}
    \caption{delay v.s. $\beta$} \label{fig:delay_beta}
  \end{subfigure}%
  \hspace*{\fill}   
  \begin{subfigure}{0.48\columnwidth}
    \includegraphics[width=\columnwidth]{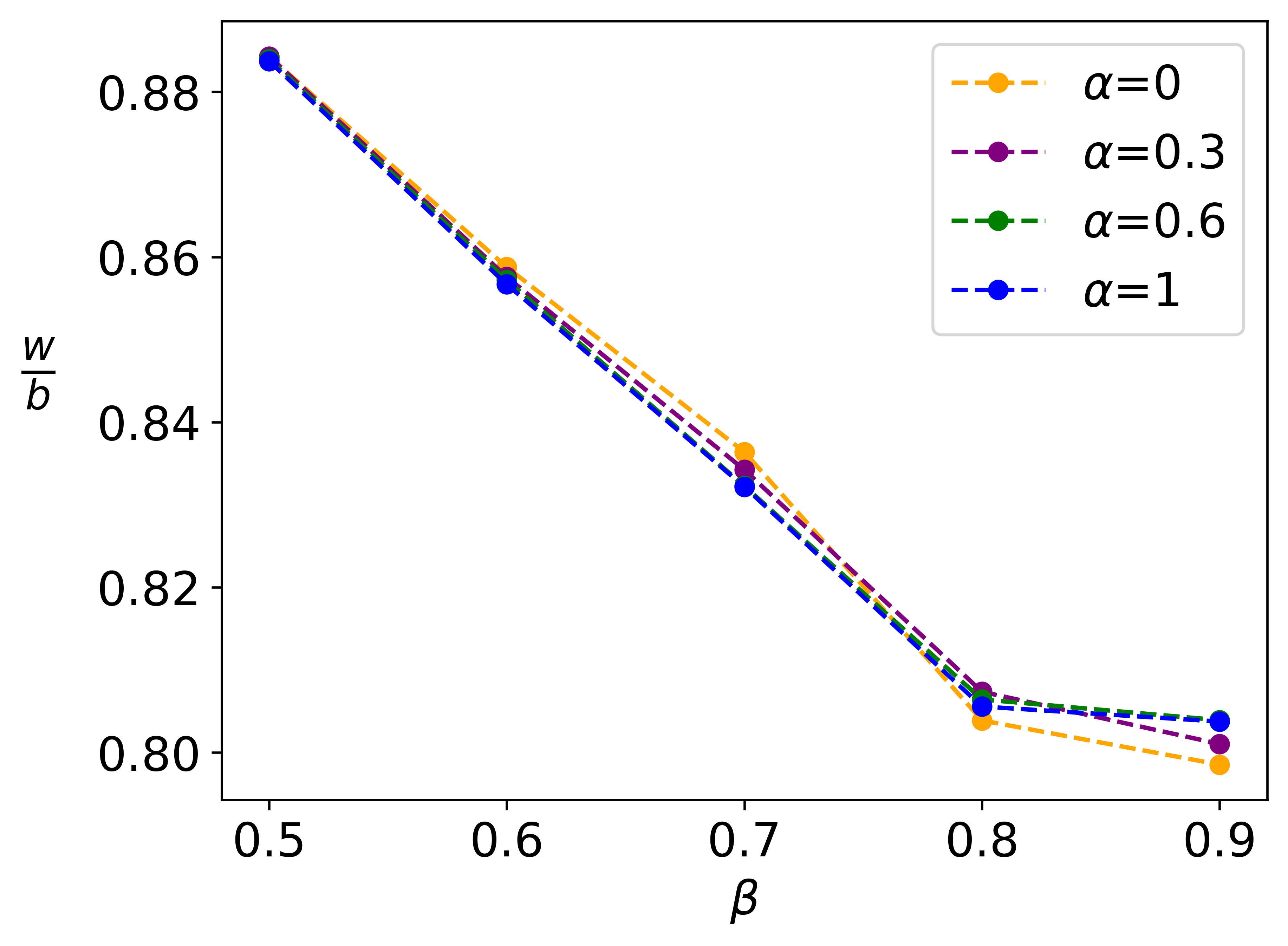}
    \caption{$\frac{w}{b}$ v.s. $\beta$} \label{fig:wage_budget_beta}
  \end{subfigure}%
  \hfill   
  \begin{subfigure}{0.48\columnwidth}
    \includegraphics[width=\columnwidth]{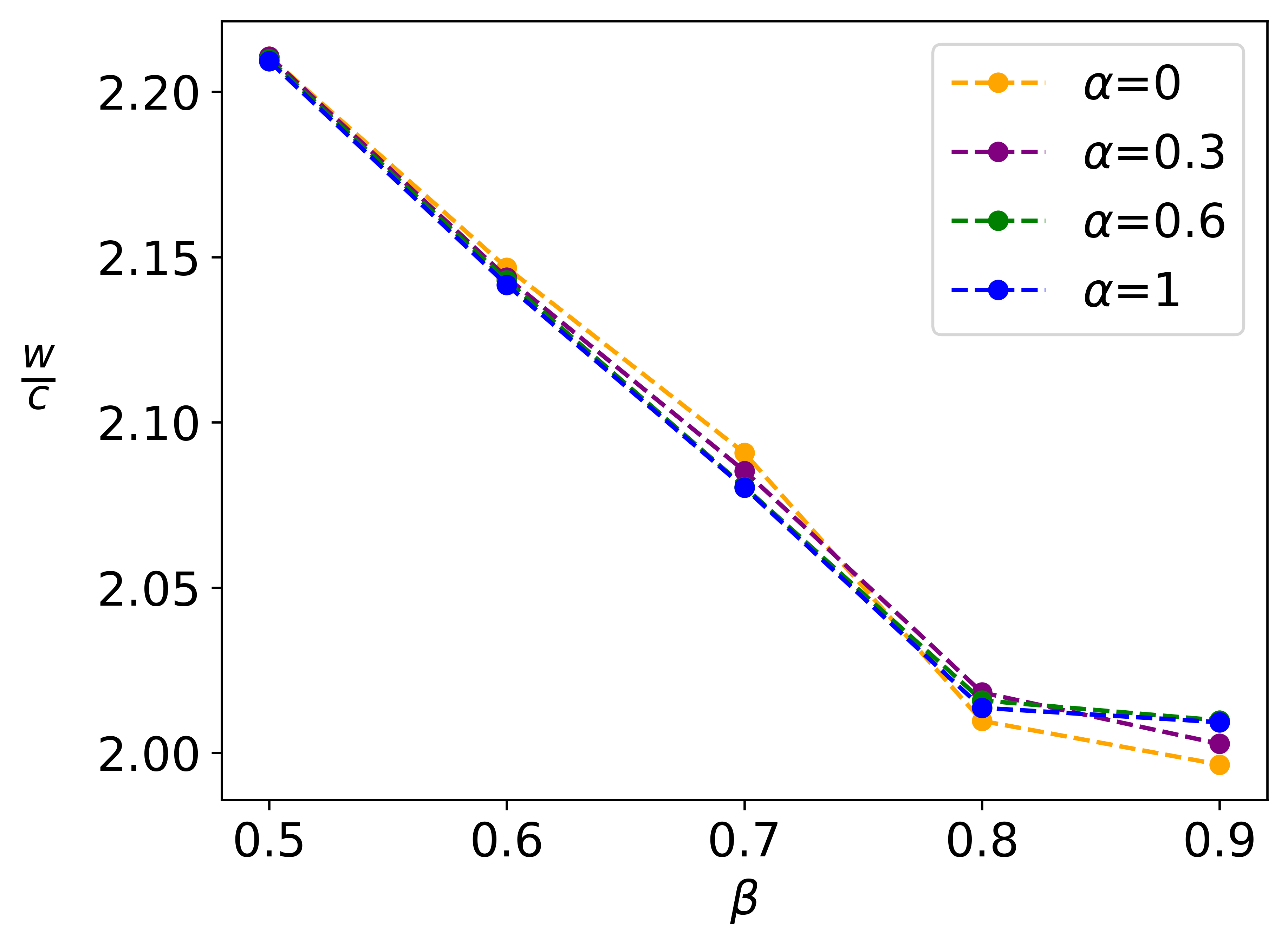}
    \caption{$\frac{w}{c}$ v.s. $\beta$} \label{fig:wage_cost_beta}
  \end{subfigure}%
  \hspace*{\fill}   
  \begin{subfigure}{0.48\columnwidth}
    \includegraphics[width=\columnwidth]{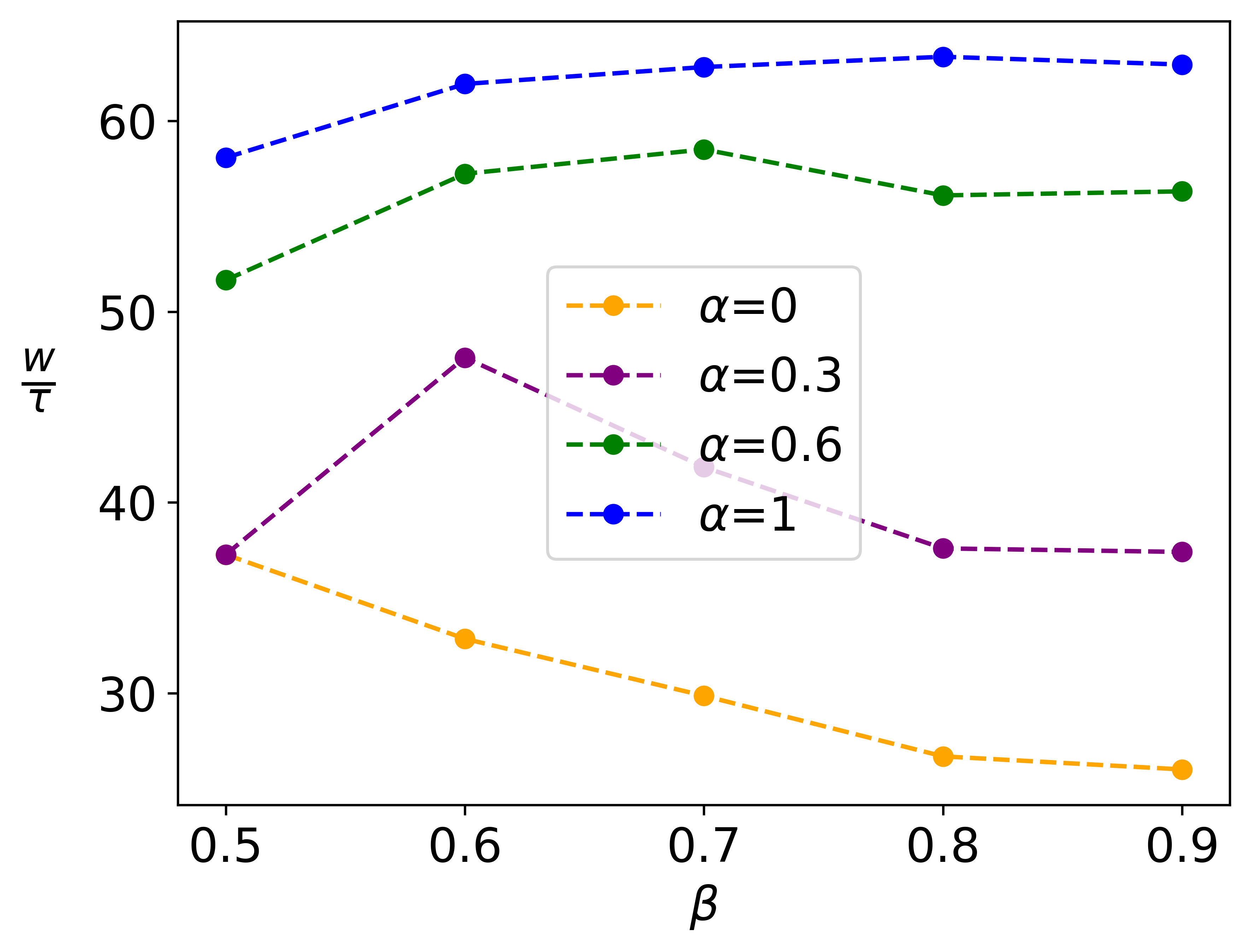}
    \caption{$\frac{w}{\tau}$ v.s. $\beta$} \label{fig:wage_time_beta}
  \end{subfigure}%
\caption{The performance under different $\alpha$ and $\beta$}
\label{fig:alpha_and_beta}
\end{figure}

From Figure~\ref{fig:wage_budget_beta} and Figure~\ref{fig:wage_cost_beta}, 
as expected, the greater is $\beta$, 
the smaller are $\frac{w}{b}$ and $\frac{w}{c}$;
and this is nearly independent of $\alpha$.
This is because, when $\beta$ increases, 
more tasks can be released at one round, 
which in turn leads to lower deposits allocated to each task.
As we learn from the above sections,
the server's wage is an increasing function of the deposit;
hence, lower deposits lead to lower wages earned by the server.

Fig~\ref{fig:wage_time_beta} shows the impacts of $\alpha$ and $\beta$ on 
metric $\frac{w}{\tau}$, which takes into account both 
the server's concern (i.e., high wage) and 
the client's concern (i.e., low delay);
intuitively, higher $\frac{w}{\tau}$ is desired.
As we can observe, 
a larger $\alpha$ generally leads to higher $\frac{w}{\tau}$ when other parameters are the same. 
When $\alpha$ is small (e.g., $0$ and $0.3$),
$\frac{w}{\tau}$ generally decreases as $\beta$ increases,
as it causes the wage to decrease and the delay to increase. 
When $\alpha$ is large (e.g., $1$ and $0.6$),
increasing $\beta$ causes both the wage and the delay to drop; as a result, 
$\frac{w}{\tau}$ increases with $\beta$ as long as $\beta$ is not too large 
(e.g., $\beta\leq 0.7$ for $\alpha=0.6$ and $\beta\geq 0.8$ for $\alpha=1$).


The above simulation results reveal that, 
parameter $\alpha$ only affects the average delay of task experienced by the client 
and $\frac{w}{\tau}$, the wage per time unit (concern of the server);
when other parameters are fixed, 
$\alpha=1$ leads the smallest delay and the largest $\frac{w}{\tau}$.
Hence, $\alpha=1$ is desired
and we will use this setting for the rest simulation. 


\subsubsection{Sequential v.s. Parallel Releasing}

Next, we compare the performance of 
sequential releasing and parallel releasing (with $\alpha=1$ and $\beta=0.6)$)
under various conditions. 

\begin{figure}[htb!]
  \begin{subfigure}{0.48\columnwidth}
    \includegraphics[width=\columnwidth]{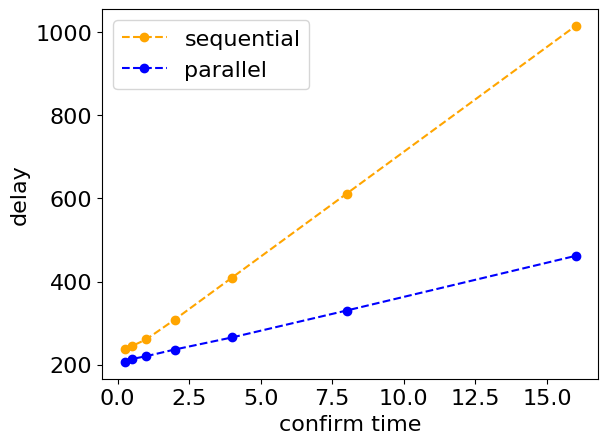}
    \caption{delay v.s. confirm time} \label{fig:delay_confirmtime}
  \end{subfigure}%
  \hspace*{\fill}   
  \begin{subfigure}{0.48\columnwidth}
    \includegraphics[width=\columnwidth]{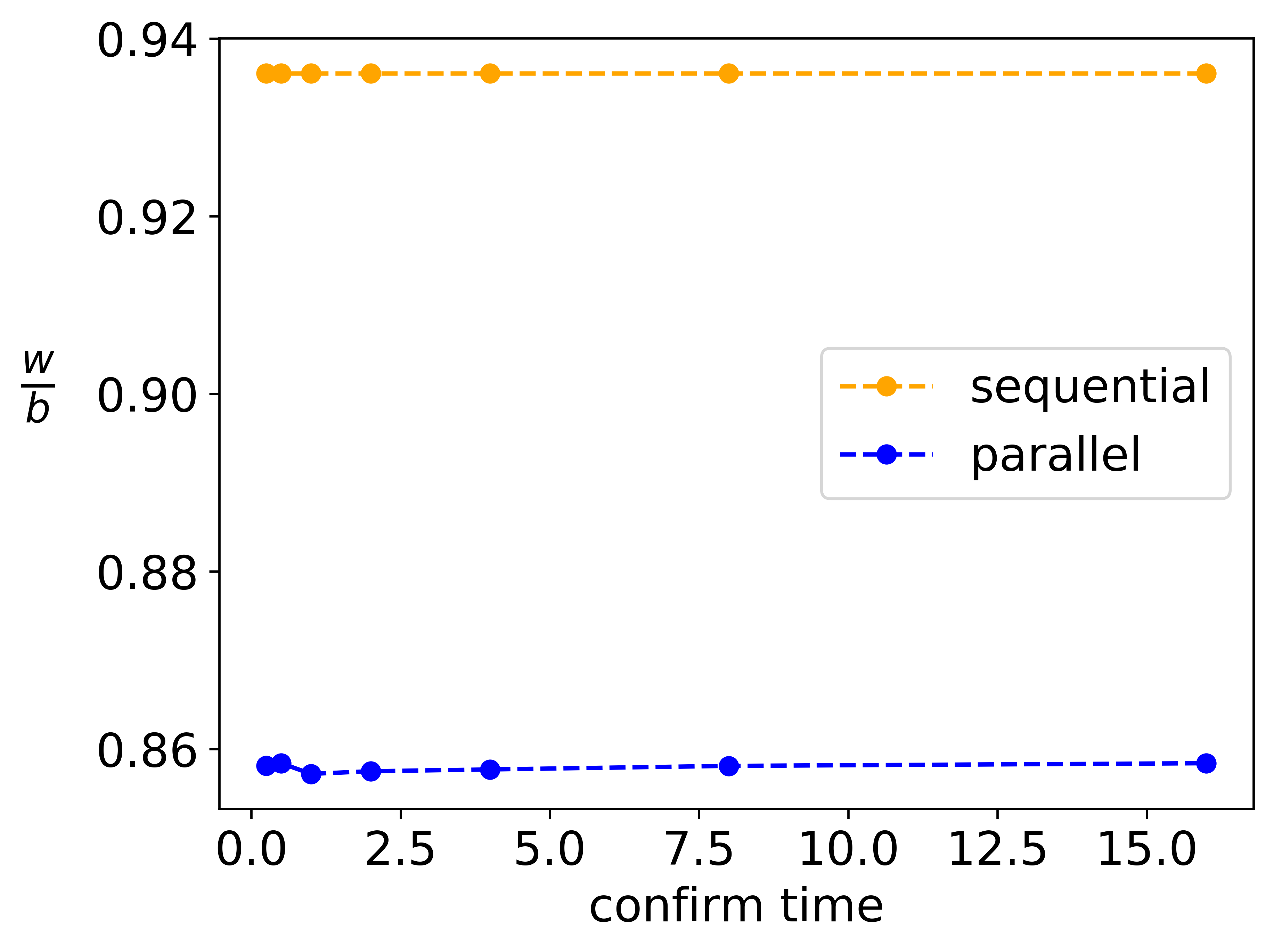}
    \caption{$\frac{w}{b}$ v.s. confirm time} \label{fig:wage_budget_confirmtime}
  \end{subfigure}%
  \hfill   
  \begin{subfigure}{0.48\columnwidth}
    \includegraphics[width=\columnwidth]{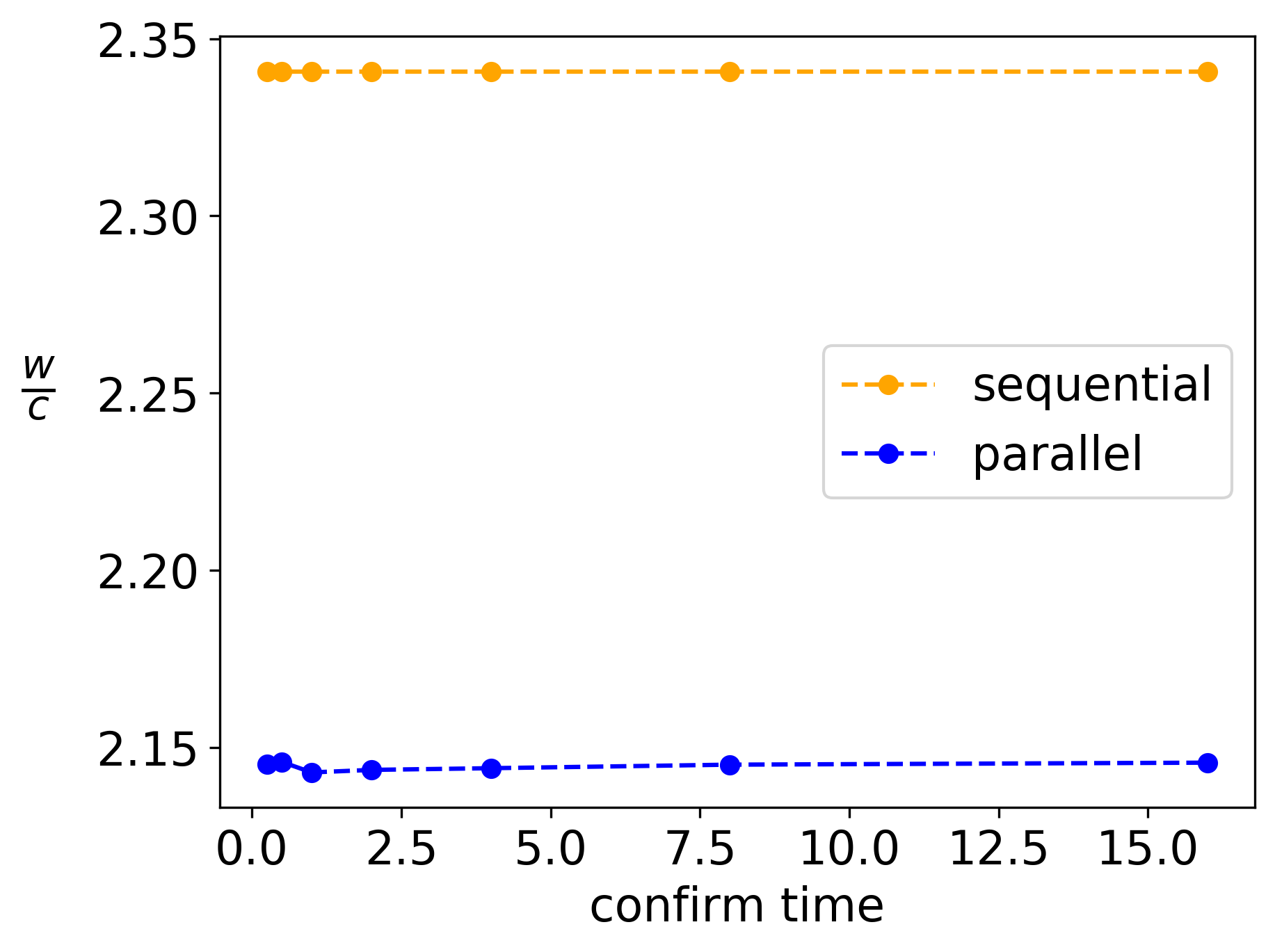}
    \caption{$\frac{w}{c}$ v.s. confirm time} \label{fig:wage_cost_confirmtime}
  \end{subfigure}%
  \hspace*{\fill}   
  \begin{subfigure}{0.48\columnwidth}
    \includegraphics[width=\columnwidth]{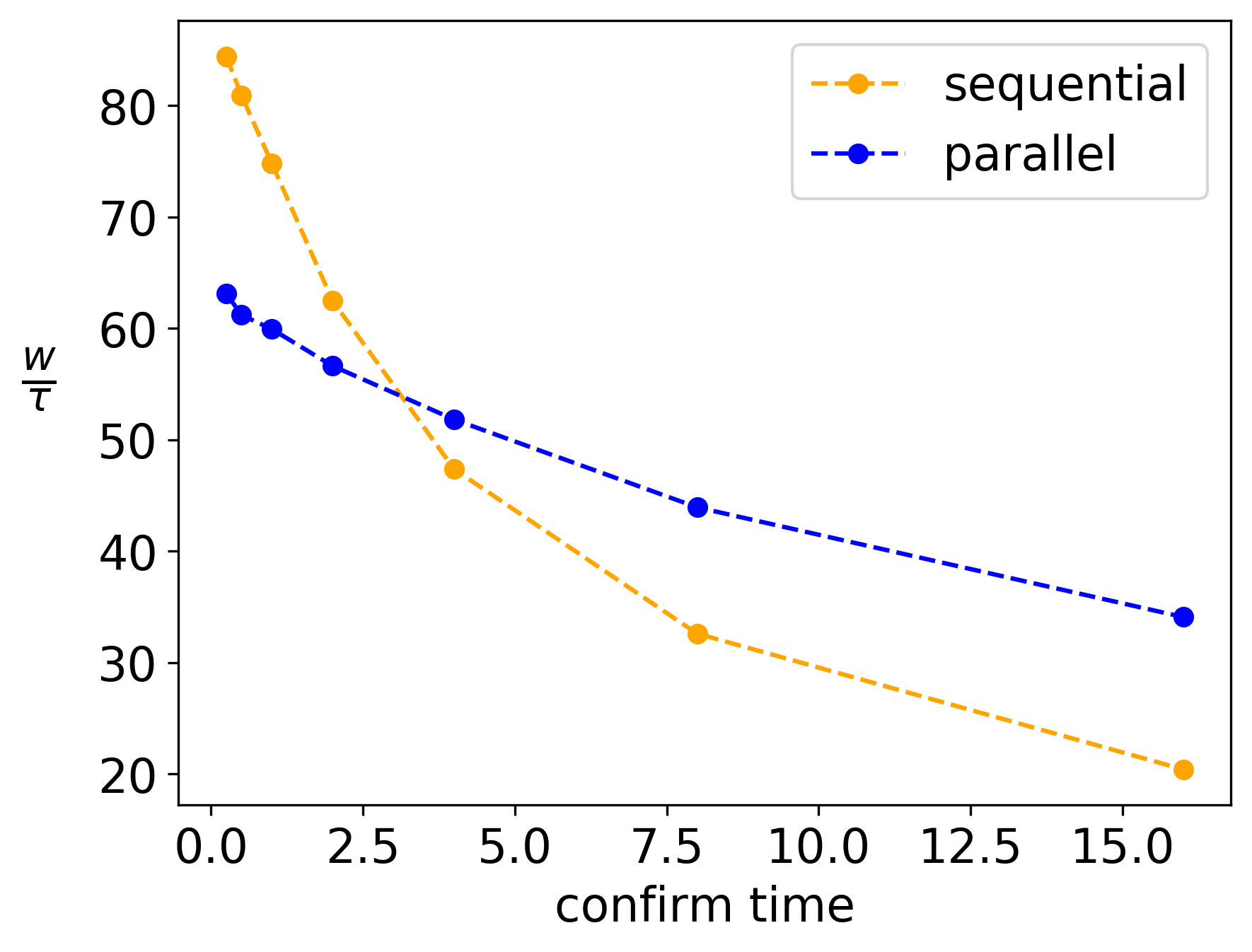}
    \caption{$\frac{w}{\tau}$ v.s. confirm time} \label{fig:wage_time_confirmtime}
  \end{subfigure}%
\caption{Impact of confirmation time}
\label{fig:confirmtime}
\end{figure}

Figure~\ref{fig:confirmtime} shows 
the comparisons as the confirmation time varies while other parameters take their default values.
As we can see, when the confirmation time is small (e.g., close to $0$), 
sequential releasing has only slightly higher delay than parallel releasing;
this is because, the fund for deposit can be returned and then reused quickly for both algorithms. 
However, as the confirmation time increases,
the returning of deposit funds gets slower for both, while
the parallel one can still release tasks when funds are only partly returned; 
hence, the gap between two algorithms gets wider and wider
though their delays both increase. 
%
%
%

Without surprise,
$\frac{w}{b}$ and $\frac{w}{c}$ 
have nearly no change when confirm time increases,
for both the sequential and the parallel releasing algorithms. 
But as the two algorithms have different speeds in increasing the delay for task,
they also decrease their $\frac{w}{\tau}$ at different rates.
$\frac{w}{\tau}$ of the sequential releasing algorithm
drops faster, and quickly becomes lower than that of its parallel counterpart.

\begin{figure}[htb!]
  \begin{subfigure}{0.48\columnwidth}
    \includegraphics[width=\columnwidth]{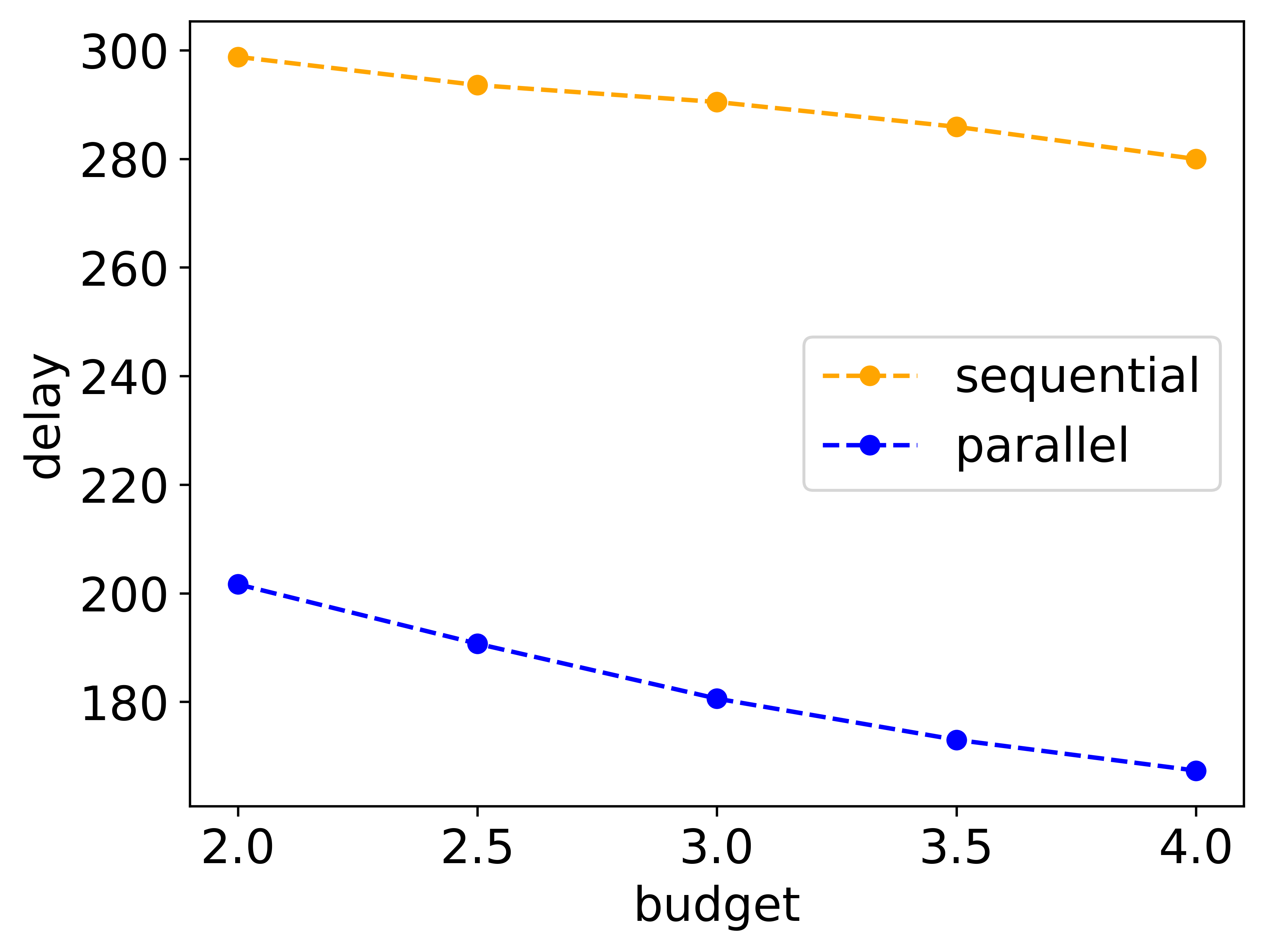}
    \caption{delay v.s. budget} \label{fig:delay_budget}
  \end{subfigure}%
  \hspace*{\fill}   
  \begin{subfigure}{0.48\columnwidth}
    \includegraphics[width=\columnwidth]{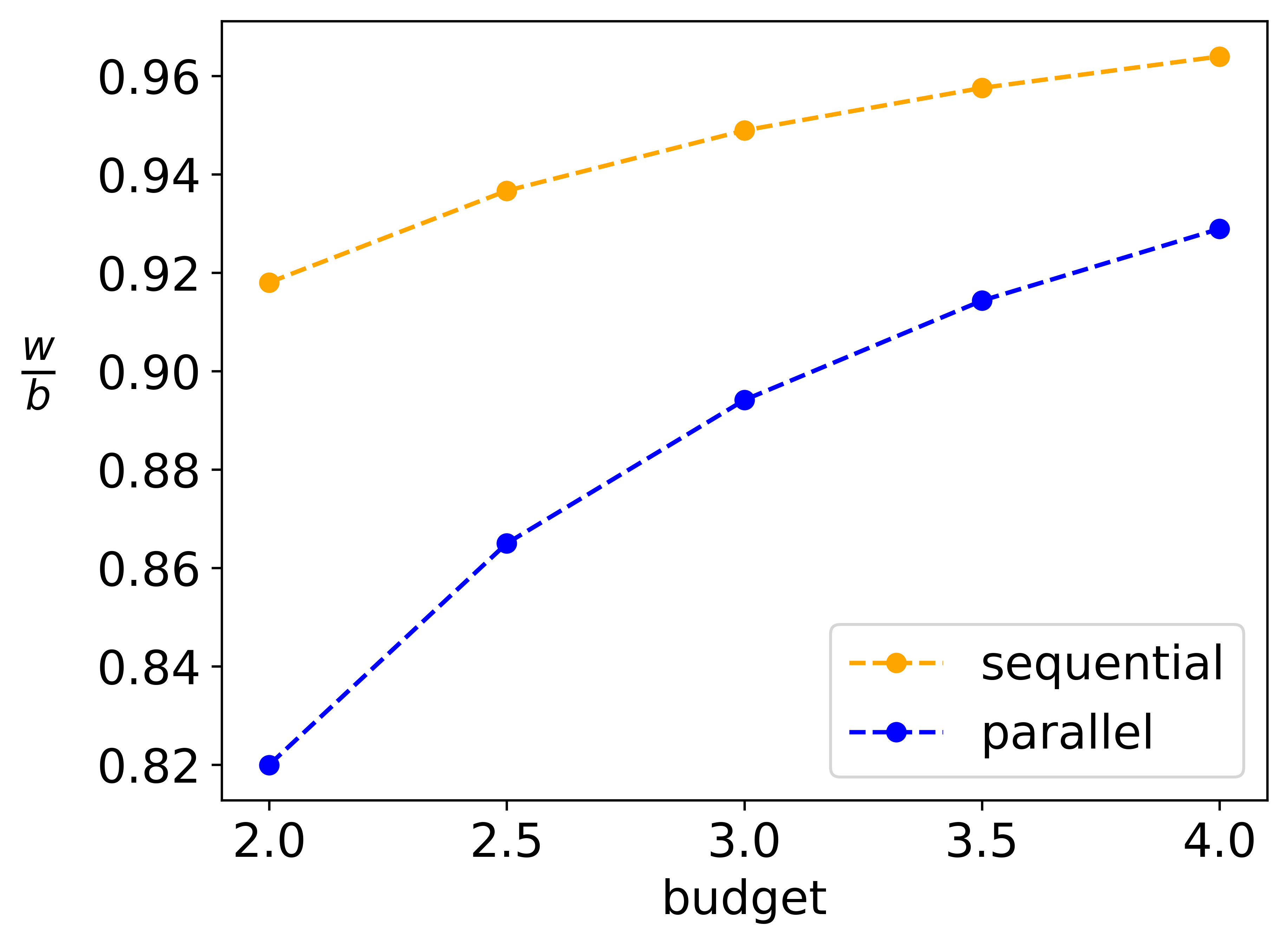}
    \caption{$\frac{w}{b}$ v.s. budget} \label{fig:wage_budget_budget}
  \end{subfigure}%
  \hfill   
  \begin{subfigure}{0.48\columnwidth}
    \includegraphics[width=\columnwidth]{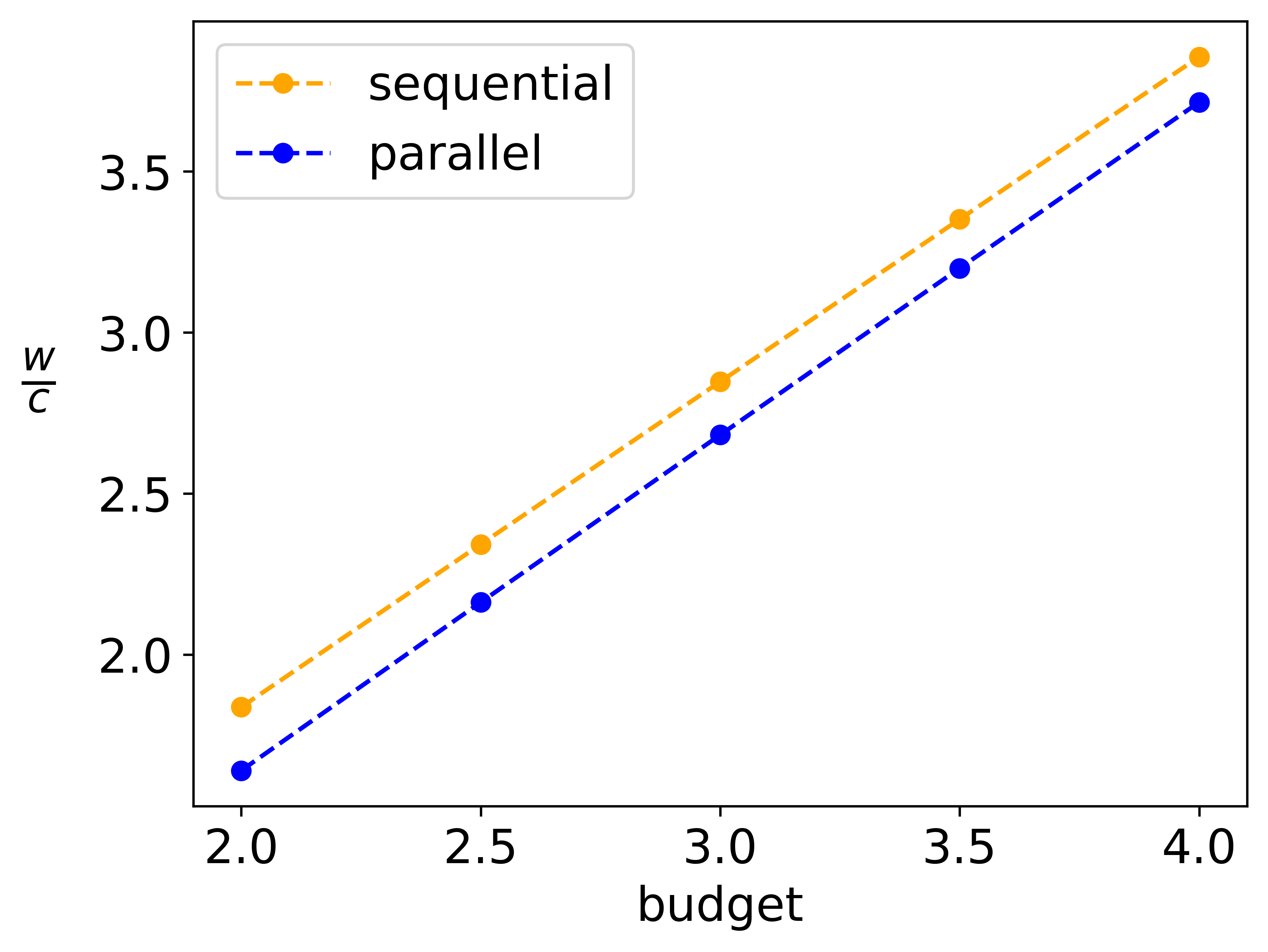}
    \caption{$\frac{w}{c}$ v.s. budget} \label{fig:wage_cost_budget}
  \end{subfigure}%
  \hspace*{\fill}   
  \begin{subfigure}{0.48\columnwidth}
    \includegraphics[width=\columnwidth]{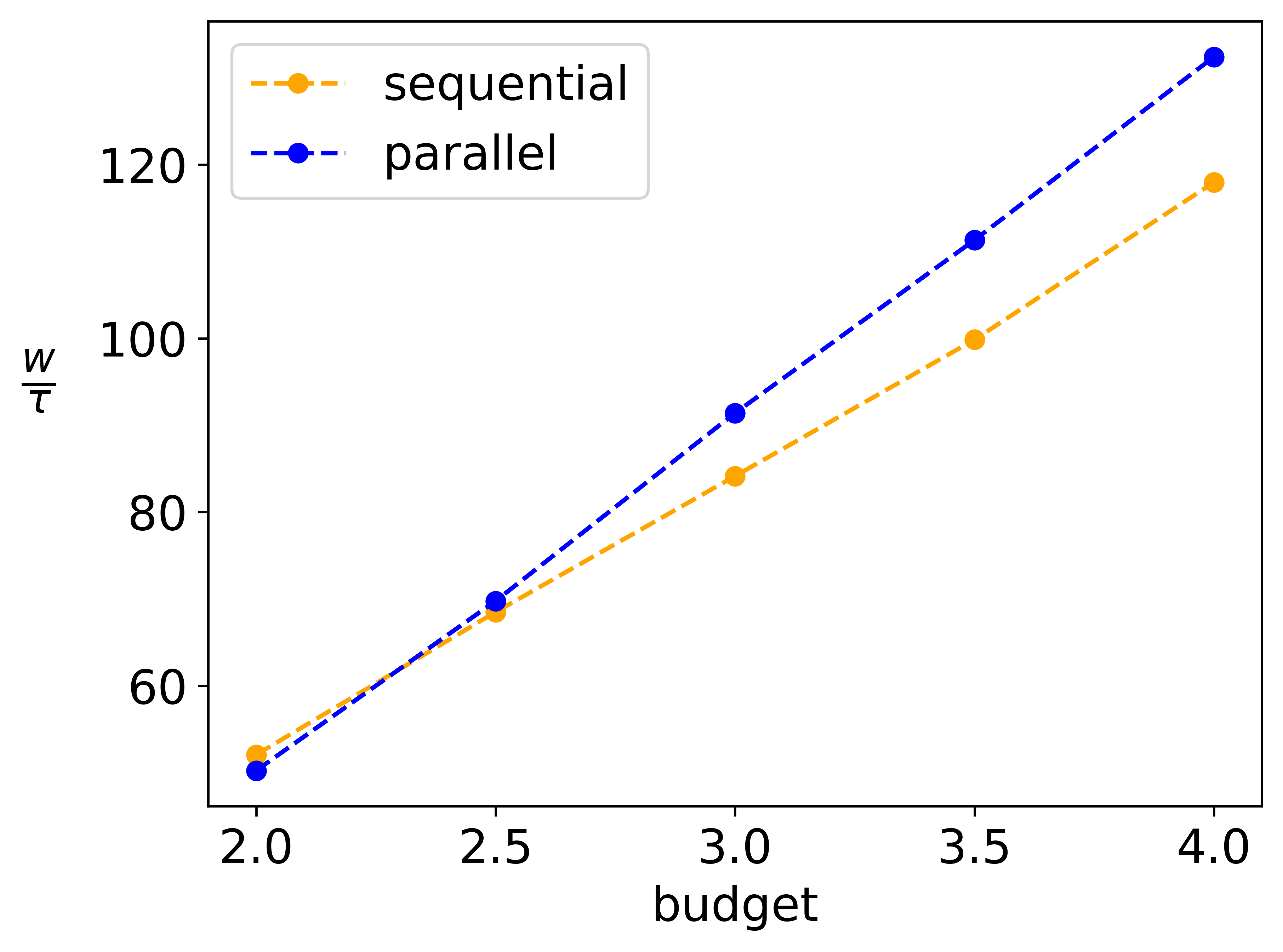}
    \caption{$\frac{w}{\tau}$ v.s. budget} \label{fig:wage_time_budget}
  \end{subfigure}%
\caption{Impact of clients' budgets}
\label{fig:budget}
\end{figure}

Figure~\ref{fig:budget} shows 
the comparisons as $b$, the clients' average budget per task, varies 
while other parameters take their default values.
As we can see from Figure~\ref{fig:delay_budget},
the average delay of task drops for both algorithms when the budget rises,
and the parallel releasing algorithm has faster decrease.
This is because: First, 
the delay as function $f(x,i,j)$ for each task $t_{i,j}$ as well as 
its totally assigned deposit and budget $x$, has the property that 
$f'(x,i,j)<0$ and $f''(x,i,j)>0$; hence,  
it decreases as each task is assigned with more budget 
while other conditions remain the same.
Second, the sequential releasing algorithm assigns more deposit to each released task
than the parallel one; hence, due to $f''(x,i,j)<0$,
a task released by the parallel releasing algorithm has smaller sum of deposit and budget,
and thus its delay decreases faster as its budget increases.
Similarly, the wage from a task as function $w(x,i,j)$ has the property that
$w'(x,i,j)>0$ and $w''(x,i,j)<0$. Hence, it explains that, 
as shown in Figure~\ref{fig:wage_budget_budget} and \ref{fig:wage_cost_budget}, 
both $\frac{w}{c}$ and $\frac{w}{b}$ increase with $b$ for both algorithms
while the parallel releasing algorithm has faster increase. 
Resulting from the above trends, 
as shown in Figure~\ref{fig:wage_time_budget},
$\frac{w}{\tau}$ increases with budget and the parallel releasing algorithm has faster increase.


\comment{
\begin{figure}[htb!]
  \begin{subfigure}{0.48\columnwidth}
    \includegraphics[width=\columnwidth]{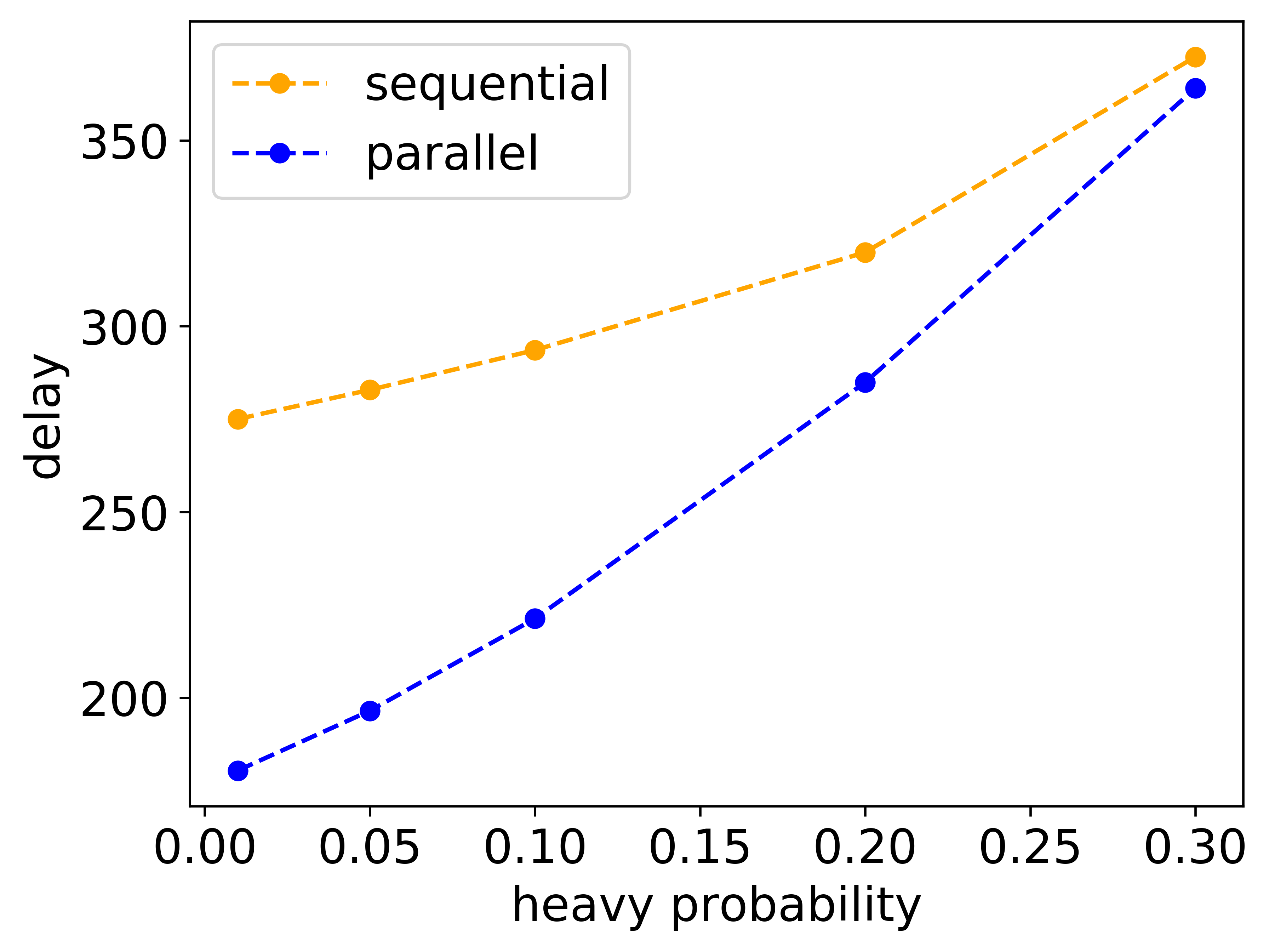}
    \caption{delay v.s. heavy probability} \label{fig:delay_heavyprob}
  \end{subfigure}%
  \hspace*{\fill}   
  \begin{subfigure}{0.48\columnwidth}
    \includegraphics[width=\columnwidth]{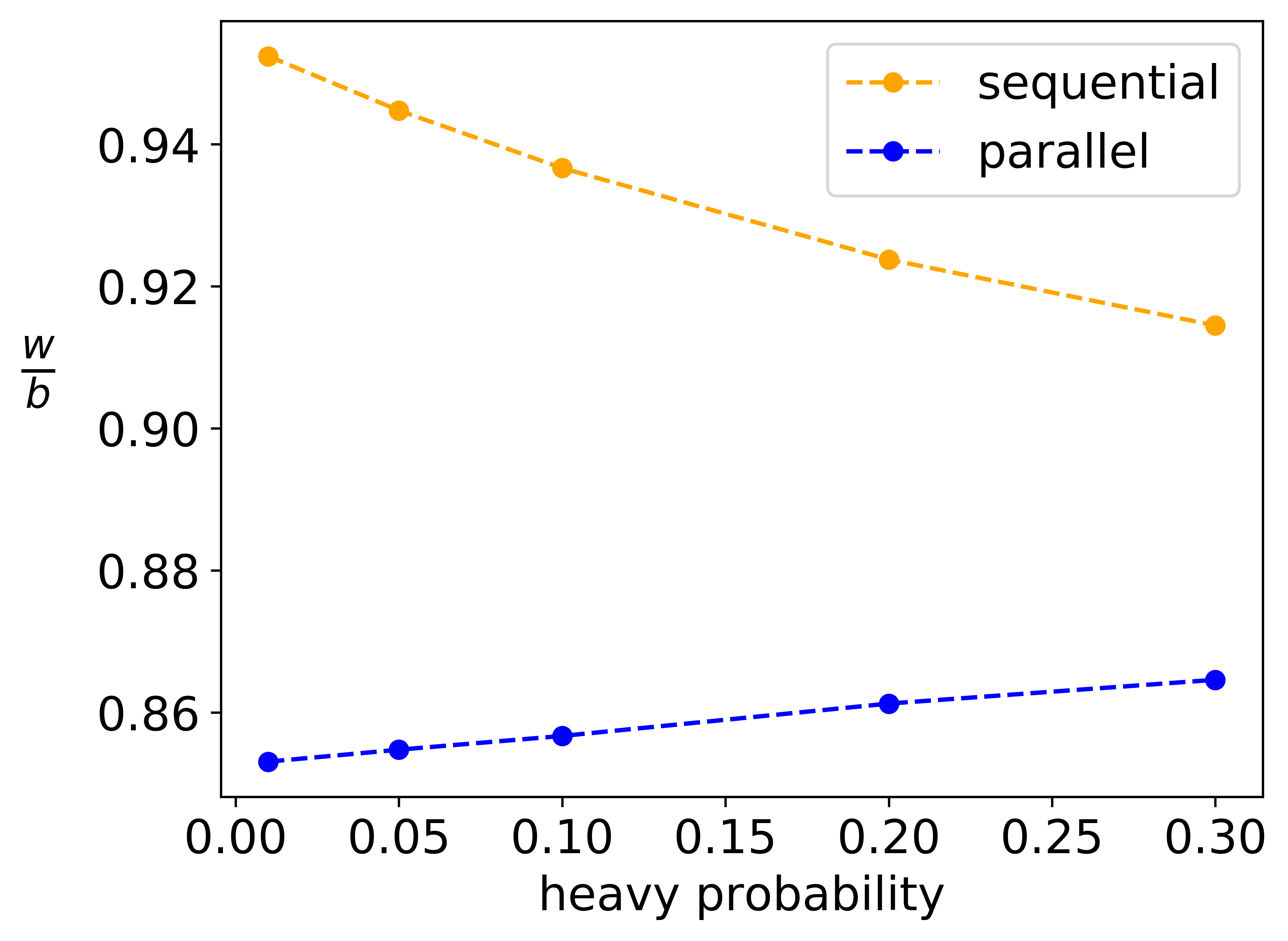}
    \caption{$\frac{w}{b}$ v.s. heavy probability} \label{fig:wage_budget_heavyprob}
  \end{subfigure}%
  \hfill   
  \begin{subfigure}{0.48\columnwidth}
    \includegraphics[width=\columnwidth]{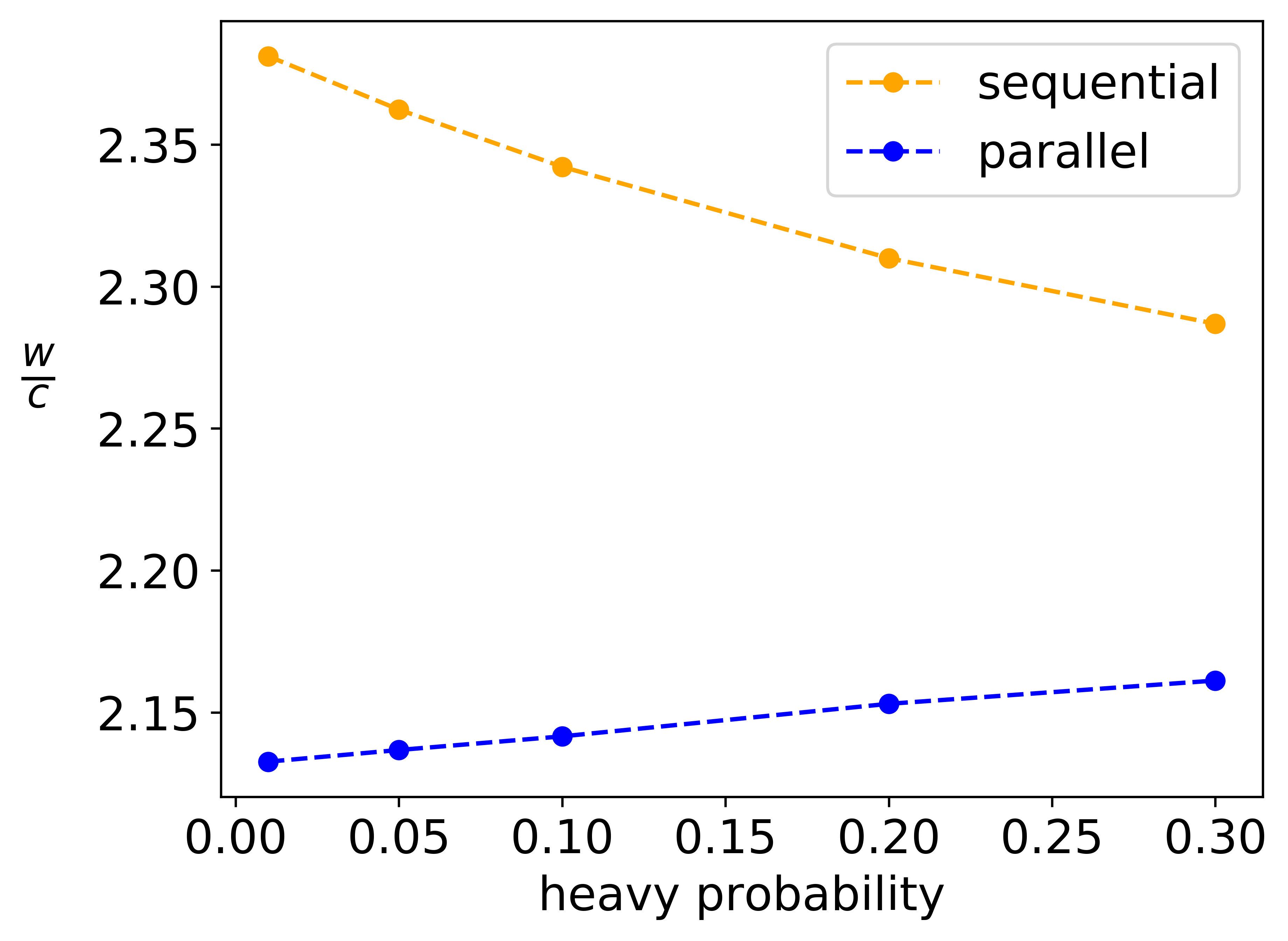}
    \caption{$\frac{w}{c}$ v.s. heavy probability} \label{fig:wage_cost_heavyprob}
  \end{subfigure}%
  \hspace*{\fill}   
  \begin{subfigure}{0.48\columnwidth}
    \includegraphics[width=\columnwidth]{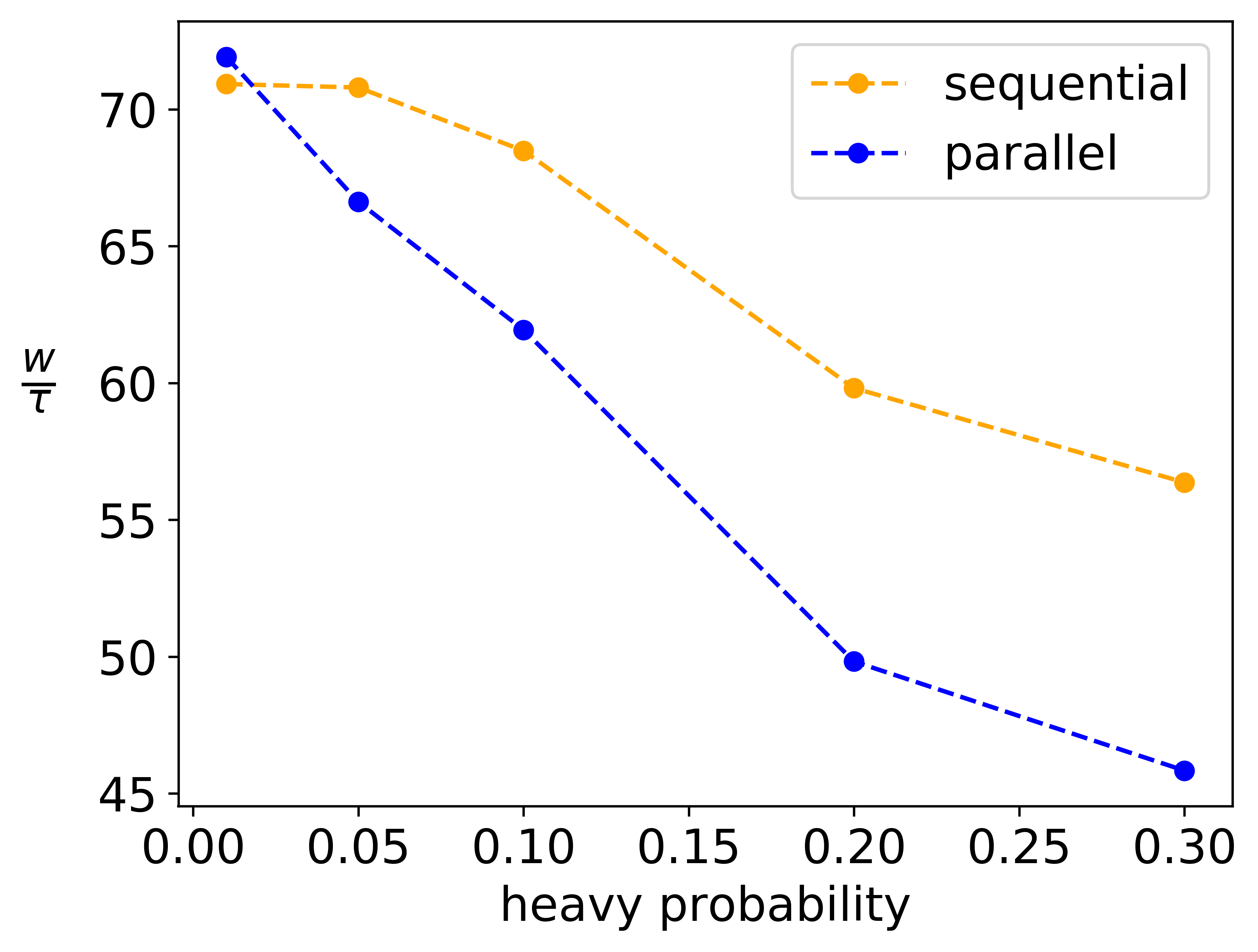}
    \caption{$\frac{w}{\tau}$ v.s. heavy probability} \label{fig:wage_time_heavyprob}
  \end{subfigure}%
\caption{Impact of heavy probability}
\label{fig:heavyprob}
\end{figure}

Figure~\ref{fig:heavyprob} shows how the delay and wage change with the probability of a task to be heavy when $T_{confirm}=2\tau$, $c=32\tau$, $b=2.5c$, $d=\frac{120c}{\tau}$ and $p_{heavy}\in\{0.01,0.05,0.1,0.2,0.3\}$. When heavy probability increases, the delay for both sequential and parallel increase (Fig~\ref{fig:delay_heavyprob}) which is reasonable as the average computation delay, the TTP delay as well as the release delay all increase. 

In sequential releasing, $\frac{w}{b}$ and $\frac{w}{c}$ decrease as $p_{heavy}$ increases. Because in the sequential releasing, each task is assigned with all available deposits $d=\frac{120c}{\tau}$. For a regular task with cost $c$ and a heavy task with cost $3c$, in order to make these two tasks have the same $\frac{w}{c}$, the budget and deposit assigned to heavy task should have 3 times to the value of a regular task. However, the deposit assigned to each task are the same, is $d$, which means heavy task has a smaller $\frac{w}{c}$. Therefore, when the heavy task probability increases, the average $\frac{w}{c}$ also increases which is verified in Fig~\ref{fig:wage_cost_heavyprob}. 
In parallel releasing, $\frac{w}{b}$ and $\frac{w}{c}$ increases with the increasing of $p_{heavy}$. As we take $\alpha=1$ and $\beta=0.6$ in the parallel releasing, each round there are at most four regular tasks released and only one heavy task can be released at a time. That is, when in the parallel release round with four regular tasks, each task only assigned with $\frac{d}{4}$. For each heavy task, it always assigned with $d$. Therefore, the higher percentage of heavy task leads to higher $\frac{w}{b}$ and $\frac{w}{c}$. 
In Fig~\ref{fig:wage_time_heavyprob}, $\frac{w}{\tau}$ decreases as $p_{heavy}$ increases. That is, when more heavy tasks streamed in, if the total deposits in the system is fixed, the wage earned per time interval decreases.
}

\begin{figure}[htb!]
  \begin{subfigure}{0.48\columnwidth}
    \includegraphics[width=\columnwidth]{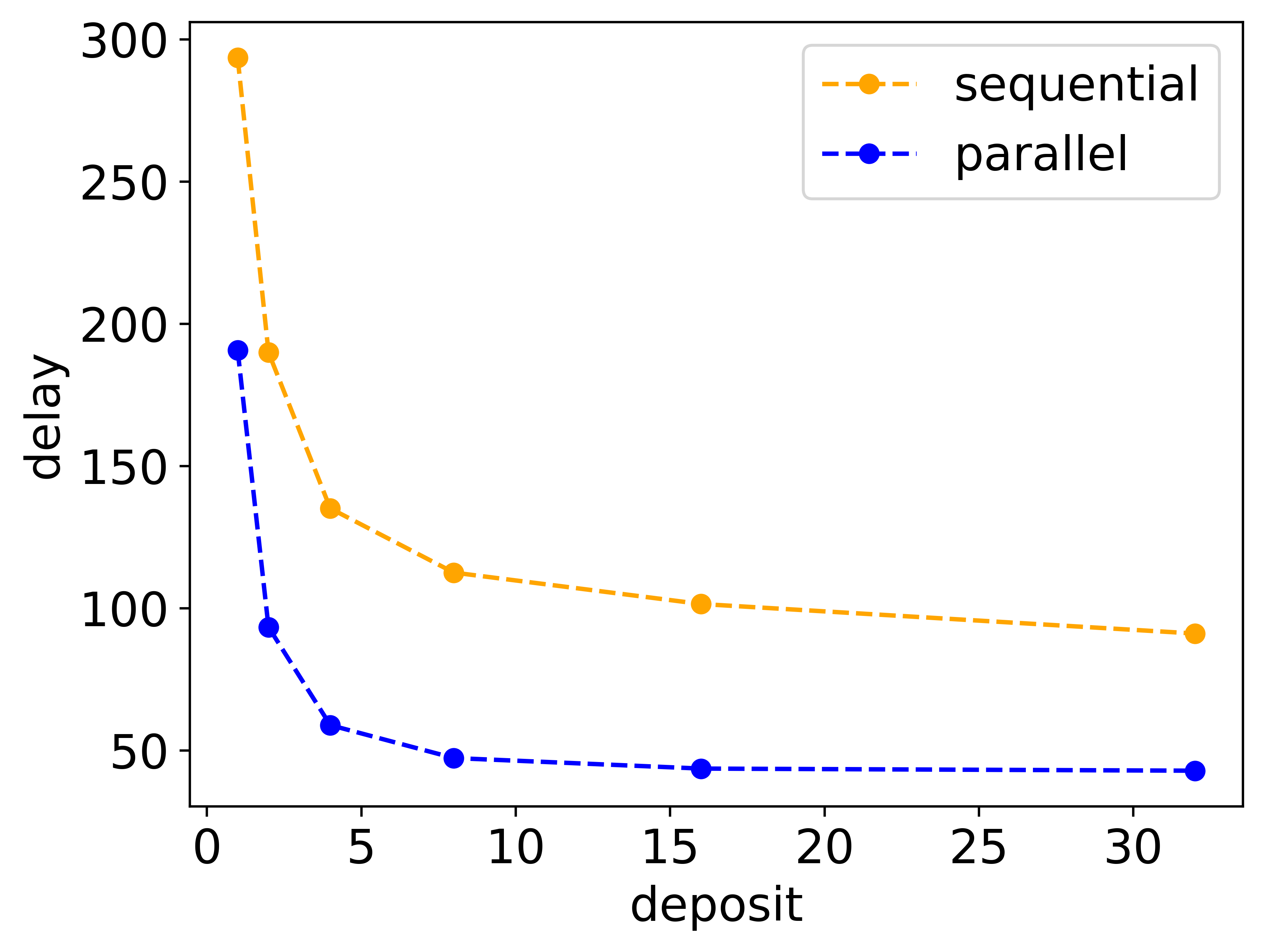}
    \caption{delay v.s. deposit} \label{fig:delay_deposit}
  \end{subfigure}%
  \hspace*{\fill}   
  \begin{subfigure}
  {0.48\columnwidth}
    \includegraphics[width=\columnwidth]{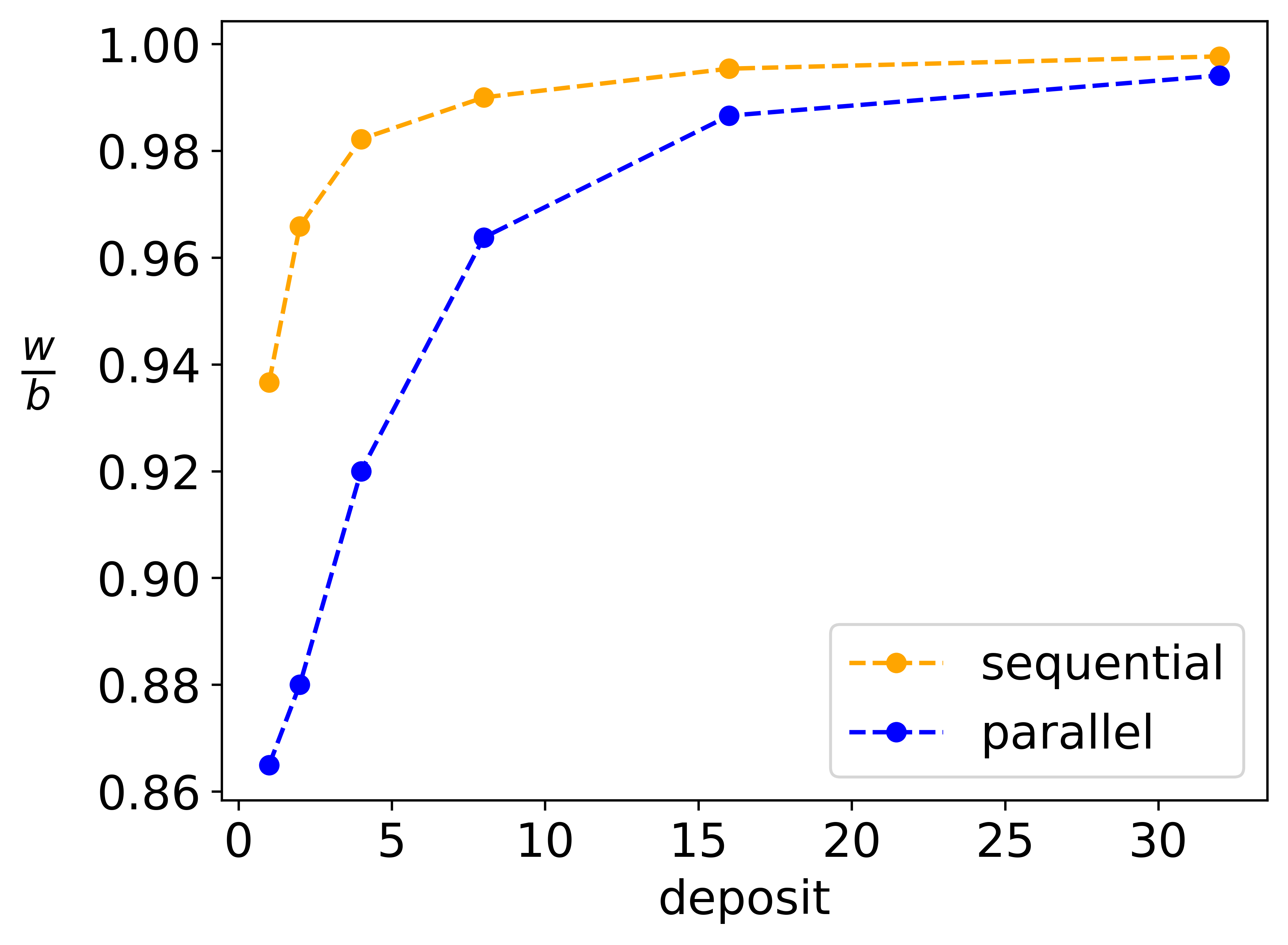}
    \caption{$\frac{w}{b}$ v.s. deposit} \label{fig:wage_budget_deposit}
  \end{subfigure}%
  \hfill   
  \begin{subfigure}{0.48\columnwidth}
    \includegraphics[width=\columnwidth]{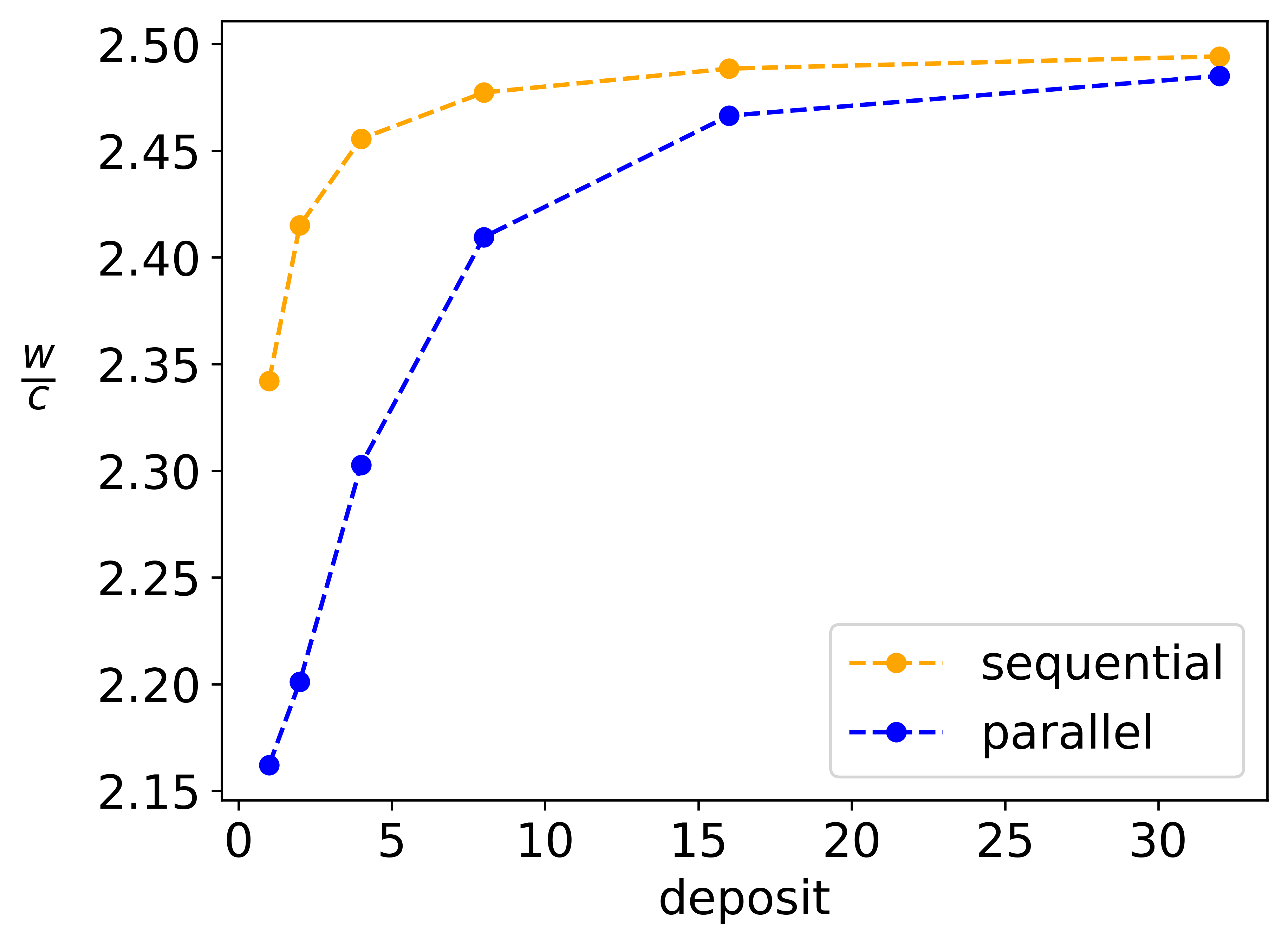}
    \caption{$\frac{w}{c}$ v.s. deposit} \label{fig:wage_cost_deposit}
  \end{subfigure}%
  \hspace*{\fill}   
  \begin{subfigure}{0.48\columnwidth}
    \includegraphics[width=\columnwidth]{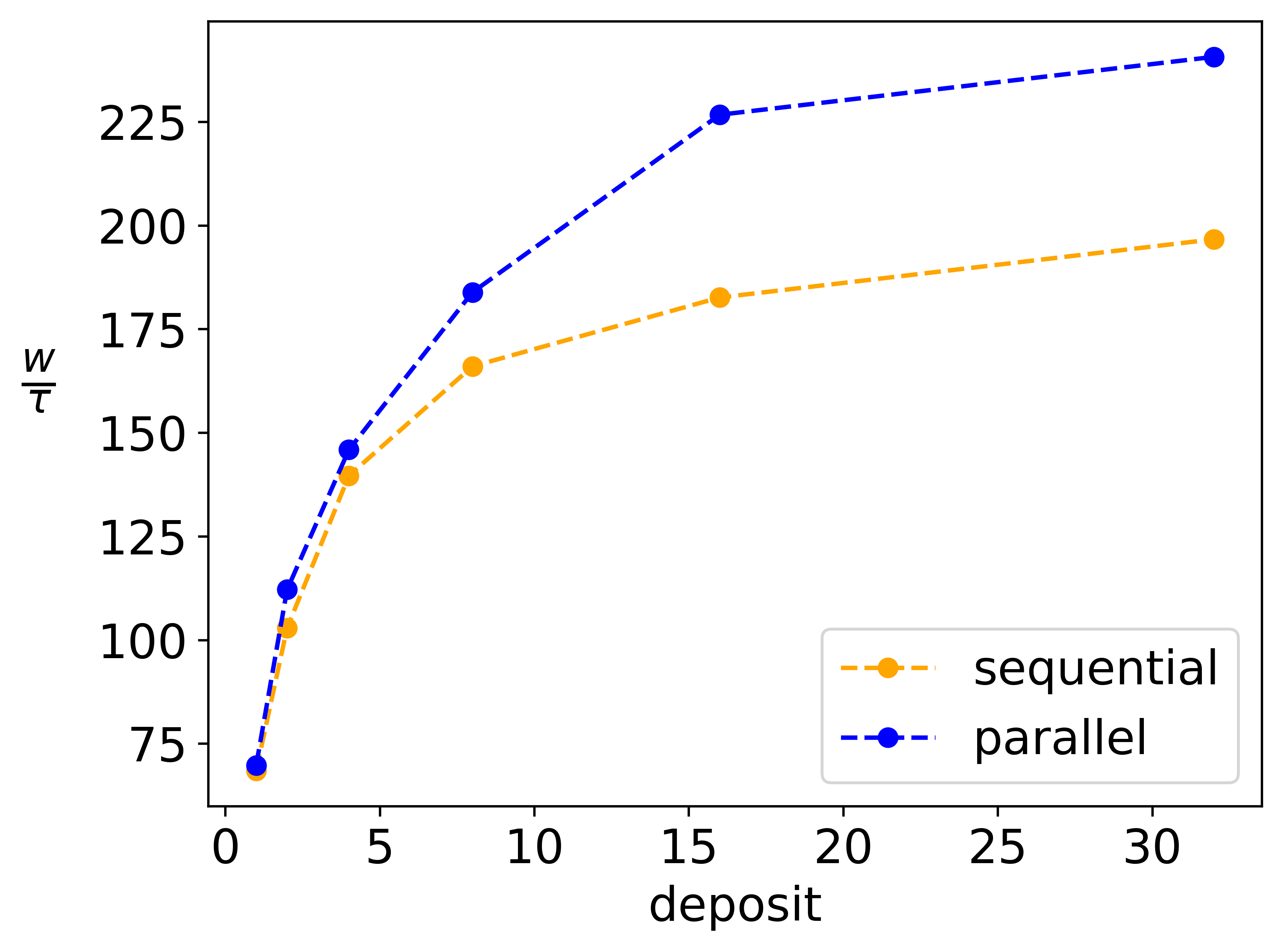}
    \caption{$\frac{w}{\tau}$ v.s. deposit} \label{fig:wage_time_deposit}
  \end{subfigure}%
\caption{Impact of server's deposits}
\label{fig:deposit}
\end{figure}

Figure~\ref{fig:deposit} shows the comparisons 
as the server's budget for deposit, i.e., $d$ varies from $d_0=768\tau$ to $32d_0$, 
while the other parameters take their default values.
As demonstrated by Figure~\ref{fig:delay_deposit},
the average delay of task decreases for both the sequential and parallel algorithms 
as the server's fund for budget increases.
This is because, as the fund increases, 
every released task is assigned with larger deposit,
which results in lower probability of hiring TTP and thus lower delay. 
The larger deposits also lead to larger wage that the server can earn from each released task,
as shown in Figures~\ref{fig:wage_budget_deposit} and \ref{fig:wage_cost_deposit}.
These figures also demonstrate that, 
the parallel algorithm increases the wage at higher speed.
This is because, as we know from the analysis in the previous sections, 
the wage is an increasing function of deposit and its second derivative is negative;
hence, the increase of wages earned from multiple concurrently-released tasks
is higher than the increase of wage earned from only one single task,
when the wage increase is due to the same amount of increased deposit. 
Resulting from the above, as shown in Figure~\ref{fig:wage_time_deposit},
$\frac{w}{\tau}$ is also increased with the fund for deposit, 
and the parallel algorithm is shown to have faster increase in $\frac{w}{\tau}$ 
than the sequential one. 

To summarize, 
we find that the following from the simulations:
\begin{itemize}
    \item 
    The sequential releasing algorithm earns higher wage per task than the parallel algorithm.
    \item 
    In generally, the parallel releasing algorithm incurs lower average delay of task than the sequential algorithm. 
    \item 
    When the server's deposit is much larger than the cost of each individual task 
    (which is common in practical cloud computing systems), the parallel releasing algorithm 
    can earn a similar level of wage per task as the parallel algorithm while incurring much lower delay of task;
    hence, the parallel algorithm is preferred in practice. 
\end{itemize}




\section{Conclusion and Future Works}
\label{sec::conclusion}

In this paper, 
we study the verifiable computation outsourcing problem in the setting where 
a cloud server services a static set or a dynamic sequence of tasks 
submitted by multiple clients.
We adopt a game-based model, where
the cloud server should make a deposit for each task it takes,
each client should allocate a budget that 
includes the wage paid to the server and the possible cost for hiring TTP 
for each task it submits, and
every party (i.e., each of the server and the clients) 
has its limited fund that can be used for either deposits or task budgets.
We study how the funds should be optimally allocated to achieve the three-fold goals:
a rational cloud server should honestly compute each task it takes;
the server's wages earned from computing the tasks are maximized;
and the overall delay experienced by each task for verifying her tasks is minimized.
Specifically, 
we apply game theory to formulate the optimization problems,
and develop the optimal or heuristic solutions 
for three application scenarios:
one client outsources a static set of tasks to the server;
multiple clients outsource a static set of tasks to the server;
multiple clients outsource a dynamic sequence of tasks to the server.
For each of the solutions,
we analyze the solutions through either
rigorous proofs or extensive simulations.

In the future, we will study in more depth the setting 
where there are multiple clients submitting dynamic sequences of tasks to the server. 
As it is challenging to develop optimal solution for the currently-defined general setting, 
we will explore to refine the problem with reasonable constraints
and then develop an optimal solution for it.

\ifCLASSOPTIONcompsoc
  \section*{Acknowledgments}
\else
  \section*{Acknowledgment}
\fi

The work is partly sponsored by NSF under grant CNS-1844591.

\ifCLASSOPTIONcaptionsoff
  \newpage
\fi




\bibliographystyle{IEEEtran}
\bibliography{reference}

\section*{Appendix 1 Proof of Lemma~\ref{lem:singleClient-1}}

For the optimization problem defined in (\ref{eq:min_delay-revised2}),
it is obvious that every constraint is convex.
Next, we only need to show that the objective function is straightly convex.
To simplify presentation, we introduce vector $\vec{s}_i$ to denote
$\langle s_{i,1}, \cdots, s_{i,n_i}\rangle$, and $f(\vec{s}_i)$
to denote the objective function. 
We further use $K_{i,j}$ to denote $c_{i,j}\vec{c}_{i,j}$, which is a constant. 
The Hessian matrix of $f(\vec{s}_i)$ has only $n_i$ non-zero diagonal elements, i.e., 
    $\frac{\partial^2 f(\vec{s}_i)}{\partial s^2_{i,j}} 
    = \frac{2K_{i,j}}{(s_{i,j}^2-4K_{i,j})^{\frac{3}{2}}} > 0$, 
where $j \in \{1,2,\dots,n_i\}$. 
Hence, the Hessian matrix of $f(\vec{s}_i)$, denoted as $H_i$, has the form of
\begin{equation}\nonumber
    \begin{pmatrix}
\frac{2K_{i,1}}{(s_{i,1}^2-4K_{i,1})^{\frac{3}{2}}} & ... & 0 & ... & 0 \\
0 & \frac{2K_{i,2}}{(s_{i,2}^2-4K_{i,2})^{\frac{3}{2}}} &  0 &  ... & 0  \\
\vdots & \vdots & \vdots & \vdots & \vdots\\
0 & ... & ... & ... & 0 \\
0 & ... & ... & ... & \frac{2K_{i,n_i}}{(s_{i,n_i}^2-4K_{i,n_i})^{\frac{3}{2}}} \\
\end{pmatrix} \\
\end{equation} 

Letting $z=(z_1, ..., z_{n_i})^T$ be any non-zero column vector, 
where $T$ denotes transpose, we have
\begin{align}
\begin{pmatrix}
    z_1, \ldots, z_{n}
  \end{pmatrix}
  H_i
  \begin{pmatrix}
    z_1 \\
    \vdots\\
    z_{n}
  \end{pmatrix}
  \notag 
= 
\sum_{j = 1}^{n_i} z_j^2\frac{2K_{i,j}}{(s_{i,j}^2-4K_{i,j})^{\frac{3}{2}}} 
> 
0,\notag
\end{align}
where $H_i$ is positive definite and 
$f(\vec{x}_i)$ is strictly convex. 

\section*{Appendix 2 Proof of Lemma~\ref{lem:singleClient-2}}

Phase I of Algorithm~\ref{alg:SingleClient} is simply to satisfy constraint~(\ref{eq:constraint-revised2}),
Hence we focus on studying Phase II, which allocate the remaining fund greedily. 
In the following, we prove (by contradiction) that the strategy of greedy allocation in Phase II 
leads to a solution of the optimization problem~(\ref{eq:min_delay-revised}),
which is unique due to Lemma~\ref{lem:singleClient-1}.

Assume the optimal allocation strategy for the remaining fund is 
$\vec{o}=\langle o_{i,1},o_{i,2},\cdots,o_{i,n_i}\rangle$, 
where each $o_{i,j}$ is the sum of budget and deposit assigned to task $t_{i,j}$ from the remaining fund; 
but Phase II leads to a different allocation strategy 
$\vec{g}=\langle g_{i,1},g_{i,2},\cdots,g_{i,n_i}\rangle$.
Then, there are some tasks (at least one) which receive more or less allocation than the optimal solution;
let $M$ and $L$ denote the task sets respectively.
Let $\vec{c}= \langle c_{i,1}, c_{i,2},\dots,c_{i,n_i}\rangle$ 
denote the common parts of $\vec{g}$ and $\vec{o}$; 
that is, $c_j = min(g_{i,j}, o_{i,j})$ for $j\in\{1,\cdots,n_i\}$.

Let $v_c$, $v_g$ and $v_o$ be outcomes of the objective function of \eqref{eq:min_delay-revised} 
given by allocation $\vec{c}$, $\vec{g}$ and $\vec{o}$ respectively. 
As $\vec{o}$ leads to the unique optimal solution, we have:
    $v_g > v_o$. 
We define $diff_{c,o}$ and $diff_{c,g}$ as follows:
    $diff_{c,o} = v_o - v_c = \sum _{t_{i,j} \in L} \int_{c_{i,j}}^{o_{i,j}} f'(x,i,j) \,dx$, 
    $diff_{c,g} = v_g - v_c = \sum _{t_{i,j} \in M} \int_{c_{i,j}}^{g_{i,j}} f'(x,i,j) \,dx$. 
%
Here, $f'(x,i,j)$ is the partial derivative function of $f(x,i,j)$ which has
\begin{equation}
\label{eq:f(x)_first_derivative}
   f'(x,i,j)=\frac{1}{2}(1-\frac{x}{\sqrt{x^2-4c_{i,j}\hat{c}_{i,j}}})<0; 
\end{equation}
that is, $f(x,i,j)$ is decreasing monotonously. 
Let $f''(x,i,j)$ be the partial derivative function of $f'(x,i,j)$ and it holds
 \begin{equation}
 \label{eq:f(x)_second_derivative}
     f''(x,i,j)=\frac{2c_{i,j}\hat{c}_{i,j}}{{\sqrt{x^2-4c_{i,j}\hat{c}_{i,j}}}^3}>0;
 \end{equation}
i.e., $f(x,i,j)$ decreases monotonically but gets slower as $x$ increases 
(as shown by Fig.~\ref{fig:f(x,i,j)}). 
\begin{figure}
    \centering
    \includegraphics[width=0.7\columnwidth]{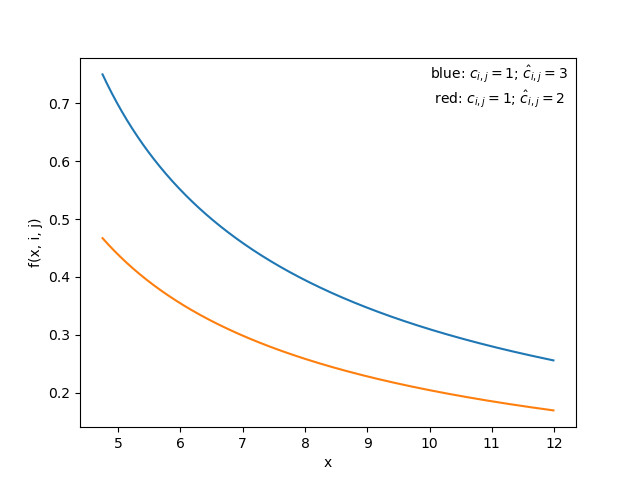}
    \caption{Illustration of $f(x, i, j)$.}
    \label{fig:f(x,i,j)}
\end{figure}

According to \eqref{eq:f(x)_first_derivative} and \eqref{eq:f(x)_second_derivative}, $f'(x,i,j)$ increases as $x_{i,j}$ increases. Therefore, for each task $t_{i,j} \in L$, $f'(x,i,j) > f'(c_{i,j},i,j)$ for $c_{i,j} < x \leq o_{i,j}$. Let $f'_{omin} = min(f'(c_{i,j},i,j))$ for all tasks $t_{i,j} \in L$, then we have 
\begin{align}
\label{eq:diff_{C,O}}
    & diff_{c,o} = \sum _{t_{i,j} \in L} \int_{c_{i,j}}^{o_{i,j}} f'(x,i,j) \,dx \nonumber\\
    & > \sum_{t_{i,j} \in L} \int_{c_{i,j}}^{o_{i,j}} f'(c_{i,j},i,j) \,dx \nonumber\\
    & 
    = \sum_{t_{i,j} \in L} ((o_{i,j} - c_{i,j}) \times f'(c_{i,j},i,j)) \nonumber\\
    & \geq  \sum_{t_{i,j} \in L} (o_{i,j} - c_{i,j}) \times f'_{omin} 
    = \Delta \times f'_{omin},
\end{align}
where $\Delta = \sum_{t_{i,j} \in L} ((o_{i,j} - c_{i,j})$. 
Since the total amount of budget and deposit is fixed, 
we have $\Delta = \sum_{t_{i,j} \in L} (o_{i,j} - c_{i,j}) = \sum_{t_{i,j} \in M} (g_{i,j} - c_{i,j})$. 
Meanwhile, let $f'_{gmax} = max(f'(x,i,j))$ for all tasks $t_{i,j} \in M$ and $c_{i,j} \leq x \leq g_{i,j}$. 
Since our greedy algorithm picks the task with the minimal first-derivative at each step and it chooses tasks in $M$ over tasks in $L$, we have $f'_{gmax} \leq f'_{omin}$. Therefore we have 
\begin{align}
\label{eq:diff_{C,G}}
    &diff_{c,g} = \sum _{t_{i,j} \in M} \int_{c_{i,j}}^{g_{i,j}} f'(x,i,j) \,dx \nonumber \\
    &< \sum_{t_{i,j} \in M} \int_{c_{i,j}}^{g_{i,j}} f'_{gmax} \,dx 
    = \sum_{t_{i,j} \in L} ((o_{i,j} - c_{i,j}) \times f'_{gmax} \nonumber\\
    & = \Delta \times f'_{gmax}  
    \leq \Delta \times f'_{omin} 
    < diff_{c,o},
\end{align}
which means
    $v_g < v_o$,
since $v_g = v_c + diff_{c, g}$ and $v_o = v_c + diff_{c,o}$.
This contradicts with $v_g > v_o$.


\section*{Appendix 3 Proof of Lemma~\ref{lem:singleClient-3}}

By applying Phase III of Algorithm~\ref{alg:SingleClient}, it is obvious that,
any allocation strategy $\langle s_{i,1}, \cdots, s_{i,n_i}\rangle$ produced by Phase I and II,
which is a solution to the optimization problem defined in \eqref{eq:min_delay-revised2},
can always be converted into a {\em detailed allocation strategy} 
$\langle (b_{i,1},d_{i,1}), \cdots, (b_{i,n_i},d_{i,n_i}) \rangle$,
as the execution of Phase III will not abort before it is completed. 

Because the above conversion assures  
$b_{i,j}+d_{i,j}=s_{i,j}$ for each $j\in\{1,\cdots,n_i\}$ and
the objective functions in both \eqref{eq:min_delay-revised} and \eqref{eq:min_delay-revised2}
are determined only by the sums of every task's budget and deposit,
the value of the objective function in \eqref{eq:min_delay-revised} 
by the detailed allocation strategy should be the same as
the minimal value of the objective function in \eqref{eq:min_delay-revised2}.

Moreover, as the optimization problems defined in \eqref{eq:min_delay-revised}
and \eqref{eq:min_delay-revised2} are the same except for that
the former has more constraints than the latter,
the solution to the former cannot be better than that for the latter;
i.e., the minimal value of objective function in \eqref{eq:min_delay-revised}
must be no less than the minimal value of objective function in \eqref{eq:min_delay-revised2}.

Based on the above reasoning, 
the detailed allocation strategy obtained from 
applying Phase III of Algorithm~\ref{alg:SingleClient}
to a solution to \eqref{eq:min_delay-revised2}
must also be a solution to \eqref{eq:min_delay-revised}.

\section*{Appendix 4 Proof of Theorem~\ref{theo:singleClientNE}}

Based on the definition of Nash equilibrium, 
we need to prove that neither the client nor the server has incentive to change strategy $A_{c,i}=(b_{i,1},\cdots,b_{i,n_i})$ or $A_{s,i}=(d_{i,1},\cdots,d_{i,n_i})$, 
which are produced by Algorithm~\ref{alg:SingleClient}, 
if the other player does not change its strategy. 

Let $\vec{s} = \langle s_{i,1},\cdots,s_{i,n_i}\rangle$ denote the unique solution to
the optimization problem defined by \eqref{eq:min_delay-revised2};
that is $s_{i,j}=b_{i,j}+d_{i,j}$ for each $j\in\{1,\cdots,n_i\}$.


First, let us consider the scenario that 
the server keeps strategy $A_{s,i}=(d_{i,1},\cdots,d_{i,n_i})$, 
while the client changes its strategy from $A_{c,i}=(b_{i,1},\cdots,b_{i,n_i})$ 
to a different strategy $A'_{c,i}=\{b'_{i,1},\cdots,b'_{i,n_i}\}$. 
Let $\vec{s}'=\langle s'_{i,1},\cdots,s'_{i,n_i}\rangle$ denote every task's sum of budget and deposit
according to strategy set $(A_{s,i},A'_{c,i})$; i.e.,  
$s'_{i,j}=b'_{i,j}+d_{i,j}$ for $j\in\{1,\cdots,n_i\}$.
Because $\vec{s}\neq\vec{s}'$ and 
$\vec{s}$ is the unique solution to \eqref{eq:min_delay-revised2},
it holds that the client's utility under strategy set $(A_{s,i},A'_{c,i})$,
i.e., $U'_{c,i}=\sum_{j=1}^{n_i}\frac{s'_{i,j}-\sqrt{{s'_{i,j}}^2-4c_{i,j}\hat{c}_{i,j}}}{2}$,
is greater than its utility under strategy set $(A_{s,i},A_{c,i})$,
i.e., $U_{c,i}=\sum_{j=1}^{n_i}\frac{s_{i,j}-\sqrt{s_{i,j}^2-4c_{i,j}\hat{c}_{i,j}}}{2}$.
So, the client should not have incentive to deviate from $A_{c,i}$.

Similarly, let us consider the scenario that
the client keeps strategy $A_{c,i}$ while the server changes from $A_{s,i}$ to 
a different strategy $A'_{s,i}=(d'_{i,1},\cdots,d'_{i,n_i})$.
Let $\vec{s}''=\langle s''_{i,1},\cdots,s''_{i,n_i}\rangle$, where  
$s''_{i,j}=b_{i,j}+d'_{i,j}$ for $j\in\{1,\cdots,n_i\}$.
The server's total wage on strategy set $(A'_{s,i}, A_{c,i})$ is $U'_s=\sum_{j=1}^{n_i}b_{i,j}-\sum_{j=1}^{n_i}\frac{s''_{i,j}-\sqrt{{s''_{i,j}}^2-4c_{i,j}\hat{c}_{i,j}}}{2}$, where $s''_{i,j}=d'_{i,j}+b_{i,j}$ for $j\in\{1,\cdots,n_i\}$. 
The server's total wage on strategy set $(A_{s,i}, A_{c,i})$ is $U_{s} = \sum_{j=1}^{n_i}b_{i,j}-\sum_{j=1}^{n_i}\frac{s_{i,j}-\sqrt{{s_{i,j}}^2-4c_{i,j}\hat{c}_{i,j}}}{2}$.
Since $\hat{s}=\langle s_{i,1},\cdots,s_{i,n_1}\rangle$ is the unique solution for the optimization problem \eqref{eq:min_delay-revised2}, $U'_s<U_s$.
Therefore, the server has no incentive to deviate from $A_{s,i}$ either.

\section*{Appendix 5 Proof of Theorem~\ref{theo:multi_clients-1}}
Let $Opt(t,b_1,\cdots,b_m)$ denote the optimization problem
defined in (\ref{eq:max_wage_multiple-revised}) where
$t$ is an integer, 
$d=\sum_{i=1}^{m}\sum_{j=1}^{n_i}\hat{c}_{i,j} + t\cdot\delta$ and 
$b_i\geq\sum_{j=1}^{n_i}(c_{i,j}+\frac{c_{i,j}\hat{c}_{i,j}}{c_{i,j}+\hat{c}_{i,j}})$
for each $i\in\{1,\cdots,m\}$.
Note that, $c_{i,j}$ and $\hat{c}_{i,j}$ are constants.
Let ${\cal P}(t,b_1,\cdots,b_m)$ denote the following predicate: 
Phases I, II and III of Algorithm~\ref{alg:SplitDeposit} 
solves $Opt(t,b_1,\cdots,b_m)$. 
If ${\cal P}(t,b_1,\cdots,b_m)\equiv TRUE$, 
let $f^{*}(t,b_1,\cdots,b_m)$ denote 
the found minimal value of the objective function for $Opt(t,b_1,\cdots,b_m)$,
and each $s_{i,j}(t,b_1,\cdots,b_m)$ denote the value assigned to $s_{i,j}$ in the solution.  

The theorem claims ${\cal P}(t,b_1,\cdots,b_m)\equiv TRUE$ 
for every $t\geq 0$ and every $\{b_i|i=1,\cdots,m\}$ satisfying the above relevant constraints.
We next prove it by induction on $t$.

Base case: 
When $t=0$ (i.e., $d=\sum_{i=1}^{m}\sum_{j=1}^{n_i}\hat{c}_{i,j}$), 
${\cal P}(t,b_1,\cdots,b_m)\equiv TRUE$, i.e., 
Phases I-III of Algorithm~\ref{alg:SplitDeposit} 
solves $Opt(t,b_1,\cdots,b_m)$, because:
Phase I simply initializes every $s_{i,j}$ 
to satisfy constraint \eqref{eq:max_wage_multiple-revised-c1};
Phase II minimizes $\sum_{j=1}^{n_i}f(s_{i,j},i,j)$ 
with the remaining budget of each client $C_i$, 
independently, based on the arguments similar to 
the proof of Lemma~\ref{lem:singleClient-2};
Phase III does nothing as $d'=0$.

Induction step: 
Assuming ${\cal P}(t,b_1,\cdots,b_m)\equiv TRUE$ 
for every integer $0\leq t\leq t_0$ and 
every $\{b_i|i=1,\cdots,m\}$ satisfying the above relevant constraints,
next we prove 
${\cal P}(t_0+1,\hat{b}_1,\cdots,\hat{b}_m)\equiv TRUE$
for every $\{\hat{b}_i|i=1,\cdots,m\}$ satisfying the relevant constraints.

First, let us consider optimization problem 
    $OPT(0,\hat{b}_1,\cdots,\hat{b}_m)$.
According to the base case, 
${\cal P}(0, \hat{b}_1,\cdots,\hat{b}_m)\equiv TRUE$, and
each $s_{i,j}(0,\hat{b}_1,\cdots,\hat{b}_m)$ denotes 
the assignment of $s_{i,j}$ in the optimal solution.

Second, let 
$I=\{(i*,j*)\} = 
\operatorname*{arg\,min}_{\forall i\in\{1,\cdots,m\}\forall j\in\{1,\cdots,n_i\}}f'(s_{i,j},i,j))$.
Then, there must exist at least one $(i*,j*)\in I$ such that,
in the optimal solution to $OPT(t_0+1,\hat{b}_1,\cdots,\hat{b}_m)$, 
the assignment of $s_{i*,j*}$ is greater than 
the assignment of $s_{i*,j*}$ in the optimal solution to $OPT(0,\hat{b}_1,\cdots,\hat{b}_m)$; i.e.,
$s_{i*,j*}(t_0+1,\hat{b}_1,\cdots,\hat{b}_m)>s_{i*,j*}(0,\hat{b}_1,\cdots,\hat{b}_m)$. 
This can be proved by contradiction. 
If this is not the case, as $t_0+1\geq 1$, there should be at least $(i',j')\not\in I$ such that
$s_{i',j'}(t_0+1,\hat{b}_1,\cdots,\hat{b}_m)>s_{i',j'}(0,\hat{b}_1,\cdots,\hat{b}_m)$;
then, if one unit assigned to $s_{i',j'}$ instead is assigned to $s_{i*,j*}$, 
the server can earn higher wage. 

Third, let us consider optimization problem 
    $OPT(t_0,\hat{b}'_1,\cdots,\hat{b}'_m)$,
where $\hat{b}'_{i*} = \hat{b}_{i*}+\delta$ while $\hat{b}'_i=\hat{b}_i$ for every $i\not= i*$.
We can prove the following:
    First, the optimal solution to $OPT(t_0,\hat{b}'_1,\cdots,\hat{b}'_m)$ is a feasible solution 
    to $OPT(t_0+1,\hat{b}_1,\cdots,\hat{b}_m)$.
    This is because: 
    $\hat{b}'_i\geq\hat{b}_i$ for every $i$, thus 
    satisfying constraint \eqref{eq:max_wage_multiple-revised-c2} in $OPT(t_0,\hat{b}'_1,\cdots,\hat{b}'_m)$
    implies satisfying constraint \eqref{eq:max_wage_multiple-revised-c2} in $OPT(t_0+1,\hat{b}_1,\cdots,\hat{b}_m)$; 
    $\sum_{i=1}^{m}\hat{b}_i+\delta = \sum_{i=1}^{m}\hat{b}'_i$ and 
    the value of $d$ in $OPT(t_0+1,\hat{b}_1,\cdots,\hat{b}_m)$ is greater than the $d$
    in $OPT(t_0,\hat{b}'_1,\cdots,\hat{b}'_m)$ by $\delta$, thus
    satisfying \eqref{eq:max_wage_multiple-revised-c3} in $OPT(t_0,\hat{b}'_1,\cdots,\hat{b}'_m)$
    implies satisfying constraint \eqref{eq:max_wage_multiple-revised-c3} in $OPT(t_0+1,\hat{b}_1,\cdots,\hat{b}_m)$.
%
    Similarly, the optimal solution to $OPT(t_0+1,\hat{b}_1,\cdots,\hat{b}_m)$ is a feasible solution
    to $OPT(t_0,\hat{b}'_1,\cdots,\hat{b}'_m)$. 

Further due to the uniqueness of optimal solution to these optimization problems 
(based on Lemma~\ref{lem:multi_clients-1}),
$OPT(t_0+1,\hat{b}_1,\cdots,\hat{b}_m)$ and $OPT(t_0,\hat{b}'_1,\cdots,\hat{b}'_m)$ 
share the same optimal solution. 

Finally, based on the induction assumption, 
${\cal P}(t_0,\hat{b}'_1,\cdots,\hat{b}'_m)\equiv TRUE$.
That is, Algorithm~\ref{alg:SplitDeposit} solves $OPT(t_0,\hat{b}'_1,\cdots,\hat{b}'_m)$;
thus, it also solves $OPT(t_0+1,\hat{b}_1,\cdots,\hat{b}_m)$,
except that $\{\hat{b}'_i\}$ and $d=\sum_{i=1}^{m}\sum_{j=1}^{n_i}\hat{c}_{i,j}+t_0\cdot\delta$ 
(not $\{\hat{b}_i\}$ and $d=\sum_{i=1}^{m}\sum_{j=1}^{n_i}\hat{c}_{i,j}+(t_0+1)\cdot\delta$) 
are the inputs to the algorithm.
Hence, in the execution of the algorithm,
let us move the last assignment to $s_{i*,j*}$ in Phase II 
to be the first step in Phase III; this way,  
the algorithm works exactly as it takes $\{\hat{b}_i\}$ and $d=\sum_{i=1}^{m}\sum_{j=1}^{n_i}\hat{c}_{i,j}+(t_0+1)\cdot\delta$ 
as inputs to solve $OPT(t_0+1,\hat{b}_1,\cdots,\hat{b}_m)$.
That is, ${\cal P}(t_0+1,\hat{b}_1,\cdots,\hat{b}_m)\equiv TRUE$.

\comment{
\appendices
\section{}
\label{appendix:single_client_unique_point}
Based on \eqref{eq:min_delay-revised}, for the single client problem (client $i$), we define an optimization problem in the following form:
\begin{align}
        & \min f_i(x_i) \label{eq:f(x)formal}\\
        s.t. \ & g_{i,j}(x_i) \equiv -x_{i,j} + c_{i,j}+\frac{c_{i,j}\hat{c}_{i,j}}{c_{i,j}+\hat{c}_{i,j}} + \hat{c}_{i,j} \leq 0, \label{eq:f-c1}\\
        & h_{i}(x_i)\equiv\sum_{j=1}^{n_i} x_{i,j} - b_i - d_i = 0, ~~j \in \{1,\ldots,n_i\}, \label{eq:f-c2}
\end{align}
where $x_i=\{x_i,j\}, x_{i,j} = b_{i,j} + d_{i,j}$, $f_i(x_i) = \sum_{j=1}^{n_i}f(x_i, i,j)$. $f(x_i, i,j)$ is defined in \eqref{eq:fx}. Constraint \eqref{eq:f-c1} is based on \eqref{eq:min_delay-c3} and \eqref{eq:min_delay-c4}; constraint \eqref{eq:f-c2} is based on \eqref{eq:min_delay-c1} and \eqref{eq:min_delay-c2}. We will show that the following lemma holds. 
\begin{lemma}
The optimization problem defined by \eqref{eq:f(x)formal} - \eqref{eq:f-c2} has a unique optimal point.
\end{lemma}
\begin{proof}
We prove this by showing that our problem is actually a convex optimization problem, which the standard form reads:
\begin{align} 
    &\min f(x)\nonumber \\
    s.t. \ & g_u(x)\leq 0, \ u = 1,2,...,m, \nonumber\\
    &h_v(x)=0, \ v = 1,2,...,p,
\end{align}
where $f,g_1,...,g_m$ are convex functions and $h_1,h_2,...,h_p$ are affine functions. It has been proven in \cite{beck2014} that if $f(x)$ is a strictly convex function, then there exists a unique global optimal point. We will prove that in our problem, $f_i$ is strictly convex; $g_{i,j}$'s are convex and $h_i$ is affine.

First, since $g_{i,j}$ is linear with $x_{i,j}$, the second derivative $\partial^2 g_{i,j}/\partial x_{i,j}\partial x_{i,k} = 0$ for any $j,k\in\{1,\ldots,n_i\}$. It is obvious that the Hessian matrix of any $g_{i,j}$ is positive semidefinite, which implies $g_{i,j}$ is convex \cite{boyd2004convex}. 

Second, it is obvious that $h_{i}$ is affine. 

Third we will show that $f_i$ is strictly convex. Let $K_{i,j} = c_{i,j}\hat{c}_{i,j}$. The Hessian matrix of $f_i(x_i)$ has only $n_i$ non-zero diagonal elements which are given by
\begin{equation}
    \frac{\partial^2 f_i(x_i)}{\partial x^2_{i,j}} = \frac{2K_{i,j}}{(x_{i,j}^2-4K_{i,j})^{\frac{3}{2}}} > 0, \nonumber
\end{equation}
where $j \in \{1,2,\dots,n_i\}$. That is, the Hessian matrix of $f(x_i)$ has the form of
\begin{equation}
H_i =
     \begin{pmatrix}
\frac{2K_{i,1}}{(x_{i,1}^2-4K_{i,1})^{\frac{3}{2}}} & ... & 0 & ... & 0 \\
0 & \frac{2K_{i,2}}{(x_{i,2}^2-4K_{i,2})^{\frac{3}{2}}} &  0 &  ... & 0  \\
\vdots & \vdots & \vdots & \vdots & \vdots\\
0 & ... & ... & ... & 0 \\
0 & ... & ... & ... & \frac{2K_{i,n_i}}{(x_{i,n_i}^2-4K_{i,n_i})^{\frac{3}{2}}} \\
\end{pmatrix} \\
\end{equation} 

Let $z=(z_1, ..., z_{n_i})^T$ be any non-zero column vector, where $T$ denotes transpose, then we have
\begin{align}
& \begin{pmatrix}
    z_1, \ldots, z_{n}
  \end{pmatrix}
  H_i
  \begin{pmatrix}
    z_1 \\
    \vdots\\
    z_{n}
  \end{pmatrix}
  \notag\\
= & \sum_{j = 1}^{n_i} z_j^2\frac{2K_{i,j}}{(x_{i,j}^2-4K_{i,j})^{\frac{3}{2}}} \nonumber\\
> & 0.\notag
\end{align}
That is, $z^T H_i z > 0$, which means that $H_i$ is positive definite, hence $f(x_i)$ is strictly convex \cite{boyd2004convex}.
\end{proof}
Let $s_i^*=\{s_{i,1}, s_{i,2},\dots,s_{i,n_i}\}$ be the unique optimal solution of \eqref{eq:f(x)formal}. Since $s_{i,j} = b_{i,j} + d_{i,j}$ for $j = \{1,2,\dots,n_i\}$, given $s_{i,j}$, there could be multiple ways to divide it into $b_{i,j}$ and $d_{i,j}$. Notice that problem \eqref{eq:min_delay-revised} and \eqref{eq:f(x)formal} are actually not exactly the same; the constraints of problem \eqref{eq:min_delay-revised} are stricter than the constraints of problem \eqref{eq:f(x)formal}. Besides the requirement that 
\begin{equation}
\label{eq:s=b+d}
    b_{i,j} + d_{i,j} = s_{i,j},
\end{equation}
where $j = \{1,2,\dots,n_i\}$, a valid allocation of the budget and deposit of \eqref{eq:min_delay-revised} must also satisfies constraints \eqref{eq:min_delay-c1} - \eqref{eq:min_delay-c4}. Hence, optimal value of \eqref{eq:f(x)formal} is no greater than that of \eqref{eq:min_delay-revised}. We will show that such a feasible allocation always exists. That is, \eqref{eq:f(x)formal} and \eqref{eq:min_delay-revised} have the same optimal value. Algorithm \ref{alg:s_to_b_and_d} provides a way to find an optimal valid allocation.
\begin{algorithm}[htb]
\caption{Find $\{b_{i,j}\}$ and $\{d_{i,j}\}$ given $\{s_{i,j}\}$.}\label{alg:s_to_b_and_d}
{\bf Input:}
\begin{itemize}
\item
$b_i$: total budget of client $C_i$;
\item
$d_i = d$: total deposit of server $S$;
\item
$n_i$: total number of tasks;
\item
task set $\{t_{i,1},\cdots,t_{i,n_i}\}$ and
associated costs $\{c_{i,1},\cdots,c_{i,n_i}\}$ and $\{\hat{c}_{i,1},\cdots,\hat{c}_{i,n_i}\}$;
\item
$s_i^* = \{s_{i,1}^*, s_{i,2}^*,\dots,s_{i,n_i}^*\}$: the optimal solution of \eqref{eq:f(x)fomal}. 
\end{itemize}

{\bf Output:} $\{b_{i,j}\}$ and $\{d_{i,j}\}$ that satisfy \eqref{eq:min_delay-c1} - \eqref{eq:min_delay-c4} and \eqref{eq:s=b+d}. \\

{\bf Phase I:} Initialization.
\begin{algorithmic}[1]
\For{$j\in\{1,\cdots,n_i\}$}
    \State
    $d_{i,j}\leftarrow c_{i,j}$;
    ~$b_{i,j}\leftarrow c_{i,j}+\frac{c_{i,j}\hat{c}_{i,j}}{c_{i,j}+\hat{c}_{i,j}}$;
    ~$s_j\leftarrow 0$
    \Comment{meet constraints~(\ref{eq:min_delay-c3}) and (\ref{eq:min_delay-c4})}
\EndFor
\end{algorithmic}
~~\\
{\bf Phase II:} Sequentially Allocation of the Remaining Budget/Deposit.
\begin{algorithmic}[1]
\State $id = 1$
\State $b'_i\leftarrow b_i-\sum_{j=1}^{n_i}(c_{i,j}+\frac{c_{i,j}\hat{c}_{i,j}}{c_{i,j}+\hat{c}_{i,j}})$ \Comment{allocating the remaining budget to satisfy \eqref{eq:min_delay-c1} and \eqref{eq:s=b+d}}
\For{$j\in\{1,\cdots,n_i\}$} 
    \State $id = i$;
    \While{$b'_i\geq\delta$ and $(b_{i,j} + d_{i,j}) \leq s_{i,j}^* - \delta$} 
    \State $b_{i,j} \leftarrow b_{i,j} +\delta$;~$b'_i\leftarrow (b'_i-\delta)$
    \EndWhile
\EndFor
\State $d'_i\leftarrow d_i-\sum_{j=1}^{n_i}\hat{c}_{i,j}$ \Comment{allocating the remaining deposit to satisfy \eqref{eq:min_delay-c2} and \eqref{eq:s=b+d}}
\For{$j\in\{id, id+1,\cdots,n_i\}$} 
    \While{$b'_i\geq\delta$ and $(b_{i,j} + d_{i,j}) \leq s_{i,j}^* - \delta$} 
    \State $d_{i,j} \leftarrow d_{i,j} +\delta$;~$d'_i\leftarrow (d'_i-\delta)$
    \EndWhile
\EndFor
\end{algorithmic}
\end{algorithm}

It is clear that the allocation given by Algorithm \ref{alg:s_to_b_and_d} satisfies \eqref{eq:min_delay-c1} - \eqref{eq:min_delay-c4} as well as \eqref{eq:s=b+d}. Actually, there could be many different allocations of $\{b_{i,j}\}$ and $\{d_{i,j}\}$. As long as those allocations satisfy \eqref{eq:min_delay-c1} - \eqref{eq:min_delay-c4} and \eqref{eq:s=b+d}, they give us the same optimal value. Therefore, to find an optimal allocation of the budget and deposit, we could
\begin{enumerate}
    \item Find the optimal solution $s_i^*$ of \eqref{eq:f(x)formal}.
    \item Find a decomposition of $s_i^*$ into $\{b_{i,j}\}$ and $\{d_{i,j}\}$ which satisfies \eqref{eq:min_delay-c1} - \eqref{eq:min_delay-c4} and \eqref{eq:s=b+d}.
\end{enumerate}
}

\comment{
\section{10 tasks with different $c$ and $\frac{\hat{c}}{c}$}
In section~\ref{sec:single_clt}, we compare different allocation methods with tasks having same cost $c$, same $\hat{c}$ or same $\frac{\hat{c}}{c}$. We further analyze 10 tasks with different cost $c$ and different $\frac{\hat{c}}{c}$. The 10 tasks have cost $c\in\{10,20,\cdots,100\}$ and the corresponding $\hat{c}_i=c_i*i$ for $i\in\{1,2,\cdots10\}$.
\begin{figure}[htb!]
  \begin{subfigure}{0.48\columnwidth}
    \includegraphics[width=\columnwidth]{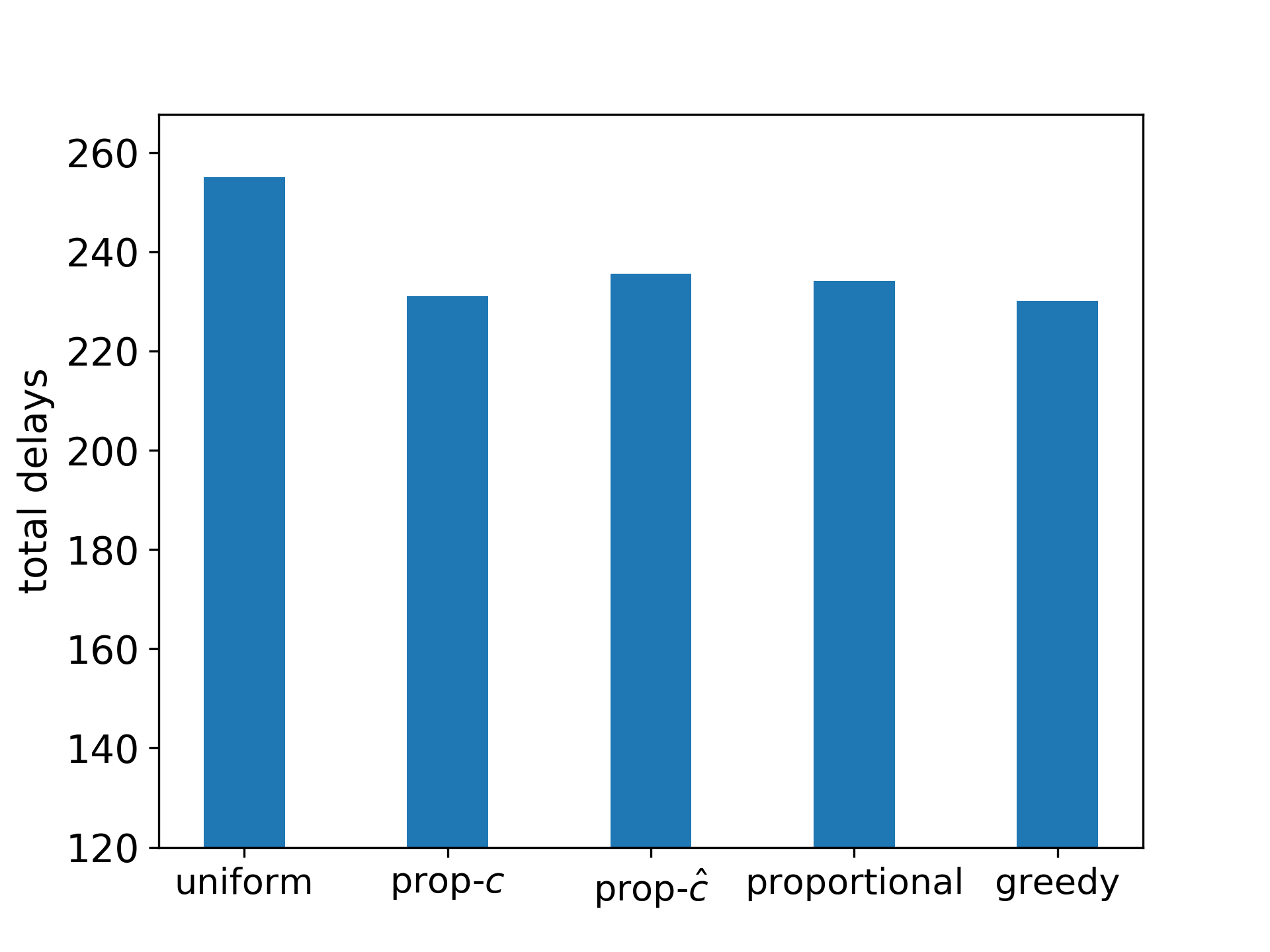}
    \caption{total delay for 10 tasks} \label{static:diffratio-a}
  \end{subfigure}%
  \hspace*{\fill}   
  \begin{subfigure}{0.48\columnwidth}
    \includegraphics[width=\columnwidth]{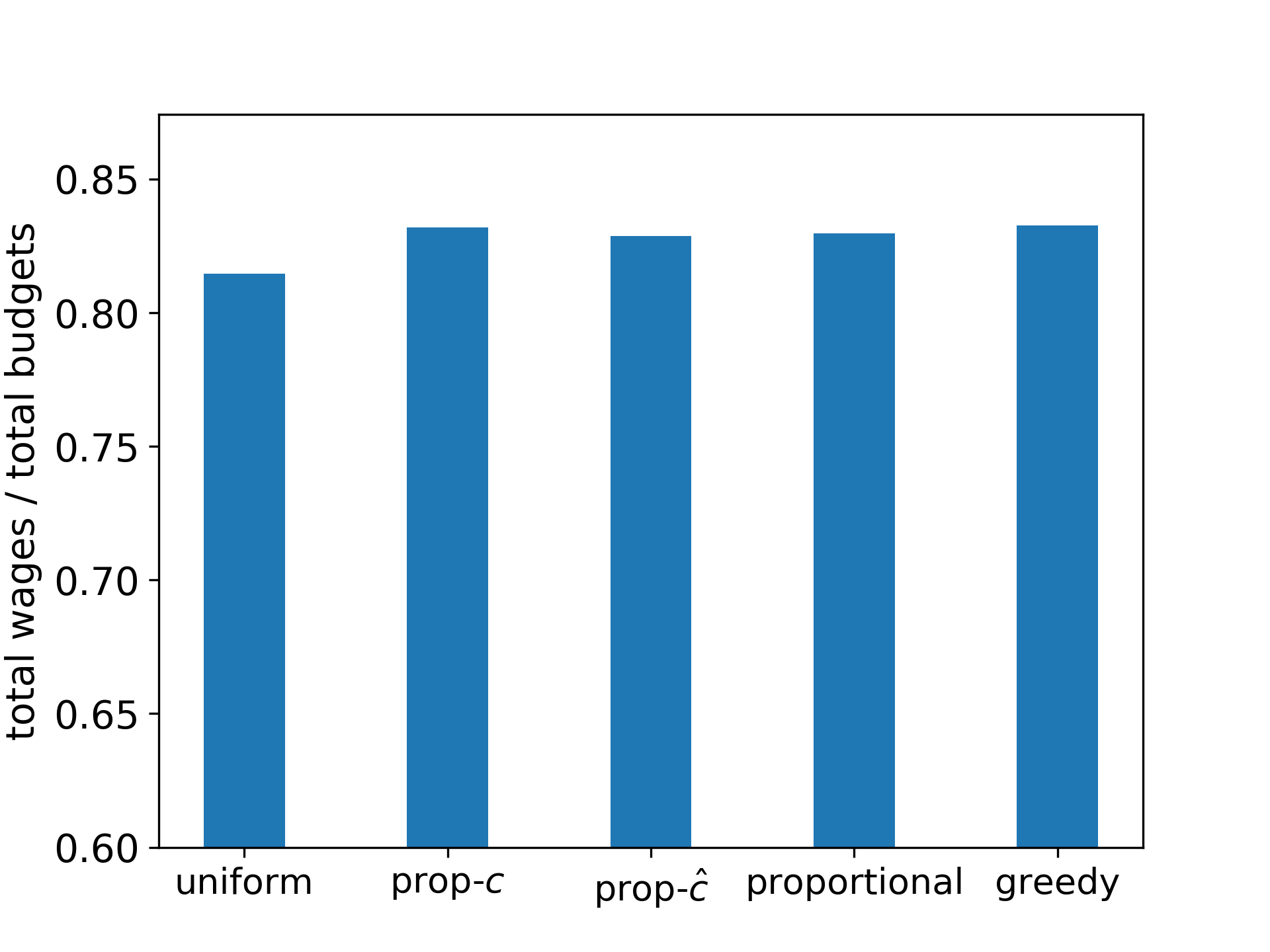}
    \caption{total wage / total budget} \label{static:diffratio-b}
  \end{subfigure}%
  \hfill   
  \begin{subfigure}{0.48\columnwidth}
    \includegraphics[width=\columnwidth]{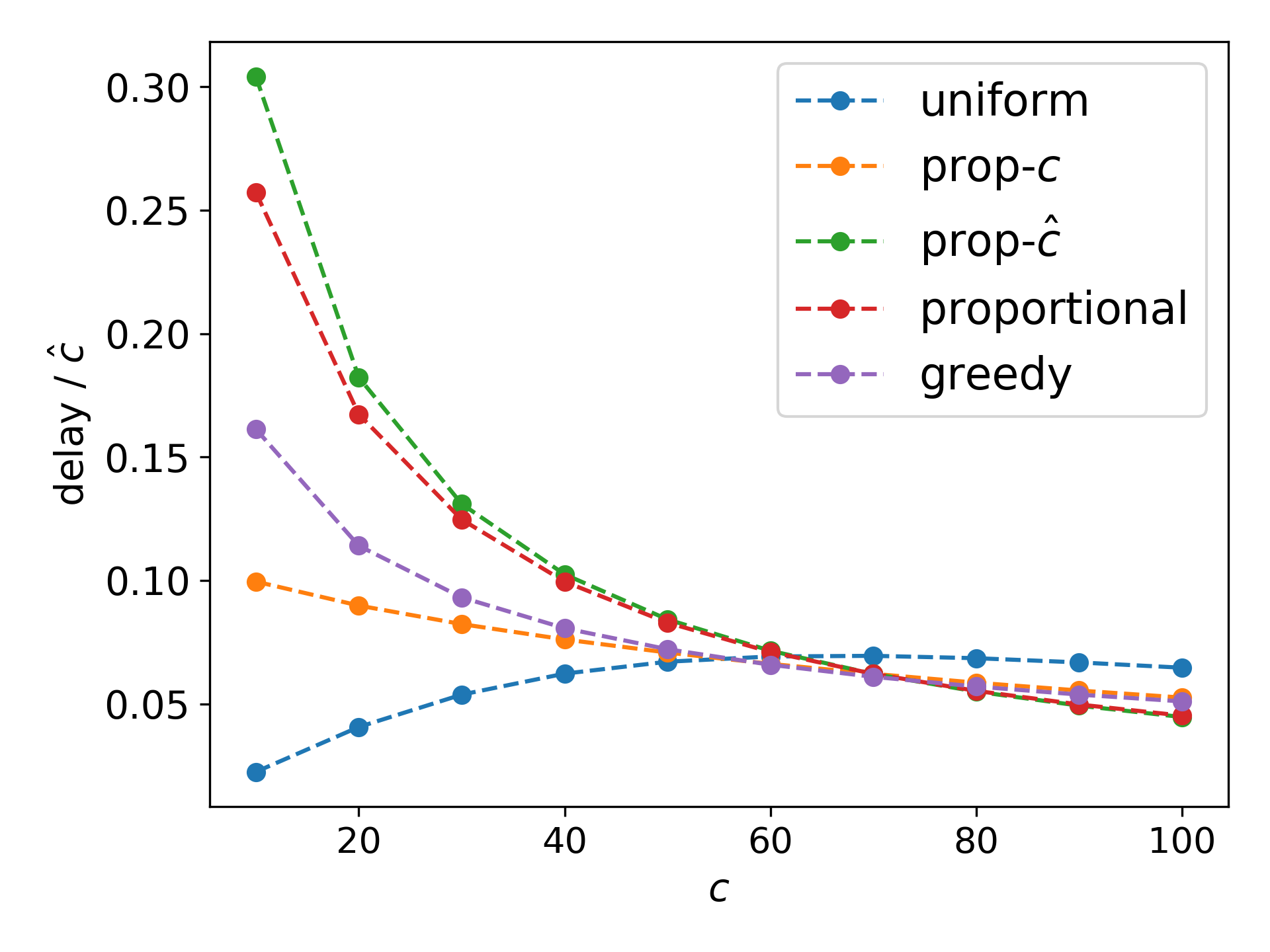}
    \caption{delay/$\hat{c}$ v.s. $c$} \label{static:diffratio-c}
  \end{subfigure}%
  \hspace*{\fill}   
  \begin{subfigure}{0.48\columnwidth}
    \includegraphics[width=\columnwidth]{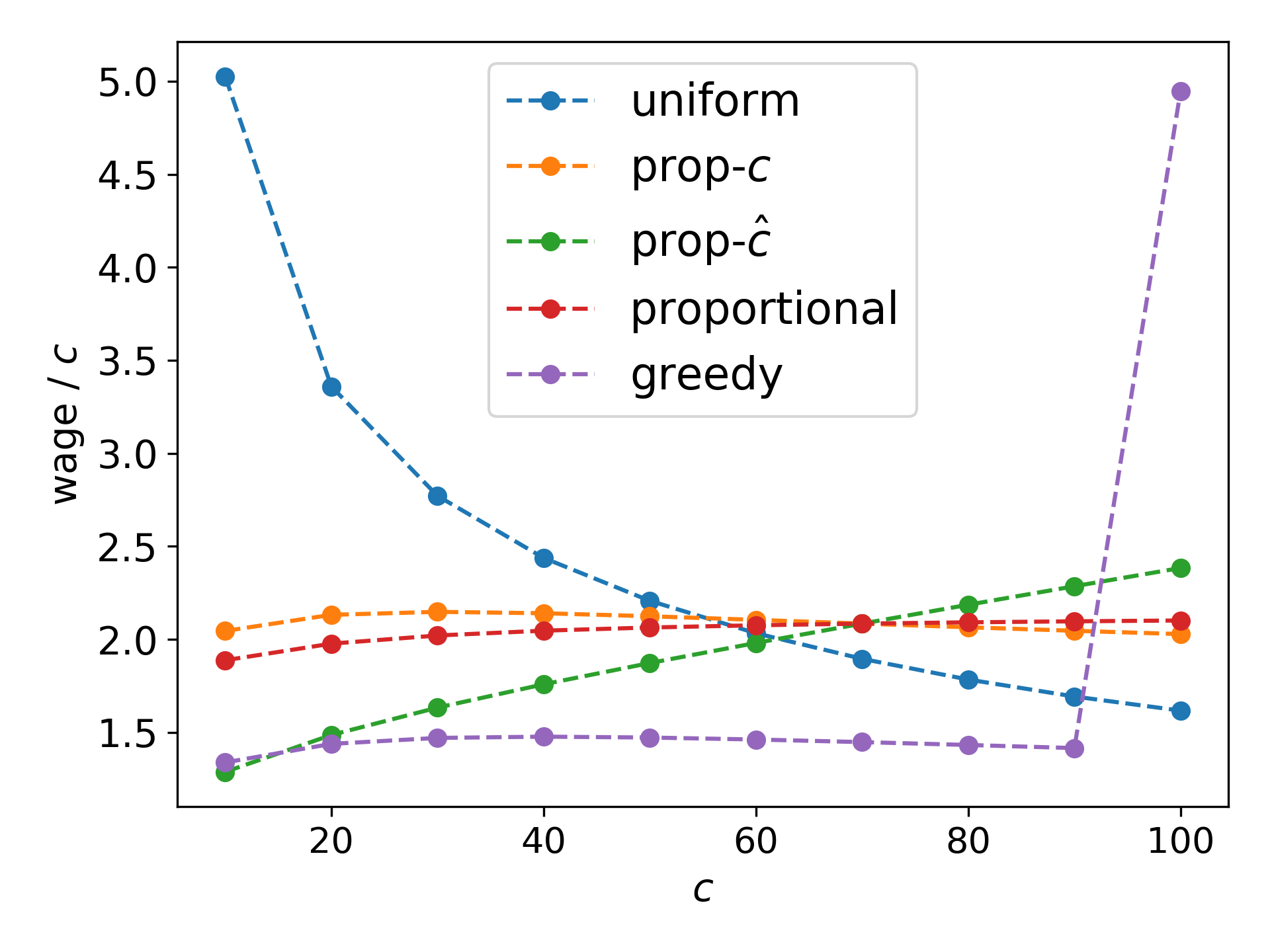}
    \caption{wage/$c$ v.s. $c$} \label{static:diffratio-d}
  \end{subfigure}%
\caption{10 tasks with $c=[10,\cdots,100]$ and $\frac{\hat{c}}{c}=[1,\cdots,10]$} 
\label{static:diff-ratio}
\end{figure}
}






\end{document}